\theoremstyle{plain}
\newtheorem{theo}{Theorem}
\newtheorem{prop}{Proposition}[section]
\newtheorem{lem}[prop]{Lemma}
\newtheorem{coro}[prop]{Corollary}
\theoremstyle{definition}
\newtheorem{remark}[prop]{Remark}
\newtheorem{defi}[prop]{Definition}
\newtheorem{example}[prop]{Example}
\numberwithin{equation}{section}
\newcommand{\R}{\mathbb{R}}
\def\t0{\rightarrow 0} 
\def\ti{\rightarrow \infini} 
\newcommand{\f}{\frac}
\newcommand{\infini}{\infty}
\newcommand{\hal}{\frac{1}{2}}
\def\div{\mathrm{div} \, } 
\def\1{\mathbf{1}} 
\def \mc{\mathcal}
\renewcommand{\epsilon}{\varepsilon}
\def\meseq{\mu_{V}} 
\def \ZNbeta{Z_{N,\beta}} 
\def\T{\mathbb{T}}
\def \um{\underline{m}} 
\def\({\left(}
\def\){\right)}
\def \W{\mathbb{W}} 
\def \bttW{\overline{\W}} 
\def\config{\mathrm{Config}} 
\def\Elec{\mathrm{Elec}}
\def \probas{\mathcal{P}}
\def\P{\mathbb{P}} 
\def \bPst{\bar{P}} 
\def \Pgot{\mathfrak{P}}
\def \bPgot{\overline{\Pgot}}
\def \PNbeta{\P_{N, \beta}} 
\def \PgN2{\mathbf{P}_{N,2}} 
\def \QNbeta{\mathbb{Q}_{N,\beta}}
\def \bQpN{\bar{\mathfrak{Q}}_{N,\beta}} 
\def \sineb{\mathrm{Sine}_{\beta}} 
\def \HN{\mathcal{H}_N}
\def\Esp{\mathbf{E}} 
\def \E{\Esp}
\def \ERS{\mathsf{ent}} 
\def \bERS{\overline{\mathsf{ent}}} 
\def \Poisson{{\Pi}}
\def \fbarbeta{\overline{\mathcal{F}}_{\beta}} 
\def\um{\underline{m}}
\def\Ani{A}
\def\Vt{V_t}
\def\zetat{\zeta_t}
\def\muv{\meseq}
\def\mueq{\meseq}
\def\I{\mathcal I}
\def\nab{\nabla}
\def\indic{\mathbf{1}}
\def\id{\mathrm{Id}}
\def\XXint#1#2#3{{\setbox0=\hbox{$#1{#2#3}{\int}$}
     \vcenter{\hbox{$#2#3$}}\kern-.5\wd0}}
\def \carr{\square} 
\def  \comeg{c_{\omega,\Sigma}}
\def \Ctot{\mathcal{C}^{\rm{tot}}}
\def \Etot{E^{\rm{tot}}}
\def\P{\mathbb{P}} 
\def \bPst{\bar{P}} 
\def \Pgot{\mathfrak{P}}
\def \bPgot{\overline{\Pgot}}
\def \PNbeta{\P_{N, \beta}} 
\def \PgN2{\mathbf{P}_{N,2}} 
\def \bQpN{\bar{\mathfrak{Q}}_{N,\beta}} 
\def \sineb{\mathrm{Sine}_{\beta}} 
\def\g{\mathsf{g}}
\def \V{V}
\def \I{\mathcal{I}}
\def \LogU{\textsf{Log1}}
\def \LogD{\textsf{Log2}}
\def\Coul{\textsf{Coul}}
\def \Riesz{\textsf{Riesz}}
\def \emp{\mu^{\mathrm{emp}}_N}
\def \XN{\vec{X}_N}
\def\YN{\vec{Y}_N}
\def \Emp{\mathrm{Emp}}
\def \bEmp{\overline{\Emp}}
\def \C{\mathcal{C}}
\def \bP{\bar{P}}
\def \KNbeta{K_{N, \beta}}
\def\nab{\nabla}
\def\pa{{\partial}}
\def\ro{\rho}
\def \fluct{\mathrm{fluct}}
\def\Fluct{\mathrm{Fluct}}
\def\hal{\frac{1}{2}}
\def\EK{E^{(K)}}
\def\I{{\mathcal{I}}}
\def\namedlabel#1#2{\begingroup
    #2%
    \def\@currentlabel{#2}%
    \phantomsection\label{#1}\endgroup
}
\def \d{\mathsf{d}}
\def \s{\mathsf{s}}
\def \f{\mathsf{f}}
\def \hpN{H'_{N}}
\def \hpNe{H'_{N, \eta}}
\def\Rd{\R^\d} 
\def \drd{\delta_{\Rd}}
\def \ent{\mathrm{ent}}
\def \bPstx{\bar{P}^{x}}
\def\cd{\mathsf{c}_{\d}}
\def\Esp{\mathbf{E}} 
\def \E{\Esp}
\def \bm{\begin{displaymath}}
\def \em{\end{displaymath}}
\def \be{\begin{equation}}
\def \ee{\end{equation}}
\def \beq*{\begin{equation*}}
\def \eeq*{\end{equation*}}
\def \ba{\begin{eqnarray}}
\def \ea{\end{eqnarray}}
\def \ba*{\begin{eqnarray*}}
\def \ea*{\end{eqnarray*}}
\def\mueqt{\mu_{V_t}}
\def\zetat{\zeta_t}
\long\def\replace#1{#1}
\begin{document}

%
%
%
%
%
%

\title{Microscopic description of Log and Coulomb gases}

%
%
\author{Sylvia Serfaty}
\address{Courant Institute of Mathematical Sciences, New York University, 251 Mercer st, New York, NY 10012}
\email{serfaty@cims.nyu.edu}
%
%
\subjclass[2010]{60F05, 60K35, 60B20, 82B05, 60G15, 82B21, 82B26, 15B52.}
\keywords{Coulomb gases, log gases, random matrices, jellium, large deviations, point processes}

\begin{abstract}
These are the lecture notes of a course taught at the Park City Mathematics Institute in June 2017. They are intended to review some recent results, obtained in large part with Thomas Lebl\'e, on the statistical mechanics of systems of points with logarithmic or Coulomb interactions. After listing some  motivations, we describe the ``electric approach"  which allows to get  concentration results, Central Limit Theorems for fluctuations, and a Large Deviations Principle expressed in terms of the microscopic state of the system.
\end{abstract}  

%
%
\maketitle
\thispagestyle{empty}

%
%


\section{Introduction and motivations}\label{intro}
We are interested in the following class of energies
\begin{equation} \label{HN}
\HN(x_1, \dots, x_N) := \sum_{1 \leq i \neq j \leq N} \g(x_i-x_j) +  \sum_{i=1}^N N \V(x_i).
\end{equation}
where $x_1, \dots , x_N$ are  $N$ points (or particles) in the Euclidean space $\R^\d$ ($\d \geq 1$), and $N$ is large.  
The potential $V$ is a confining potential, growing fast enough at infinity, on which we shall make assumptions later.
The pair interaction potential $\g$ is given by either
\begin{align} 
\label{wlog} & (\LogU \ \text{case}) \quad \g(x) = -\log |x| , \quad \text{in dimension } \d=1, \\
\label{wlog2d} & (\LogD \ \text{case}) \quad \g(x) = - \log |x| , \quad  \text{in dimension } \d=2, \\
\label{kernel} & (\Coul \ \text{case}) \quad \g(x) = |x|^{2-\d} \text{ in dimension $\d \geq 3$}.
\end{align}

We will also say a few things about the more general case
\begin{equation}\label{kernel2}  (\Riesz \ \text{case}) \quad \g(x) = |x|^{-\s}, \quad   \text{with }\max(0, \d-2)\leq \s<\d, \text{ in dimension $\d \geq 1$}.\end{equation}
 The interaction as in \eqref{wlog} (resp. \eqref{wlog2d}) corresponds to a one-dimensional (resp. two-dimensional) logarithmic interaction, we will call \LogU, \LogD \ the \textit{logarithmic cases}. The \LogD \ interaction is also the Coulomb interaction in dimension $2$. For $\d \geq 3$, \Coul \  corresponds to the Coulomb interaction in higher dimension. 
 In both instances \LogD \  and \Coul\  we have 
 \be\label{coulombkernel}
 -\Delta \g= \cd \delta_0\ee where 
 \be\label{defcd}
\cd= 2\pi \quad\text{if} \ \d =2 \qquad \cd= (d-2)|\mathbb{S}^{\d-1}|\quad \text{for }\ \d \ge 3.\ee

 Finally, the \Riesz \ cases $\max(\d-2, 0) < \s < \d$ in \eqref{kernel} correspond to more general Riesz interactions, in what is called the {\it potential case}. The situation where $\s>\d$, i.e. that for which $\g$ is not integrable near $0$ is called the {\it hypersingular case}. Some generalizations of what we discuss here to that case are provided in \cite{hlss}, see references therein for further background and results.
 
   Whenever the parameter $\s$ appears, it will be with the convention that $\s = 0$ in the logarithmic \medskip cases. 

We will use the notational shortcut $\XN$ for $(x_1, \dots, x_N)$ and $d\XN $ for $ dx_1 \dots dx_N$.

For any $\beta > 0$, we consider the canonical Gibbs measure at inverse temperature $\beta$, given by the following density
\begin{equation}\label{gibbs}
d\PNbeta(\XN) = \frac{1}{\ZNbeta} \exp \left( -\frac{\beta}{2}  \HN(\XN)  \right) d\XN
\end{equation}
where 
\begin{equation}
\label{defZ}
\ZNbeta= \int  \exp \left( -\frac{\beta}{2}  \HN(\XN)  \right) d\XN.\ee
The term $\ZNbeta$ is a normalizing constant, called the \textit{partition function}, which plays an important role in understanding the physics of the system.

We now review various motivations for studying such systems.
\subsection{Fekete points and approximation theory}
Fekete points arise in interpolation theory as the points minimizing  interpolation errors for numerical integration \cite{safftotik}.   Indeed, if one is looking for  $N$ interpolation points $\{x_1, \dots, x_N\}$ in $K$ such that the  relation
\[
\int_K f(x) dx = \sum_{j=1}^N w_j f(x_j)
\]
is exact for the polynomials of degree $\le N-1$. One sees that one needs to compute the coefficients $w_j$ such that  $\int_{K} x^k = \sum_{j=1}^N w_j x_j^k$ for $0 \leq k \leq N-1$, and this computation is easy if one knows to invert the Vandermonde matrix of the  $\{x_j\}_{j=1 \dots N}$. The numerical stability of this operation is as large as the  \textit{condition number} of the  matrix, i.e. as the Vandermonde determinant of the $(x_1, \dots, x_N)$. The  points that minimize the maximal interpolation error for general functions are easily shown to  be the Fekete points, defined as those that maximize
$$\prod_{i\neq j} |x_i-x_j|$$
 or equivalently minimize
$$-\sum_{i\neq j} \log |x_i-x_j|.$$

They are often studied on manifolds, such as the $\d$-dimensional sphere.
In Euclidean space, one also considers ``weighted Fekete points" which maximize
$$\prod_{i\neq j} |x_i-x_j| e^{-N\sum_i V(x_i)}$$
or equivalently minimize
$$-\sum_{i\neq j} \log |x_i-x_j| + N\sum_{i=1}^N V(x_i)$$
which in dimension $2$ corresponds exactly to the minimization of our Hamiltonian $\HN$ in the particular case \LogD.  They also happen to be zeroes of orthogonal polynomials, see \cite{simon}.

Since $-\log |x|$ can be obtained as $\lim_{s\to 0} \frac{1}{s}(|x|^{-s}-1)$, there is also interest in studying ``Riesz $\s$-energies", i.e. the minimization of 
\begin{equation}\label{rieszs}
\sum_{i\neq j} \frac{1}{|x_i-x_j|^\s}\end{equation}
for all possible $\s$, hence a motivation for \eqref{kernel}. Varying $\s$ from $0$ to $\infty$ connects  \textit{Fekete points} to optimal packing problems. The optimal packing problem is solved in 1, 2 and 3 dimensions. The solution in dimension 2 is the triangular lattice, in dimension 3 it is the FCC (face-centered cubic) lattice. In higher dimension, the solution is in general not known except in dimensions 8 and 24. In high dimension, where the problem is important for error-correcting codes,  it is expected that it is {\it not} a lattice.

Note that the triangular lattice is conjectured to have to universally minimizing property \cite{ck} i.e. to be the minimizer for a broad class of interactions. An analogous role is played in dimensions 8 and 24 by the $E_8$ and Leech lattices, respectively, and remarkably the universally minimizing property of these two lattices has just been recently proven \cite{via,ckrmv}.

For these aspects, we refer to the  the review papers \cite{sk,bhs}  and references therein (in that context such systems are mostly studied on the $\d$-dimensional sphere or torus).

\subsection{Statistical mechanics}

The ensemble given by \eqref{gibbs} in the  \LogD \ case is called in physics a two-dimensional \textit{Coulomb gas} or \textit{one-component plasma} (see e.g. \cite{aj,jlm,janco,sm} for a physical treatment). Two-dimensional Coulomb interactions arise in quantum mechanics: the density of the many-body wave function of  free fermions in a harmonic trap  is the same as the Gibbs measure of the \LogU \ gas with $\beta=2$ \cite{ddms}, and the fractional quantum Hall effect is also described via a two-dimensional Coulomb gas \cite{Gir,stormer}.   Ginzburg-Landau vortices \cite{livre} and vortices in superfluids and Bose-Einstein condensates also interact like two-dimensional Coulomb particles, cf. below.  The \Coul \ case with $d = 3$ can be seen as a toy (classical) model for matter (see e.g. \cite{PenroseSmith,jlm,LiLe1,LN}).

The general \Riesz \ case can be seen as a generalization of the Coulomb case, motivations for studying Riesz gases are numerous in the physics literature (in solid state physics, ferrofluids, elasticity), see for instance \cite{mazars,bbdr,CDR,To}, they can also correspond to systems with Coulomb interaction constrained to a lower-dimensional subspace. 

In all cases of interactions, the systems governed by the Gibbs measure $\PNbeta$ are considered as difficult systems in statistical mechanics because the interactions are truly long-range, singular, and the points are not constrained to live on a lattice.

As always in statistical mechanics \cite{huang}, one would like to understand if there are phase-transitions for particular values of the (inverse) temperature $\beta$. For the systems studied here, one may expect what physicists call a liquid for small $\beta$, and a crystal for large $\beta$. The meaning of crystal in this instance is not to be taken literally as a lattice, but rather as a system of points whose 2-point correlation function
$$\ro_2(x,y)= T_2(x-y)$$ (assuming translation invariance) does not decay too fast as $x-y \to \infty$.

This phase-transition  at finite $\beta$ has been conjectured in the physics literature for the \LogD \ case (see e.g. \cite{bst,caillol1982monte,choquard1983cooperative}) but its precise nature is still unclear (see e.g. \cite{stishov} for a discussion). In view of the recent progress in computational physics concerning such phenomenon in two-dimensional systems (see e.g. \cite{kapfer2015two}), which suggest a possibly very subtle transition between the liquid and solid phase, the question seems yet out of reach for a rigorous treatment.

\subsection{Two component plasma case} 
The two-dimensional ``one component plasma", consisting of positively charged particles, has a ``two-component" counterpart which consists in $N$ particles $x_1, \dots , x_N$ of charge $+1$ and $N$ particles $y_1, \dots , y_N$ of charge $-1$ interacting logarithmically, with energy 
$$\HN(\XN, \YN)= - \sum_{i\neq j} \log |x_i-x_j|- \sum_{i\neq j} \log |y_i-y_j|+  \sum_{i, j} \log |x_i-y_j|$$
and the Gibbs measure $$\frac{1}{\ZNbeta} e^{-\beta \HN(\XN,\YN) } d\XN\, d\YN.$$
Although the energy is unbounded below (positive and negative points attract), the Gibbs measure is well defined for $\beta$ small enough, more precisely the partition function converges for $\beta<2$. The system is then seen to form dipoles which do not collapse, thanks to the thermal agitation. 
The two-component plasma is interesting due to its close relation to two important theoretical physics models: the XY model and the sine-Gordon model (cf. the review \cite{spencer}), which exhibit  a Kosterlitz-Thouless phase transition (made famous by the 2016 physics Nobel prize, cf. \cite{bietenholzgerber}) consisting in the binding of these ``vortex-antivortex" dipoles.

\subsection{Random matrix theory}
The study of \eqref{gibbs} has attracted a lot of attention due to its connection with random matrix theory (we refer to \cite{forrester} for a comprehensive treatment).  Random matrix theory (RMT) is a relatively old theory, pionereed by statisticians and physicists such  as Wishart, Wigner and Dyson, and originally motivated by the understanding of the spectrum of heavy atoms, see \cite{mehta}. For more recent mathematical reference see \cite{agz,deift,forrester}. The main question asked by RMT is~: what is the law of the spectrum of a large random matrix ? 
As first noticed in the foundational papers of  \cite{wigner,dyson}, in the particular cases \eqref{wlog}--\eqref{wlog2d}  the Gibbs measure \eqref{gibbs}  corresponds in some particular instances to the joint law of the eigenvalues (which can be computed algebraically) of some famous random matrix ensembles:
\begin{itemize}

\item for \LogD , $\beta=2$ and $V(x)=|x|^2$, \eqref{gibbs} is the law of the (complex)  eigenvalues of an $N\times N$ matrix where the entries are chosen to be normal Gaussian i.i.d.  This is called the {\it Ginibre ensemble}.

\item for \LogU , $\beta=2$ and $V(x)= x^2/2$, \eqref{gibbs} is the law of the (real) eigenvalues of an $N\times N$ Hermitian matrix with complex normal Gaussian iid entries. This is called the Gaussian Unitary Ensemble.

\item for \LogU  , $\beta=1$ and $V(x)=x^2/2$, \eqref{gibbs} 
 is the law of the (real) eigenvalues of an $N\times N$ real symmetric  matrix with normal Gaussian iid entries. This is called the Gaussian Orthogonal Ensemble.

\item for \LogU ,  $\beta=4$ and $V(x)=x^2/2$, \eqref{gibbs} is the law of the eigenvalues of an $N\times N$  quaternionic symmetric  matrix with normal Gaussian iid entries. This is called the Gaussian Symplectic Ensemble. 

\end{itemize}
 One thus observes in these ensembles the phenomenon of ``repulsion of eigenvalues": they repel each other logarithmically, i.e.  like two-dimensional Coulomb particles.

 For the \LogU \ and \LogD \  cases, at the  specific temperature $\beta=2$, the law \eqref{gibbs} acquires a  special algebraic feature : it becomes a {\it determinantal} process, part of a wider class of processes (see \cite{bkpv,borodin}) for which the correlation functions are explicitly given by certain  determinants. This allows for many explicit  algebraic computations, on which there is a large literature.  In particular, one can compute an expansion of $\log \ZNbeta$ as $N\to \infty$ (see \cite{mehta}) and  the limiting processes at the microscopic scale \cite{bors}.
 However,  many relevant quantities that can be computed explicitly for $\beta = 2$ are not exactly known for the $\beta \neq 2$ case, even in the case of  the potential $V(x) = |x|^2$. 
   The particular case of \eqref{wlog} for all $\beta$ and general $V$, also called $\beta$-ensembles, has however been well-understood. In particular, thanks to the works of
    \cite{joha,shch,bey1,bey2,bfg,bl}, one has expansions of $\log \ZNbeta$, Central Limit Theorems for linear statistics, and   {\it universality} (after suitable rescaling) of the microscopic behavior and local statistics of the points, i.e. the fact that they are essentially independent of $V$.
   
Considering the coincidence between a statistical mechanics model and the law of the spectrum of a random matrix model for several values of the inverse temperature, it is also natural to ask whether such a correspondence exists for any value of $\beta$, i.e. whether \eqref{gibbs} can be seen as a law of eigenvalues for some random matrix model. The answer is positive in dimension 1 for any $\beta$: a somehow complicated model of tridiagonal matrices can be associated to the Gibbs measure of the one-dimensional log gas at inverse temperature $\beta$, see \cite{kn,de}.  This and other methods  allow again  to compute a lot explicitly, and to derive that  the microscopic laws of the eigenvalues are those of a so called {\it sine-$\beta$ process} \cite{vv}.

   \subsection{Complex geometry and theoretical physics}
   Two-dimensional Coulomb systems (in the determinantal case $\beta=2$)  are of interest to geometers   because they serve to construct K\"ahler-Einstein metrics with positive Ricci curvature on complex manifolds, cf. \cite{berman}. 
   Another  important motivation is the construction of Laughlin states for the Fractional Quantum Hall effect, which effectively reduces to the study of a two-dimensional Coulomb gas (cf. \cite{Gir,stormer,rougerieyngvason}). When studying the Fractional Quantum Hall effect on a complex manifold, the coefficients in the expansion of the (logarithm of the) partition function have interpretations  as geometric invariants, and it is thus of interest to be able to compute them, cf \cite{klevtsov}.

\subsection{Vortices in condensed matter physics}
In superconductors with applied magnetic fields, and in rotating  superfluids and Bose-Einstein condensates, one observes  the occurrence of quantized ``vortices" (which are local point defects of superconductivity or superfluidity, surrounded by a current loop). The vortices repel each other, while being confined together by the effect of the magnetic field or rotation,  and the result of the competition between  these two effects is that, as predicted by Abrikosov \cite{a}, they arrange themselves in a particular {\it triangular lattice} pattern, called {\it Abrikosov lattice}, cf. Fig. 
\ref{fig32} (for more pictures, see {\tt www.fys.uio.no/super/vortex/}).
\begin{figure}[ht!]
\begin{center}
\includegraphics[width=5cm]{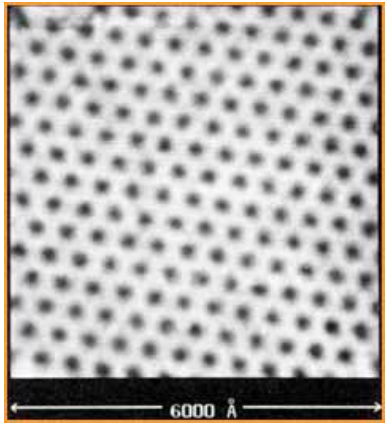}
\caption{Abrikosov lattice, H. F. Hess et al. Bell Labs
{\it Phys. Rev. Lett.} 62, 214 (1989)}
\label{fig32}

\end{center}

\end{figure}
When restricting to a two-dimensional situation, it can be shown formally, cf. \cite[Chap. 1]{ln} for a formal derivation, but also rigorously \cite{livre,ssgl}, that the minimization problem governing the behavior of such systems can be reduced, in terms of the vortices, to the minimization of an energy of the form \eqref{HN} in the case \LogD.
This naturally leads to the question of understanding the connection between minimizers  of \eqref{HN} + \eqref{wlog2d} and the Abrikosov triangular lattice.

  \section{Equilibrium measure and leading order behavior}
 
\subsection{The macroscopic behavior: empirical measure}
It is well-known since \cite{choquet} (see e.g. \cite{safftotik} for the logarithmic cases),  that under suitable assumptions on $V$, the leading order behavior of $\HN$ is governed by the minimization of the functional
\begin{equation} \label{definitionI}
\I_V(\mu) := \iint_{\R^\d \times \R^\d} \g(x-y) d\mu(x)d\mu(y) + \int_{\R^\d} V(x) d\mu(x)
\end{equation}
defined over the space $\mc{P}(\R^\d)$ of probability measures on $\R^\d$, which may also take the value $+ \infty$.

Note that $\I_V(\mu)$ is simply a continuum  version of the discrete Hamiltonian $\HN$. From the point of view of statistical mechanics, $\I_V$ is the ``mean-field" limit energy of  $\HN$, while we will see that from the point of view of probability, $\I_V$ plays the role of a {\it rate function}.

Under suitable assumptions, there is a unique minimizer of $\I_V$ on the space $\probas(\R^\d)$ of probability measures on $\R^\d$, it is  called the \textit{equilibrium measure} and we denote it by $\meseq$.  Its uniqueness follows from the strict convexity of $\I_V$.  For existence, one assumes
\begin{description}
\item[(A1)] $V$ is finite l.s.c. and bounded below
\item[(A2)] (growth assumption)
$$
   \underset{|x|\to + \infty}{\lim}\( \frac{V(x)}{2} + \g(x)\) = + \infty.
$$
\end{description}
One then proceeds in a standard fashion, taking a minimizing sequence for $\I_V$ and using that $\I_V$ is coercive and lower semi-continuous.
Finally, one has 
\begin{theo}[Frostman \cite{frostman}, existence  and characterization of the equilibrium measure] \label{theoFrostman} Under the assumptions \textbf{(A1)}-\textbf{(A2)}, the minimum of $\I_V$ over $\mc{P}(\R^\d)$ exists, is finite and is achieved by a unique $\meseq$, which  has a compact support of  positive capacity. In addition $\meseq$ is uniquely characterized by the fact that 
\begin{equation}
\label{EulerLagrange}
\left\lbrace
\begin{array}{cc} h^{\meseq} +\displaystyle \frac{V}{2} \geq c & \mbox{q.e. in } \R^\d  \vspace{3mm} \\ 
 h^{\meseq} + \displaystyle\frac{V}{2}= c & \mbox{q.e. in the support of }\meseq \end{array} \right.
\end{equation}
where \be \label{defhmu0}
h^{\meseq}(x) := \int_{\R^\d} \g(x - y) d\meseq(y)
\ee is the electrostatic potential generated by $\meseq$;
 and then the constant $c$ must be 
 \begin{equation}
\label{defc1} c = \I_V(\meseq) - \hal \int_{\R^\d} V(x) d\meseq(x).
 \end{equation}
 \end{theo}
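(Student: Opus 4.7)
\bigskip

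\noindent \textbf{Proof proposal.} The plan is to follow the classical direct method in the calculus of variations together with a convexity argument, splitting the proof into (i) existence and compactness, (ii) uniqueness, (iii) Euler--Lagrange characterization, and (iv) identification of the constant $c$.

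For existence, I would take a minimizing sequence $(\mu_n) \subset \mathcal{P}(\R^\d)$ for $\I_V$ and first check that $\I_V$ is finite somewhere (e.g. against a smooth compactly supported bump it is clearly finite by (A1)), so the infimum is not $+\infty$. The growth assumption (A2) gives coercivity: writing the integrand as $\tfrac12(V(x)+V(y)) + \g(x-y)$ (after symmetrization using the diagonal lower bound $\g(x-y) \geq -C + \tfrac12 V(x) + \tfrac12 V(y) - \tfrac12(V(x)+V(y))$, or more simply by using $\tfrac12 V(x) + \tfrac12 V(y) + \g(x-y) \to \infty$ at infinity on the diagonal), one deduces that $\mu_n$ is tight; Prokhorov then yields a subsequence converging narrowly to some $\mu_V \in \mathcal{P}(\R^\d)$. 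Lower semi-continuity of $\I_V$ is the main technical point: I would truncate $\g$ by $\g_M := \min(\g,M)$, which is continuous and bounded below, pass to the limit in $\iint \g_M\, d\mu_n\, d\mu_n + \int V\, d\mu_n$ using the narrow convergence (and a further truncation/Fatou for $V$ since it is merely l.s.c.\ and bounded below), and then send $M \to \infty$ by monotone convergence. This shows $\mu_V$ is a minimizer and $\I_V(\mu_V) < \infty$; in particular $\mu_V$ cannot charge polar sets, so its support has positive capacity.

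For uniqueness, I would invoke strict convexity of the quadratic form $\mu \mapsto \iint \g(x-y)\,d\mu(x)\,d\mu(y)$ on the affine space of probability measures. The key input is that $\g$ is a positive-definite kernel in our regimes (Riesz/logarithmic): via Fourier transform (or, for the logarithmic cases, via the standard Frostman argument using compactly supported test measures), $\iint \g\, d\sigma\, d\sigma \geq 0$ for every signed measure $\sigma$ of total mass zero and finite energy, with equality only when $\sigma = 0$. Then if $\mu_1,\mu_2$ are two minimizers with finite energy, applying this to $\sigma = \mu_1 - \mu_2$ (which has mass $0$ and finite energy by the parallelogram identity $\I_V(\tfrac{\mu_1+\mu_2}{2}) = \tfrac12 \I_V(\mu_1) + \tfrac12 \I_V(\mu_2) - \tfrac14 \iint \g\, d\sigma\, d\sigma$) forces $\sigma=0$.

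For the Euler--Lagrange characterization, I would perturb $\mu_V$ by $\mu_t := (1-t)\mu_V + t\nu$ with $\nu \in \mathcal{P}(\R^\d)$ of finite energy and compute
\[
\frac{d}{dt}\Big|_{t=0^+}\I_V(\mu_t) = 2\int \Bigl(h^{\mu_V} + \tfrac{V}{2}\Bigr) d\nu - 2\int \Bigl(h^{\mu_V} + \tfrac{V}{2}\Bigr) d\mu_V \geq 0.
\]
Denoting $c := \int (h^{\mu_V} + V/2)\, d\mu_V$ (which is finite since $\I_V(\mu_V)<\infty$), this yields $\int(h^{\mu_V}+V/2)\,d\nu \geq c$ for every $\nu$ of finite energy, hence $h^{\mu_V}+V/2 \geq c$ quasi-everywhere (testing against suitable equilibrium measures of small sets, a standard capacity argument). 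Combined with $\int(h^{\mu_V}+V/2)\,d\mu_V = c$ and the pointwise inequality, one obtains $h^{\mu_V}+V/2 = c$ $\mu_V$-a.e., which upgrades to q.e.\ on $\mathrm{supp}\,\mu_V$ by l.s.c.\ of $h^{\mu_V}$. Conversely, any $\mu$ satisfying \eqref{EulerLagrange} minimizes $\I_V$: integrating the two relations against $\mu - \nu$ and using positive-definiteness of $\g$ gives $\I_V(\nu) \geq \I_V(\mu)$, hence the characterization. Finally, compactness of the support follows because (A2) and the asymptotics $h^{\mu_V}(x) \sim \g(x) \cdot \mu_V(\R^\d)$ at infinity (or a harmless lower bound of the same order) imply $h^{\mu_V}+V/2 \to +\infty$ at infinity, so $\{h^{\mu_V}+V/2 \leq c\}$ is bounded. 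The identity for $c$ is immediate from $\I_V(\mu_V) = \int h^{\mu_V}\,d\mu_V + \int V\,d\mu_V = (c - \tfrac12\int V\,d\mu_V) + \int V\,d\mu_V$.

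The main obstacle I expect is not existence (which is routine once truncation is used for the l.s.c.) but the passage from ``$\mu_V$-a.e.'' to ``quasi-everywhere'' in \eqref{EulerLagrange}: this requires genuine potential-theoretic input (capacity, the fact that sets of zero capacity are negligible for measures of finite energy, and that $h^{\mu_V}$ is l.s.c.\ and finite q.e.), rather than just soft functional analysis.
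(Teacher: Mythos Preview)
Your proposal is correct and follows essentially the same approach as the paper, which in fact gives only a brief sketch: existence via a minimizing sequence together with coercivity and lower semi-continuity of $\I_V$, uniqueness via strict convexity, and the Euler--Lagrange conditions via the variations $(1-t)\meseq + t\nu$, referring to \cite[Chap.~1]{safftotik} and \cite[Chap.~2]{ln} for the details. Your write-up is more explicit than what the paper provides (in particular the truncation argument for lower semi-continuity and the capacity argument for upgrading $\mu_V$-a.e.\ to q.e.), but the overall strategy is identical.
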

The proof of this theorem can easily be adapted from \cite[Chap. 1]{safftotik} or \cite[Chap. 2]{ln}. The relations \eqref{EulerLagrange} can be seen as the Euler-Lagrange equations associated to the minimization of $\I_V$, they are obtained by making variations of the form $(1-t)\meseq+ t\nu$ where $\meseq$ is the minimizer, $\nu$ is an arbitrary probability measure of finite energy, and $t\in [0,1]$.

Here, the capacity of a set (see \cite{safftotik,ah} or \cite[Sec. 11.15]{liebloss}) is an appropriate notion of size, suffice it to say that $q.e.$ means except on a set of capacity zero, and that a set of null capacity has zero Lebesgue measure (but the converse is not true).

\begin{remark} \label{rem8} Note that by (\ref{coulombkernel}), in dimension $\d\ge 2$,  the function $h^{\meseq}$ solves 
\begin{displaymath}
- \Delta h^{\meseq} =\cd \meseq
\end{displaymath} where $\cd$ is the constant defined in \eqref{defcd}.
\end{remark}

\begin{example}[$C^{1,1}$ potentials and RMT examples]  In general,  the relations \eqref{EulerLagrange} say that the total potential $h^{\meseq}+ \frac{V}{2}$ is constant on the support of the charges. Moreover, in dimension $\d \ge 2$,  applying the Laplacian on both sides of \eqref{EulerLagrange}  and using  Remark~\ref{rem8} gives that, on the interior of the support of the equilibrium measure, if $V\in C^{1,1}$, \be \label{densmu0}
\cd \meseq = \frac{\Delta V}{2} 
\ee
 i.e. the density of the measure on the interior of its support is given by $\frac{\Delta V}{2\cd}$.  For example if $V$ is quadratic, then the density $ \frac{\Delta V}{2\cd} $ is constant  on the interior of its support.  This corresponds to the important examples of the Hamiltonians which arise in random matrix theory, more precisely~:

\begin{itemize}
\item in dimension $\d=2$, for $V(x) = |x|^2$, one may check that $\meseq = \frac{1}{\pi} \mathbf{1}_{B_1}$  where $\mathbf{1}$ denotes a characteristic function and $B_1$ is the unit ball, i.e.  the equilibrium measure is the normalized Lebesgue measure on the unit disk (by uniqueness, $\meseq$ should be radially symmetric, and the combination of \eqref{densmu0} with the constraint of being a probability measure imposes the support to be $B_1$). This is known as the {\it circle  law} for the Ginibre ensemble in the context of Random Matrix Theory (RMT). Its derivation is attributed to   Ginibre \cite{ginibre}, Mehta \cite{mehta}, an unpublished paper of Silverstein and  Girko \cite{girko1}.
\item in dimension $\d\ge 3$, the same  holds, i.e.  for $V(x)=|x|^2$  we have $\meseq = \frac{\d}{\cd} \indic_{B_{(\d-2)^{1/\d}   }}$ by the same reasoning.
\item in dimension $\d=1$, with $\g =-\log |\cdot |$ and $V(x) = x^2$, the equilibrium measure is $\meseq (x)= \frac{1}{2\pi} \sqrt{4-x^2} \mathbf{1}_{|x|\leq 2}$, which corresponds in the context of RMT (GUE and GOE ensembles) to {\it   Wigner's semi-circle law}, cf. \cite{wigner,mehta}.
\end{itemize}
\end{example}

The equilibrium measure $\meseq$ can also be interpreted in terms of the solution to a classical {\it obstacle problem}, which is essentially dual to the minimization of $\I_V$, and better studied from the PDE point of view (in particular the regularity of $\meseq $ and of the boundary of its support). For this aspect, see \cite[Chap. 2]{ln}.

\begin{defi}\label{def23} From now on, we denote by $\zeta$ the function
\begin{equation} \label{defzeta}
\zeta = h^{\meseq} + \frac{V}{2} - c.
\end{equation}
\end{defi}
\noindent We note that in view of \eqref{EulerLagrange}, $\zeta\ge 0$ a.e. and $\zeta=0$ $\meseq$-a.e.

In the rest of the course, we will always assume that the support of $\meseq$, denoted $\Sigma$, is a set with a nice boundary (say $C^{1, \alpha}$), and that the density $\meseq(x)$ of $\meseq$ is a H\"older continuous function in $\Sigma$.

\subsection{Large Deviations Principle at leading order} 
Let us start by recalling the basic definitions associated to Large Deviations Principles (cf. for instance \cite{dz}).

\begin{defi}[Rate function] \label{ratefun} Let $X$ be a metric space (or  a topological space). A rate function is a l.s.c. function $I : X \rightarrow [0, + \infty]$, it is called a {\it good rate function} if its sub-level sets $\{x, I(x) \leq \alpha\}$ are compact. 
\end{defi}

\begin{defi}[Large deviations]\label{definiLDP} Let $\{P_N\}_N$ be a sequence of Borel probability measures on $X$ and $\{a_N\}_N$ a sequence of positive real numbers diverging to $+ \infty$. Let also $I$ be a (good) rate function on $X$. The sequence $\{P_N\}_N$ is said to satisfy a large deviation principle (LDP) at speed $a_N$ with (good) rate function $I$ if for every Borel set $E \subset X$ the following inequalities hold : 
\be \label{343}
- \inf_{\overset{\circ}{E}}I \leq \underset{N \to + \infty}{\liminf} \frac{1}{a_N} \log P_N(E) \leq  \underset{N \to + \infty}{\limsup} \frac{1}{a_N} \log P_N(E) \leq - \inf_{\bar{E}} I
\ee
where $\overset{\circ}{E}$ (resp. $\bar{E}$) denotes the interior (resp. the closure) of $E$ for the topology of $X$.
\end{defi}
Formally, it means that $P_N(E)$  should behave roughly like $e^{-a_N \inf_{E} I}$. The rate function $I$ is the rate of exponential decay of the probability of rare events, and the events with larger probability are the ones on which $I$ is smaller.

For us, a convenient macroscopic observable is given by the empirical measure of the particles: if $\XN\in(\R^\d)^N$ we form
\begin{equation} \label{def:empmeasure}
\emp[\XN] := \frac{1}{N} \sum_{i=1}^N \delta_{x_i},
\end{equation}
which is a probability measure on $\R^\d$. The minimisation of $\I_V$ determines the macroscopic (or global) behavior of the system in the following sense:
\begin{itemize}
\item Minimizers of $\HN$ are such that $\emp[\XN]$ converges to $\meseq$ as $N \to \infty$ (cf. \cite[Chap. 2]{ln})
\item In fact $\emp[\XN]$ converges weakly to $\meseq$ as $N \ti$ \textit{almost surely under the canonical Gibbs measure} $\PNbeta$, see below.
\end{itemize}
In other words, not only the minimisers of the energy, but almost every (under the Gibbs measure) sequence of particles is such that the empirical measure converges to the equilibrium measure. Since $\meseq$ does not depend on the temperature, in particular \textit{the asymptotic macroscopic behavior of the system is independent of $\beta$.}
This is formally stated as the following Large Deviations Principle,  due to \cite{hiaipetz} (in dimension 2),  \cite{bg} (in dimension $1$) and  \cite{bz} (in dimension 2) for the particular case of a quadratic potential (and $\beta = 2$), see also \cite{berman} for results in a more general (still determinantal) setting of multidimensional complex manifolds, or  \cite{cgz} which recently treated more general singular $g$'s and $V$'s.    We present here the result  for the Coulomb gas in any dimension and general potential, which is not more difficult, for the proof see \cite{ln}. 

We need an additional assumption:

\textbf{(A3)} Given $\beta$, for $N$ large enough, we have
\begin{align*}  
(\LogU, \ \LogD) &\int \exp\left(-\beta N \left( \frac{V(x)}{2}- \log |x|\right)\right) \, dx <\infty  \\ 
(\Coul) & \int \exp\left(-\frac{\beta}{2}  N  V(x)\right) \, dx < + \infty.
\end{align*}

Note in particular that  \textbf{(A3)} ensures that the integral in \eqref{defZ} is convergent, hence
 $Z_{N, \beta}$ well-defined.

\begin{theo}[Large deviations principle for the Coulomb gas at speed $N^2$] \label{LDP}
\mbox{}
Let $\beta>0$ be given and assume that $V$ is continuous,  satisfies \textbf{(A3)} and  that $(1-\alpha_0)V$ satisfies  \textbf{(A2)} for some $\alpha_0>0$. 
Then the sequence $\{\mathbb{P}_{N, \beta}\}_N$ of probability measures on $\mc{P}(\R^\d)$ satisfies a large deviations principle at speed $N^2$ with good rate function $\frac{\beta}{2}\hat{\I}_V$ where $\hat{\I}_V = \I_V - \min_{\mc{P}(\R^\d)}  \I_V= \I_V - \I_V(\meseq)$. Moreover 
\begin{equation}
\lim_{N\to + \infty} \frac{1}{N^2} \log Z_{N, \beta} = - \frac{\beta}{2} \I_V(\meseq) = - \frac{\beta}{2} \min_{\mc{P}(\R^\d)} \I_V.
\end{equation}
\end{theo}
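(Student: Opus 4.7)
The strategy is classical for mean-field large deviations with singular interaction (compare \cite{hiaipetz,bg,ln}): prove a $\Gamma$-$\liminf$-type energy lower bound, a matching construction for the upper bound, and combine them to pin down $\log \ZNbeta$. Using $(N-1) \sum_i V(x_i) = \sum_{i\neq j} \frac{V(x_i)+V(x_j)}{2}$, first rewrite
\[
\HN(\XN) = \sum_{i\neq j} F(x_i, x_j) + \sum_i V(x_i), \qquad F(x,y) := \g(x-y) + \tfrac{V(x)+V(y)}{2},
\]
so that by \textbf{(A2)}, $F$ is bounded below and the single-body term is negligible at speed $N^2$ once exponential tightness (coming from \textbf{(A3)}) is established.

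For the LDP upper bound, introduce the truncated kernel $F_M := \min(F, M)$, which is bounded and continuous on $\R^\d \times \R^\d$. From $F \geq F_M$ and removing the diagonal,
\[
\HN(\XN) \;\geq\; N^2 \iint F_M\, d\emp \, d\emp \;-\; N M \;+\; O(N).
\]
Since $\mu \mapsto \iint F_M \, d\mu^{\otimes 2}$ is weakly continuous, this gives, for any sequence with $\emp[\XN] \to \mu$ weakly, $\liminf \HN(\XN)/N^2 \geq \iint F_M\, d\mu^{\otimes 2}$; letting $M \to \infty$ by monotone convergence yields the $\Gamma$-$\liminf$ inequality $\liminf \HN/N^2 \geq \I_V(\mu)$. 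Standard Laplace-type arguments then give $\limsup \frac{1}{N^2} \log \PNbeta(\emp \in E) \leq -\tfrac{\beta}{2} \inf_{\bar E}\hat{\I}_V$ modulo identifying $\log \ZNbeta$.

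For the matching LDP lower bound, I would construct, for any $\mu$ with $\I_V(\mu) < +\infty$, explicit configurations approximating $\mu$. Approximate $\mu$ weakly by a compactly supported measure $\tilde\mu$ with bounded continuous density, tile $\mathrm{supp}(\tilde\mu)$ into $N$ cells of equal $\tilde\mu$-mass and diameter $\sim N^{-1/\d}$, and pick one point per cell; a direct Riemann sum argument, using the forced separation $\geq c N^{-1/\d}$ to control the diagonal of $F$, gives $\HN/N^2 \to \I_V(\tilde\mu)$. Letting each point move in an inner box of side $\eta N^{-1/\d}$ produces a set of configurations of volume $\geq (\eta^\d/N)^N$, whose logarithm is $O(N\log N) = o(N^2)$, on which the energy is still $N^2(\I_V(\mu) + o(1))$. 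Specializing to $\mu = \meseq$ and combining with the upper bound of the previous step identifies $\frac{1}{N^2}\log \ZNbeta \to -\frac{\beta}{2}\I_V(\meseq)$; subtracting then yields the LDP with rate function $\frac{\beta}{2}\hat{\I}_V$, whose goodness follows from the lower semi-continuity and tightness of sub-level sets of $\I_V$, both consequences of \textbf{(A1)}--\textbf{(A2)}. The principal technical hurdle is the singularity of $\g$ on the diagonal: it forces the truncation $F_M$ in the upper bound and the quantitative separation in the construction, since a naive weak-continuity argument for $\I_V$ would fail.
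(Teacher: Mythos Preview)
Your approach is the classical one and matches what the paper refers to (it defers the proof to \cite{ln}, which follows precisely this truncation-and-construction route, itself modeled on \cite{bg,hiaipetz}). The overall structure --- rewrite via the symmetrized kernel $F$, truncate to $F_M$ for the upper bound, and build well-separated approximating configurations for the lower bound --- is correct and complete in outline.

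One small correction: you attribute exponential tightness to \textbf{(A3)}, but \textbf{(A3)} only guarantees that $\ZNbeta$ is finite. Exponential tightness of the empirical measures (and the goodness of the rate function) comes instead from the \emph{strengthened} growth assumption in the statement, namely that $(1-\alpha_0)V$ satisfies \textbf{(A2)} for some $\alpha_0>0$. This lets you write $F(x,y) \geq \frac{\alpha_0}{2}\big(V(x)+V(y)\big) + \big[\g(x-y) + \tfrac{(1-\alpha_0)}{2}(V(x)+V(y))\big]$, where the bracket is bounded below; the remaining coercive piece $\tfrac{\alpha_0}{2}(V(x)+V(y))$ then forces the empirical measures into compact sets at exponential rate in $N^2$. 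Without this margin $\alpha_0>0$, the sub-level sets of $\I_V$ need not be tight and the argument would fail on unbounded domains.
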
 Here the underlying topology is  that of weak convergence on $\mc{P}(\R^\d)$.

The heuristic reading of the LDP is that 
\be  \label{heurisldp}
\P_{N, \beta}(E) \approx e^{-\frac{\beta}{2} N^2 (\min_{E} \I_V - \min \I_V)}, \ee 
which in view of the uniqueness of the minimizer of $\I_V$ implies as stated above that configurations whose empirical measure does not converge to $\meseq$ as $N\to \infty$ have exponentially decaying probability.

\subsection{Further questions}

In contrast to the macroscopic result, several observations (e.g. by numerical simulation, see Figure \ref{fig:fig}) suggest that the behavior of the system at microscopic scale\footnote{Since the $N$ particles are typically confined in a set of order $O(1)$, the microscopic, inter-particle scale is $O(N^{-1/d})$.} depends heavily on $\beta$. 
\begin{figure}[h!]
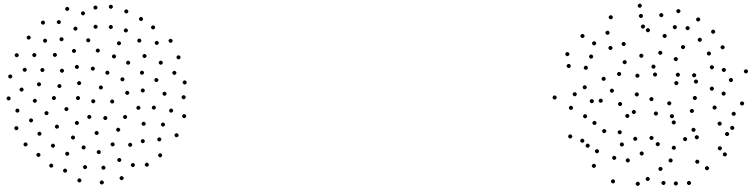
 \label{fig:fig}
\begin{minipage}[c]{.46\linewidth}
\begin{center}
\includegraphics[scale=0.12]{b400.pdf} 
\end{center}
\end{minipage}
\begin{minipage}[c]{.46\linewidth}
\begin{center}
\includegraphics[scale=0.13]{b5.pdf} 
\end{center}
\end{minipage}
\vspace{-0.5cm}
\caption{Case \LogD \ with $N = 100$ and $V(x) = |x|^2$, for $\beta = 400$ (left) and $\beta = 5$ (right).}
\end{figure}

The questions that one would like to answer are then to describe the system beyond the macroscopic scale, at the mesoscopic (i.e. $N^{-\alpha}$ for $0<\alpha<1/\d$) scale, and at the microscopic ($N^{-1/\d}$) scale.
\\
Since one already knows that $\sum_{i=1}^N\delta_{x_i}- N\meseq$ is small (more precisely $\ll N$ in some appropriate norm), one knows that 
the so-called {\it discrepancy} $$D(x,r):= \int_{B_r(x) } \sum_{i=1}^N \delta_{x_i} - N  \, d\meseq$$ is  $o(N)$ as long as $r>0$ is fixed. 
Is this still true at the mesoscopic scale for $r$ of the order $N^{-\alpha}$ with $\alpha<1/\d$? Is it true down to the microscopic scale, i.e. for $r=RN^{-1/\d}$ with $R \gg 1$? Does it hold regardless of the temperature?  This would correspond to a {\it rigidity result}.
Note that point processes with discrepancies growing like the perimeter of the ball have been called {\it hyperuniform} and are of interest to physicists for a variety of applications (cf. \cite{To}). 

Once one proves rigidity down to the microscopic scale, one may also like to characterize the fluctuations of linear statistics of the form
$$\sum_{i=1}^N f(x_i)- N\int fd\meseq,$$ where $f$ is a regular enough test-function.
In the logarithmic cases, they are proven to converge to a Gaussian distribution whose variance depends on the temperature, as will be seen below. 

Another point of view is that of large deviations theory. Then, one wishes to study a microscopic observable, the microscopic point process obtained after blow-up by $N^{1/\d}$, and characterize it as minimizing a certain energy (or rate function, in the case  with temperature), thus connecting to a crystallization question. 
 
In all the cases, one wants to understand precisely how the behavior depends on $\beta$, but also on $V$. It is believed that at the macroscopic and microscopic levels, in the logarithmic cases  the behavior is independent on $V$, a  universality feature.

\section{Splitting of the Hamiltonian and  electric approach} \label{sectionaref} 
We now start to present the approach to these problems initiated in \cite{ss1} and continued in \cite{ss2,rs,ps,lebles}. It relies on a splitting of the energy into a fixed leading order term and a next order term expressed in terms of the charge fluctuations, and on a rewriting of this next order term  via  the ``electric potential" generated by the points.

\subsection{The splitting formula}

The splitting consists in an exact formula that separates the leading ($N^2$) order term  in $\HN$ from next order terms. 

Since we expect $\emp[\XN]$ to converge to $\meseq$, 
we may try to ``expand" around $\meseq$.
In all the sequel,  we denote for any probability measure $\mu$, 
\begin{equation} \label{deffluct}
\fluct_N^{\mu} [\XN]=N( \emp[\XN]  -  \mu)   = \sum_{i=1}^N \delta_{x_i} - N\mu.
\end{equation}
Unless ambiguous, we will drop the $\XN$ dependence.

\begin{lem}[Splitting formula]Assume $\meseq $ is absolutely continuous with respect to the Lebesgue measure. 
For any $N$ and any $\XN \in (\R^\d)^N$ we have
\begin{equation}\label{split0}
 \HN(\XN) =  N^2 \mathcal{I}_{V}(\meseq) + 2N \sum_{i=1}^N \zeta(x_i) +F_N^{\meseq} (\XN)
\end{equation}
where $\triangle$ denotes the diagonal of $\R^\d\times \R^\d$ and we define for any probability measure $\mu$
\be\label{def:FN} 
F_N^{\mu} (\XN)= \iint_{\triangle^c} \g(x-y)\, d\fluct_N^{\meseq} (x) d\fluct_N^{\meseq}(y).\ee
\end{lem}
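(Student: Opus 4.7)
The plan is a direct algebraic expansion: write $\fluct_N^{\meseq} = \sum_{i=1}^N \delta_{x_i} - N\meseq$ inside the definition \eqref{def:FN} of $F_N^{\meseq}$, expand the product, and recombine the pieces using the defining relations \eqref{defhmu0}, \eqref{defc1}, \eqref{defzeta} of $h^{\meseq}$, $c$, and $\zeta$.

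Concretely, I would expand
\[
F_N^{\meseq}(\XN) = \iint_{\triangle^c} \g(x-y)\,\Big(\sum_{i}\delta_{x_i} - N\meseq\Big)(dx)\,\Big(\sum_{j}\delta_{x_j}- N\meseq\Big)(dy)
\]
into four terms. The key observation, and the only point requiring a small argument, is that since $\meseq$ is assumed absolutely continuous with respect to Lebesgue measure, the diagonal $\triangle$ has zero $\meseq\otimes\meseq$-measure and zero $\meseq\otimes(\sum_i\delta_{x_i})$-measure, so removing $\triangle$ from the domain of integration changes nothing in the three terms that involve at least one factor of $d\meseq$. Only in the purely empirical term $\iint \g(x-y)\,d(\sum_i\delta_{x_i})d(\sum_j\delta_{x_j})$ does the removal of $\triangle$ matter: it eliminates exactly the ill-defined $i=j$ contributions and leaves $\sum_{i\neq j}\g(x_i-x_j)$. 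Using that $\g$ is even, this yields
\[
F_N^{\meseq} = \sum_{i\neq j}\g(x_i-x_j) - 2N\sum_{i=1}^N h^{\meseq}(x_i) + N^2 \int h^{\meseq}\,d\meseq.
\]

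Now I would plug this into the expression for $\HN$ in \eqref{HN}. Substituting $h^{\meseq}(x_i) = (\zeta(x_i) - V(x_i)/2 + c)$ from \eqref{defzeta} one gets
\[
\HN - F_N^{\meseq} = 2N\sum_{i=1}^N\Big(\zeta(x_i) - \tfrac{V(x_i)}{2}+c\Big) + N\sum_{i=1}^N V(x_i) - N^2\int h^{\meseq}d\meseq,
\]
in which the $V(x_i)$ contributions cancel. What remains is $2N\sum_i\zeta(x_i) + 2N^2 c - N^2\int h^{\meseq}d\meseq$. Recognizing that $\mathcal{I}_V(\meseq) = \int h^{\meseq}d\meseq + \int V d\meseq$ and substituting the value of $c$ from \eqref{defc1}, namely $2N^2 c = 2N^2\mathcal{I}_V(\meseq) - N^2\int V\,d\meseq$, one checks that the constant terms collapse to exactly $N^2\mathcal{I}_V(\meseq)$, which yields \eqref{split0}.

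There is no real obstacle here beyond bookkeeping; the only point worth emphasizing in the write-up is the role of the absolute continuity assumption, which is what licenses the symmetric ``add and subtract diagonal'' manipulation and ensures that the three mixed integrals remain unchanged when restricted to $\triangle^c$. Everything else is linear algebra on measures together with the two identities characterizing the equilibrium measure.
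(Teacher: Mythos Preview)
Your proof is correct and is essentially the same computation as the paper's, just run in the opposite direction: the paper starts from $\HN$, writes it in terms of the empirical measure, and expands around $\meseq$ to isolate $F_N^{\meseq}$ as the remainder, whereas you start from $F_N^{\meseq}$, expand it, and subtract from $\HN$. The algebraic content (bilinear expansion, use of absolute continuity to restore the diagonal in the mixed terms, and the identities for $\zeta$ and $c$) is identical in both.
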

\begin{proof}
We may write \begin{eqnarray}
\nonumber \HN(\XN)  & = &  \sum_{i \neq j} \g(x_i- x_j) + N \sum_{i=1}^N V(x_i)\\
\nonumber & = & N^2 \iint_{\triangle^c} \g(x-y) d\emp(x)  d\emp(y) + N^2 \int_{\R^\d} V d\emp(x) \\
\nonumber 
& = &  N^2 \iint_{\triangle^c} \g(x-y) d\meseq(x) d\meseq(y) + N^2 \int_{\R^\d} V d\meseq
\\ 
\nonumber
 & + &  2N \iint_{\triangle^c} \g(x-y) d\meseq(x) d\fluct_N(y)
+ N \int_{\R^\d} V d\fluct_N \\ 
\label{finh}
& + & \iint_{\triangle^c} \g(x-y) d\fluct_N(x)d\fluct_N(y).
\end{eqnarray}
We now recall that $\zeta$ was defined in \eqref{defzeta} by
\be
\zeta = h^{\meseq} + \frac{V}{2} - c = \int_{\R^\d} \g(\cdot-y)\, d\meseq(y)  + \frac{V}{2} - c\ee
and that $\zeta=0$  in $\Sigma$ (with the assumptions we made, one can check that  $\zeta$ is continuous, so the q.e. relation can be upgraded to everywhere).

With the help of this we may rewrite the medium line in the right-hand side of \eqref{finh} as  
\begin{multline*}
2N \iint_{\triangle^c} \g(x-y) d\meseq(x) d\fluct_N(y) + N \int_{\R^\d} V d\fluct_N \\
 = 2N  \int_{\R^\d} (h^{\meseq} + \frac{V}{2}) d\fluct_N = 2N  \int_{\R^\d} (\zeta + c) d\fluct_N \\
 = 2N^2 \int_{\R^\d} \zeta d\emp - 2N^2 \int_{\R^\d} \zeta d\meseq+ 2 N c \int_{\R^\d}  d\fluct_N = 2N^2 \int_{\R^\d} \zeta d\emp.
\end{multline*}
The last equality is due to the facts that  $\zeta \equiv 0$ on the support of $\meseq$ and that  $\emp$ and $ \meseq$ are both probability measures.   We also have to notice that since  $\meseq$ is absolutely continuous with respect to the Lebesgue measure, we may include the diagonal  back into the domain of integration.
By that same argument, one may recognize in the first line of the right-hand side of \eqref{finh}, the quantity $N^2 \I_V(\meseq)$, cf. \eqref{definitionI}. 
\end{proof}
The function $\zeta$ can be seen as an effective potential, whose sole role is to confine the points to the set $\Sigma$. We have thus reduced to studying $F_N^{\meseq}$. It is a priori not clear of which order this term is, and whether it is bounded below!

\subsection{Electric interpretation}
To go further, we use an electric interpretation of the energy $ F_N^{\meseq}$, as first used in \cite{ss1}, and  the rewriting of the energy via truncation as in \cite{rs,ps}. 

Such a computation allows to replace the sum of pairwise interactions of all the charges and ``background" by an integral (extensive) quantity, which is easier to handle.  This will be the first time that the Coulomb nature of the interaction is really used. 





\subsubsection{Electric potential and truncation}
\label{sec:potential}
{\it Electric potential.}
For any $N$-tuple $\XN$ of points in the space $\R^\d$, and any  probability density $\mu$, we define the (electric) potential generated by $\XN$ and $\mu$ as
\begin{equation}\label{def:HNmu}
H_N^{\mu} (x) :=\int_{\R^\d} - \g (x-y)\, \left(\sum_{i=1}^N \delta_{x_i} - N d\mu\right)(y).
\end{equation} Note that in principle it should be denoted $H_N^{\mu}[\XN](x)$, but we omit the $\XN$ dependence for the sake of lightness of notation.
The potential $H_N^{\mu}$ satisfies 
\begin{equation}\label{eq:HNmu}
-\Delta H_{N}^{\mu}=\cd \left(\sum_{i=1}^N \delta_{x_i} - N d\mu\right) \quad \text{in} \ \R^\d.\end{equation}
Note that $H_N^{\mu}$ decays at infinity, because the charge distribution $\fluct_N$ is compactly supported and has zero total charge, hence, when seen from infinity behaves like a dipole. More precisely, $H_N^\mu$ decays like $\nabla \g$ at infinity, that is $O(\frac{1}{|x|^{\d-1}})$ and its gradient $\nabla H_N^\mu$ decays like the second derivative $D^2\g$, that is $O(\frac{1}{|x|^{\d}})$.


{\it Truncated potential.}
Let $\XN \in (\R^\d)^N$ be fixed. For any $\vec{\eta} = (\eta_1, \dots, \eta_N)$ 
we define the truncated potential
\begin{equation}\label{def:HNmutrun}
H_{N,\vec{\eta}}^{\mu} (x)= H_N^{\mu}(x)-\sum_{i=1}^N \(\g(x-x_i)-\g(\eta_i) \)_+
\end{equation}
where $(\cdot)_+$ denotes the positive part of a number. Note that in view of the singular behavior of $\g$ at the origin, $H_N^{\mu}$ diverges at each $x_i$, and here we ``chop off" these infinite peaks at distance $\eta_i$ from $x_i$.
We will also denote
\be\label{def:truncation} \f_{\eta} (x)= (\g(x)-\g(\eta))_+,\ee and point out that $\f_\eta$ is supported in $B(0,\eta)$.
We note that
\begin{equation}
\label{eqhne} H_{N, \vec{\eta}}^{\mu} (x) := \int_{\R^2} \g(x-y) \left(\sum_{i=1}^N \delta_{x_i}^{(\eta_i)}  -N d\mu\right)(y),
\end{equation}
where $\delta_{x_i}^{(\eta_i)}$ denotes the uniform measure of mass $1$ on $\pa B(x_i, \eta_i)$.
By $H_{N,\eta}$ we simply denote $H_{N,\vec{\eta}}$ when all the $\eta_i$ are chosen equal to $\eta$.




\subsubsection{Re-expressing the interaction term}
Formally, using Green's formula (or Stokes' theorem) and the definitions, one would like to write that  in the Coulomb cases\begin{equation} \label{formalcomputation}
F_N^{\mu}(\XN) = \int H_N^{\mu} d\fluct_N= \int H_N^{\mu} (- \frac{1}{\cd} \Delta H_N^{\mu}) \approx \frac{1}{\cd} \int |\nabla H_N^{\mu}|^2 \end{equation}This is the place where we really use for the first time in a crucial manner the Coulombic  nature of the interaction kernel $\g$.
Such a computation allows to replace the sum of pairwise interactions of all the charges and ``background" by an integral (extensive) quantity, which is easier to handle in some sense. However, \eqref{formalcomputation} does not make sense because  $\nabla H_N^\mu$ fails to be in $L^2$ due to the presence of Dirac masses.  Indeed, near each atom $x_i$, the vector-field $\nabla H_N^\mu$ behaves like $\nabla \g$ and the integrals $\int_{B(0,\eta)} |\nabla \g|^2$  are divergent in all dimensions.   Another way to see this is that the Dirac masses charge the diagonal $\triangle$ and so $\triangle^c$ cannot be reduced to the full space. The point of the truncation above is precisely to remedy this and give a way of computing $\int |\nab H_N^{\mu}|^2$ in a ``renormalized" fashion. 

We now give a proper meaning to the statement.
\begin{lem} \label{lem:monoto} Given $\XN$, for any $\vec{\eta}$ with $\eta_i \le \frac12$ such that  the $B(x_i,\eta_i)$ are disjoint, for any absolutely continuous probability measure $\mu$  we have
 \begin{equation}\label{eqlm}
 F_N^{\mu}(\XN) =\frac{1}{\cd} \left(\int_{\R^\d}|\nab H_{N, \vec{\eta}} ^{\mu}|^2  -\cd \sum_{i=1}^N \g(\eta_i)\) + O \( N \|\mu\|_{L^\infty}  \sum_{i=1}^N \eta_i^2\).  
\end{equation}
If the balls $B(x_i, \eta_i)$ are not assumed to be disjoint, then we still have that the left-hand side is larger than the right-hand side.
\end{lem}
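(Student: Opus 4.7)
The plan is to recognize $H_{N,\vec\eta}^\mu$, via \eqref{eqhne}, as the Coulomb potential of the \emph{smeared} signed measure
$$\tilde\sigma := \sum_{i=1}^N \delta_{x_i}^{(\eta_i)} - N\mu,$$
which has zero total mass and compact support, so that $-\Delta H_{N,\vec\eta}^\mu = \cd\tilde\sigma$ in $\R^\d$ with decay $H_{N,\vec\eta}^\mu = O(|x|^{1-\d})$ and $|\nabla H_{N,\vec\eta}^\mu| = O(|x|^{-\d})$ at infinity. Unlike the divergent $\nabla H_N^\mu$, the gradient $\nabla H_{N,\vec\eta}^\mu$ only has finite jumps across the spheres $\partial B(x_i,\eta_i)$ and therefore lies in $L^2(\R^\d)$. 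Integrating by parts on $B(0,R)$ and sending $R\to\infty$, the boundary term vanishes thanks to the decay, and the Green identity delivers
$$\int_{\R^\d}|\nabla H_{N,\vec\eta}^\mu|^2 \;=\; \cd\iint \g(x-y)\,d\tilde\sigma(x)\,d\tilde\sigma(y).$$

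Next I would expand this double integral and compare it, term by term, with the expansion of $F_N^\mu$ from \eqref{def:FN} using $\fluct_N = \sum_i\delta_{x_i} - N\mu$. The pure background piece $N^2\iint\g\,d\mu\,d\mu$ is identical on both sides. The key tool for the terms involving $\delta_{x_i}^{(\eta_i)}$ is Newton's shell theorem: the potential generated by the uniform unit mass on $\partial B(x_i,\eta_i)$ equals $\g(\cdot-x_i)$ outside $B(x_i,\eta_i)$ and the constant $\g(\eta_i)$ inside. This yields at once
$$\iint \g(x-y)\,d\delta_{x_i}^{(\eta_i)}(x)\,d\delta_{x_i}^{(\eta_i)}(y) = \g(\eta_i),$$
and, whenever $B(x_i,\eta_i)$ and $B(x_j,\eta_j)$ are disjoint, $\iint\g\,d\delta_{x_i}^{(\eta_i)}\,d\delta_{x_j}^{(\eta_j)} = \g(x_i-x_j)$. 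Hence the two sums $\sum_{i,j}\iint\g\,d\delta_{x_i}^{(\eta_i)}\,d\delta_{x_j}^{(\eta_j)}$ and $\sum_{i\ne j}\g(x_i-x_j)$ differ precisely by $\sum_i\g(\eta_i)$ (the diagonal, excluded from $F_N^\mu$ but present in the smeared energy), which accounts for the $-\cd\sum_i\g(\eta_i)$ counter-term.

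The remaining discrepancy is the cross-term against $\mu$: setting $h'_i(y) := \int\g(x-y)\,d\delta_{x_i}^{(\eta_i)}(x)$, the shell theorem gives $h'_i(y)-\g(y-x_i)=0$ outside $\bar B(x_i,\eta_i)$ and $\g(\eta_i)-\g(y-x_i)$ inside, so the error in $F_N^\mu$ is
$$-2N\sum_i\int_{B(x_i,\eta_i)}\bigl(\g(\eta_i)-\g(y-x_i)\bigr)\,d\mu(y),$$
whose absolute value is bounded by $2N\|\mu\|_{L^\infty}\sum_i\int_{B(0,\eta_i)}(\g(y)-\g(\eta_i))\,dy = O(N\|\mu\|_{L^\infty}\sum_i\eta_i^2)$ by a direct computation with $\g(r)=r^{2-\d}$ or $-\log r$. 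Assembling these ingredients gives the equality in the lemma. For overlapping balls, Newton's shell identity for $i\ne j$ degrades to the Newton inequality $\iint\g\,d\delta_{x_i}^{(\eta_i)}\,d\delta_{x_j}^{(\eta_j)}\le\g(x_i-x_j)$, coming from the superharmonicity of $\g(\cdot-x_i)$ (mean value over any sphere is bounded by the value at the center), and a similar sign-preserving control persists for the $\mu$-cross estimate; rearranging yields the announced one-sided inequality. The main (mild) technical point is the justification of Green's formula across the surfaces $\partial B(x_i,\eta_i)$ where $\nabla H_{N,\vec\eta}^\mu$ jumps, but these are purely $H^1$ (codimension-one) singularities, and either an explicit distributional computation using the jump relation for the normal derivative of $\g\star\delta_{x_i}^{(\eta_i)}$, or a preliminary radial mollification followed by a limiting argument, handles it cleanly.
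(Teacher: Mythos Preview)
Your proof is correct and follows essentially the same route as the paper: both establish $\int|\nabla H_{N,\vec\eta}^\mu|^2 = \cd\iint \g\,d\tilde\sigma\,d\tilde\sigma$ via Green's formula on large balls, then expand and compare term by term with $F_N^\mu$, using Newton's shell theorem (equivalently, the superharmonicity argument in the paper) to identify the diagonal $\sum_i\g(\eta_i)$, to show the off-diagonal smeared interactions equal (resp.\ are bounded above by) the point interactions when the balls are disjoint (resp.\ overlap), and to bound the $\mu$-cross term by $O(N\|\mu\|_{L^\infty}\sum_i\eta_i^2)$. The only cosmetic difference is that the paper phrases the $i\neq j$ inequality via the mean-value property for superharmonic functions applied twice, whereas you invoke Newton's theorem directly; the content is the same.
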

The last term in the right-hand side of \eqref{eqlm} should be thought of as a small error. In practice, we will take $\eta_i \le N^{-1/\d} \eta$ and let $\eta \to 0$. The error is then $O(\eta^2 N^{2-2/\d})$.

\begin{proof} For the proof, we drop the superscripts $\mu$. 
First we notice that $\int_{\R^\d}|\nab H_{N, \vec{\eta}}|^2 $ is a convergent integral and that 
\be\label{intdoub}\int_{\R^\d}|\nab H_{N, \vec{\eta}}|^2=\cd \iint \g(x-y)d\( \sum_{i=1}^N \delta_{x_i}^{(\eta_i)} -N\mu\)(x) d\( \sum_{i=1}^N \delta_{x_i}^{(\eta_i)} -N\mu\)(y).\ee
Indeed, we may choose $R$ large enough so that all the points of $\XN$ are contained in the ball $B_R = B(0, R)$.
    By Green's formula  and \eqref{eqhne}, we have
\be\label{greensplit1}
\int_{B_R} |\nabla H_{N,\vec{\eta}}|^2 \\= \int_{\partial B_R} H_{N,\vec{\eta}} \frac{\partial H_N}{\partial \nu} - \cd\int_{B_R} H_{N,\vec{\eta}} \(   \sum_{i=1}^N \delta_{x_i}^{(\eta_i)}-N\mu\)  .
\ee
In view of the decay of $H_N$ and $\nab H_N$ mentioned above, the boundary integral tends to $0$ as $R \to \infty$, and so we may write 
$$\int_{\R^\d}|\nab H_{N, \vec{\eta}} |^2= \cd \int_{\R^\d} H_{N,\vec{\eta}} \(   \sum_{i=1}^N \delta_{x_i}^{(\eta_i)}-N\mu\)  $$ and thus \eqref{intdoub} holds.
We may next write 
\begin{multline}
\label{l3}\iint \g(x-y)d\( \sum_{i=1}^N \delta_{x_i}^{(\eta_i)} -N\mu\)(x) d\( \sum_{i=1}^N \delta_{x_i}^{(\eta_i)} -N\mu\)(y)\\
-\iint_{\triangle^c} \g(x-y)\, d\fluct_N(x)\, d\fluct_N(y)\\
 = \sum_{i=1}^N \g(\eta_i) + \sum_{i\neq j} \iint \g(x-y) \(\delta_{x_i}^{(\eta_i)} \delta_{x_j}^{(\eta_j)} -   \delta_{x_i}\delta_{x_j}\)\\+2 N\sum_{i=1}^N\iint \g(x-y)\( \delta_{x_i}-\delta_{x_i}^{(\eta_i)} \) \mu.\end{multline}
Let us now observe that $\int \g(x-y)\delta_{x_i}^{(\eta_i)} (y)$, the potential generated by $\delta_{x_i}^{(\eta_i)}$ is equal to $\int \g(x-y) \delta_{x_i}$ outside of $B(x_i,\eta_i)$, and smaller otherwise. Since its  Laplacian is $-\cd \delta_{x_i}^{(\eta_i)}$, a negative measure,  this  is also a superharmonic function, so by the maximum principle, its value at a point $x_j$ is larger or equal to its average on a sphere centered at $x_j$. Moreover, outside $B(x_i,\eta_i)$ it is a harmonic function, so its values are equal to its averages. We deduce from these considerations, and reversing the roles of $i $ and $j$,  that for each $i\neq j$,
$$\int\g(x-y) \delta_{x_i}^{(\eta_i)} \delta_{x_j}^{(\eta_j)} \le \int\g(x-y) \delta_{x_i} \delta_{x_j}^{(\eta_j)}
\le \int\g(x-y) \delta_{x_i} \delta_{x_j}$$
with equality if $B(x_i,\eta_i) \cap B(x_j,\eta_j) = \emptyset.$
We conclude that the second term  in the right-hand side of \eqref{l3} is nonnegative and equal to  $0$ if all the balls are disjoint. Finally, by the above considerations,  since $\int \g(x-y) \delta_{x_i}^{(\eta_i)}$ coincides with $\int \g(x-y)\delta_{x_i}$ outside $B(x_i,\eta_i)$, 
we may rewrite the last term in the right-hand side of \eqref{l3} as 
$$2 N\sum_{i=1}^N\int_{B(x_i,\eta_i)} ( \g(x-x_i)- \g(\eta_i)) d\mu.$$ 
On the other hand, recalling \eqref{def:truncation}, if $\mu\in L^\infty$ then  we have
\be\label{fdmu}
\left|\sum_{i=1}^N \int_{\R^\d}\f_{\eta_i} d\mu\right|\le  C_\d \|\mu\|_{L^\infty} \sum_{i=1}^N\eta_i^2.\ee
Indeed, it suffices to observe that 
\be\label{intfeta}\int_{B(0,\eta)} \f_{\eta}= \int_0^\eta (\g(r)-\g(\eta))r^{\d-1}\, dr= -\int_0^r \g'(r) r^\d\, dr ,\ee
with an integration by parts (and an abuse of notation, i.e. viewing $\g$ as a function on $\R$) and using  the explicit form of $\g$ it follows that 
\be\label{bfeeta}\int_{B(0,\eta)} |\f_{\eta}|\le C_\d \eta^2.\ee
By the definition \eqref{def:truncation}, we thus have obtained the result. 
\end{proof}

\subsection{The case $\d=1$}
In the case $\g(x)=-\log |x|$ in dimension $1$ \eqref{wlog}, or in the Riesz cases \eqref{kernel2},  
 $\g$ is no longer the Coulomb kernel, so the formal computation \eqref{formalcomputation} does not work. 
However $\g$ is in the case \eqref{wlog} the kernel of the half-Laplacian, and it is known that the half-Laplacian can be made to correspond to the Laplacian by adding one extra space dimension. In the same way, in the case \eqref{kernel2}, $\g$ is the kernel of a second order local operator, after adding one extra space dimension.
In other words, in the case \eqref{wlog} we should imbed the  space $\R$ into the two-dimensional space $\R^2$ and consider the harmonic extension of $H_N^{\meseq}$, defined in \eqref{def:HNmu}, to the whole plane. That extension will solve an appropriate Laplace equation, and we will reduce dimension $1$ to a special case of dimension $2$. An analogue procedure, due to Caffarelli-Silvestre \cite{caffsilvestre}
 applies to the case \eqref{kernel2}.
 This is the approach that was proposed in \cite{ss2} for the \LogU \ case  and in \cite{ps} for the \Riesz\  case.
 
 Let us now get more specific about the extension procedure in the case \LogU. We view $\R$ as identified with $\R\times \{0\} \subset \R^{2}= \{ (x,y), x\in \R, y\in \R\}$. 
 Let us denote by $\delta_{\R}$ the uniform measure on $\R\times \{0\}$, i.e. such that for any smooth $\varphi(x,y) $ (with $x\in \R, y \in \R$) 
we have 
$$\int_{\R^2} \varphi \delta_{\R}= \int_{\R  } \varphi(x,0) \, dx.$$

 Let us still consider $\mu$ a measure on $\R$ (such as $\meseq$ the equilibrium measure on $\R$ associated to $\I_V$ 
as in Theorem \ref{theoFrostman}).

Given $x_1, \dots, x_N \in \R$, as explained above we identify them with  the  points $(x_1, 0), \dots, (x_N,0)$ in $\R^{2}$, and
we may then define the potentials $H_N^{\mu}$ and truncated potentials $H_{N,\vec{\eta}}^{\mu}$  in $\R^{2}$ by 
\bm
H_N^\mu = \g * \left(\sum_{i=1}^N \delta_{(x_i,0)} - N\mu \delta_{\R}\right)\qquad H_{N,\vec{\eta}}^{\mu} = \g * \left(\sum_{i=1}^N \delta_{(x_i,0)}^{(\eta)} - N \mu \delta_{\R}\right).
\em 
Since $\g$ is naturally extended to a function in $\R^{2}$, these potentials make sense as functions in $\R^{2}$ and 
$H_N^{\mu} $ solves 
\be \label{bbe} -\Delta  H_N^{\mu}=2\pi \left(\sum_{i=1}^N \delta_{(x_i,0)} - N\mu \delta_{\R}\right).
\ee
$H_N^{\mu}$  is nothing else than the harmonic extension to $\R^2$, away from the real axis,  of the potential  defined in dimension $1$ by the analogue of \eqref{def:HNmu}. This is closely related to  the {\it Stieltjes transform}, a commonly used object in Random Matrix Theory (more precisely the gradient of $H_N^{\mu}$ is like the Stieltjes transform).

The proof of Lemma \ref{lem:monoto} then goes through without change, if one replaces 
$\int_{\R^\d} |\nab H_{N,\vec{\eta}}^\mu|^2$ with $\int_{\R^{2}} |\nab H_{N,\vec{\eta}}^\mu|^2 .$

\subsection{The electric energy controls the fluctuations}
Since \eqref{eq:HNmu} (resp. \eqref{bbe}) holds, we can immediately relate the fluctuations to $H_N^{\mu}$, and via the  Cauchy-Schwarz  inequality, control them by the electric energy.
\begin{prop}\label{prop:fluctenergy}
Let  $\varphi$ be a compactly supported Lipschitz function in $\R^\d$ supported in $U$, and  $\mu$ be a bounded probability density on $\R^\d$. Let  $\vec{\eta}$ be a $N$-tuple of distances such that $\eta_i \le N^{-1/\d}$, for each $i = 1, \dots, N$. For each configuration $\XN$, we have
\begin{equation} \label{controlfluctuations}
\left|\int_{\R^\d} \varphi \, \fluct_N^{\mu}  \right|
  \le C \|\nab\varphi\|_{L^\infty}\( |U|^{\hal}  \|\nab H_{N,\vec{\eta}}^{\mu} \|_{L^2(U)}+  N^{1-\frac{1}{\d}} \right)
\end{equation}
where $C$ depends only on $\d$.

\end{prop}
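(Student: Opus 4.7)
The plan is to exploit the distributional identity $-\Delta H_{N,\vec\eta}^{\mu} = \cd\bigl(\sum_{i=1}^N \delta_{x_i}^{(\eta_i)} - N\mu\bigr)$ coming from \eqref{eqhne}, which realizes the \emph{smoothed} fluctuation measure as the Laplacian of a function whose gradient is locally $L^2$ (by the computation in Lemma \ref{lem:monoto}). Testing $\varphi$ against $\fluct_N^\mu$ then reduces, after integration by parts, to $\tfrac{1}{\cd}\int \nabla\varphi\cdot \nabla H_{N,\vec\eta}^\mu$, up to the error introduced by smearing each atom $\delta_{x_i}$ over a sphere of radius $\eta_i$; this smearing error is controlled by the Lipschitz character of $\varphi$. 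The Cauchy--Schwarz inequality then converts the pairing into the advertised product of an $L^2$ norm of $\nabla H_{N,\vec\eta}^\mu$ on $U$ and a $|U|^{1/2}$ factor.

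\textbf{Execution.} I would first split
\begin{equation*}
\int_{\R^\d}\varphi\, d\fluct_N^\mu \;=\; \sum_{i=1}^N\Bigl(\varphi(x_i) - \int \varphi\, d\delta_{x_i}^{(\eta_i)}\Bigr) \;+\; \int_{\R^\d}\varphi\, d\Bigl(\sum_{i=1}^N \delta_{x_i}^{(\eta_i)} - N\mu\Bigr).
\end{equation*}
Since $\delta_{x_i}^{(\eta_i)}$ is the normalized surface measure on $\partial B(x_i,\eta_i)$, each summand of the first piece is bounded in absolute value by $\|\nabla\varphi\|_{L^\infty}\eta_i$, and the assumption $\eta_i\le N^{-1/\d}$ forces the full sum to be at most $\|\nabla\varphi\|_{L^\infty}\, N^{1-1/\d}$. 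For the second piece, using \eqref{eqhne} and integrating by parts (the boundary term vanishes because $\varphi$ is compactly supported, and $\nabla H_{N,\vec\eta}^\mu$ is locally $L^2$),
\begin{equation*}
\int_{\R^\d}\varphi\, d\Bigl(\sum_{i=1}^N \delta_{x_i}^{(\eta_i)} - N\mu\Bigr) \;=\; -\frac{1}{\cd}\int_{\R^\d} \varphi\, \Delta H_{N,\vec\eta}^\mu \;=\; \frac{1}{\cd}\int_{\R^\d}\nabla\varphi\cdot\nabla H_{N,\vec\eta}^\mu.
\end{equation*}
Cauchy--Schwarz, combined with $\mathrm{supp}\,\nabla\varphi\subset U$ and the elementary inequality $\|\nabla\varphi\|_{L^2(U)}\le |U|^{1/2}\|\nabla\varphi\|_{L^\infty}$, then yields
\begin{equation*}
\Bigl|\int_{\R^\d}\nabla\varphi\cdot\nabla H_{N,\vec\eta}^\mu\Bigr| \;\le\; |U|^{1/2}\,\|\nabla\varphi\|_{L^\infty}\,\|\nabla H_{N,\vec\eta}^\mu\|_{L^2(U)},
\end{equation*}
and combining with the estimate on the first piece gives \eqref{controlfluctuations}.

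\textbf{Main point and the case $\d=1$.} There is no genuine obstacle here; the whole game lies in the choice of truncation scale. Requiring $\eta_i\le N^{-1/\d}$ is precisely what keeps $\sum_i\eta_i\lesssim N^{1-1/\d}$, producing the desired error, while being just small enough that $\nabla H_{N,\vec\eta}^\mu$ is square-integrable by Lemma \ref{lem:monoto}. In the one-dimensional logarithmic case, one works instead on $\R^2$ via the harmonic extension discussed before the proposition: extend $\varphi$ to a compactly supported Lipschitz function $\tilde\varphi$ on $\R^2$ (for instance as $\varphi(x)\chi(y)$ with a Lipschitz cutoff $\chi$ equal to $1$ near $y=0$) and run the same integration-by-parts argument against the two-dimensional potential, noting that $\|\nabla\tilde\varphi\|_{L^\infty}$ is controlled by $\|\nabla\varphi\|_{L^\infty}$ up to harmless constants.
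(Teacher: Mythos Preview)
Your proof is correct and follows essentially the same route as the paper: split $\int\varphi\,d\fluct_N^\mu$ into the smeared fluctuation $\int\varphi\,d(\sum_i\delta_{x_i}^{(\eta_i)}-N\mu)$ and the smearing error $\sum_i(\varphi(x_i)-\int\varphi\,d\delta_{x_i}^{(\eta_i)})$, handle the first by integration by parts against $-\frac{1}{\cd}\Delta H_{N,\vec\eta}^\mu$ followed by Cauchy--Schwarz, and bound the second by $\|\nabla\varphi\|_{L^\infty}\sum_i\eta_i\le \|\nabla\varphi\|_{L^\infty}N^{1-1/\d}$. The treatment of the one-dimensional case via extension of $\varphi$ to $\R^2$ is likewise the same as in the paper.
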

\begin{proof}In the 1D logarithmic case, we first
extend
 $\varphi$ to a smooth compactly supported test function in $\R^2$  coinciding with $\varphi (x) $ for any $(x,y)$ such that $|y|\le 1$ and equal to $0$ for $|y|\ge 2$. 
 

In view of \eqref{eq:HNmu} (resp. \eqref{bbe}) and applying Cauchy-Schwarz, we  have 
\begin{equation}\label{rel3}
\left|\int  \Big( \sum_{i=1}^N \delta_{x_i}^{(\eta_i)} - N\mueq\Big) \varphi\right|= \frac{1}{\cd}\left| \int_{\R^\d}\nabla H_{N,\vec{\eta}}^{\mueq} \cdot \nab \varphi\right|\le C |U|^{\hal} \|\nab \varphi\|_{L^\infty} \|\nab H_{N,\vec{\eta}}^{\mueq}\|_{L^2(U)},
\end{equation}(resp. with an integral over $U\times \R$ in the 1D log case). 
Moreover, since $\eta_i \le N^{-1/\d}$, we have
\begin{equation}
\left|\int \left(\fluct_N^\mu- \Big( \sum_{i=1}^N \delta_{x_i}^{(\eta_i)} - N\mu\Big) \right) \varphi\right|= 
\left|\int  \Big( \sum_{i=1}^N( \delta_{x_i} - \delta_{x_i}^{(\eta_i)}) \Big) \varphi\right|\le N^{1-\frac{1}{\d}} \|\nab \varphi\|_{L^\infty} .\end{equation}
The result follows.
\end{proof}

\begin{coro}
Applying this with $\eta_i=N^{-1/\d}$, we deduce in view of  Lemma \ref{lem:monoto}  that 
\begin{multline} \label{controlfluctuations2}
\left|\int_{\R^\d} \varphi \, \fluct_N^{\mu}  \right|
 \\  \le
  \begin{cases} C\|\nab\varphi\|_{L^\infty} \( |\mathrm{Supp} \, \varphi|^{\hal}   \( F_N^\mu(\XN)+ \frac{N}{\d} \log N + CN\|\mu\|_{L^\infty}   \)^{\hal}+ N^{1-\frac{1}{\d}}\)
   &  \text{for \LogU, \LogD} \\
    C \|\nab\varphi\|_{L^\infty}\(|\mathrm{Supp} \, \varphi|^{\hal}   \(F_N^{\mu} (\XN)+ N^{2-\frac{2}{\d}}\|\mu\|_{L^\infty} \right)^{\hal}   + N^{1-\frac{1}{\d}}\) & \text{for \Coul},
\end{cases} \end{multline} where $C$ depends only on $\d$ and $\|\mu\|_{L^\infty}$.
\end{coro}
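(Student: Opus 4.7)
The plan is to combine Proposition \ref{prop:fluctenergy} with Lemma \ref{lem:monoto} applied to the uniform truncation $\eta_i = N^{-1/\d}$ for every $i$. Proposition \ref{prop:fluctenergy} already delivers
$$\left|\int_{\R^\d} \varphi\, \fluct_N^{\mu}\right| \le C\|\nab\varphi\|_{L^\infty}\bigl(|U|^{1/2}\|\nab H_{N,\vec{\eta}}^{\mu}\|_{L^2(U)} + N^{1-1/\d}\bigr),$$
so all I have to do is bound $\|\nab H_{N,\vec\eta}^\mu\|_{L^2(U)}^2$, which is at most $\int_{\R^\d}|\nab H_{N,\vec\eta}^\mu|^2$, by $F_N^\mu$ plus a controllable error.

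For that I invoke Lemma \ref{lem:monoto}. With this uniform choice of $\vec\eta$ the balls $B(x_i,\eta_i)$ are not assumed disjoint — pairs of points at distance less than $2N^{-1/\d}$ produce overlaps — so I only have the one-sided form of \eqref{eqlm}, namely
$$\int_{\R^\d}|\nab H_{N,\vec\eta}^\mu|^2 \,\le\, \cd\, F_N^\mu(\XN) + \cd \sum_{i=1}^N \g(\eta_i) + C\, N\,\|\mu\|_{L^\infty}\sum_{i=1}^N \eta_i^2.$$
Fortunately this is exactly the direction needed to turn an upper bound on the electric $L^2$-norm into one involving $F_N^\mu$.

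Next I plug in $\eta_i=N^{-1/\d}$ and compute the explicit sums: $\sum_i \eta_i^2 = N^{1-2/\d}$, and
$$\sum_{i=1}^N \g(\eta_i) = \begin{cases} \dfrac{N\log N}{\d} & \text{in the logarithmic cases,}\\[2pt] N^{\,2-2/\d} & \text{in the Coulomb case.}\end{cases}$$
The residual error $N\|\mu\|_{L^\infty}\sum\eta_i^2 = N^{2-2/\d}\|\mu\|_{L^\infty}$ is bounded by $N\|\mu\|_{L^\infty}$ in the logarithmic cases (since $2-2/\d\le 1$ for $\d\in\{1,2\}$) and is absorbed into the $N^{2-2/\d}\|\mu\|_{L^\infty}$ Coulomb term otherwise. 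Taking square roots and inserting the result into the Cauchy--Schwarz bound from Proposition \ref{prop:fluctenergy} — with $U=\mathrm{Supp}\,\varphi$ — produces the two cases of \eqref{controlfluctuations2} as stated, with the constants $\cd$ consolidated into $C$.

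There is essentially no serious obstacle here, only bookkeeping: one must remember (i) to use $\|\nab H_{N,\vec\eta}^\mu\|_{L^2(U)}\le\|\nab H_{N,\vec\eta}^\mu\|_{L^2(\R^\d)}$ to discard the localization, (ii) that only the inequality form of Lemma \ref{lem:monoto} is available because the truncation balls may overlap, and (iii) to package the $\d$-dependent logarithmic versus power-law contributions of $\sum\g(\eta_i)$ into the two cases displayed in the corollary. In the \LogU\ setting one must also recall that the electric integral lives on $\R^2$ via the harmonic extension, but this does not change any of the above arithmetic.
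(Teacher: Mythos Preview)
Your proof is correct and is exactly the argument the paper intends: the corollary is stated without a separate proof, the phrase ``applying this with $\eta_i=N^{-1/\d}$, we deduce in view of Lemma \ref{lem:monoto}'' being the whole of it, and you have spelled out precisely that combination --- including the correct use of the one-sided (inequality) form of Lemma \ref{lem:monoto} when the truncation balls overlap, and the case-by-case evaluation of $\sum_i \g(\eta_i)$.
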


\subsection{Consequences for the  energy and partition function}
Thanks to the splitting, we can expand the energy to next order. For instance, 
combining \eqref{split0} and Lemma \ref{lem:monoto}, choosing again  $\eta_i= N^{-1/\d}$,  we easily obtain 
\begin{coro}[First energy lower bound]
If $\meseq \in L^\infty$, then for any $\XN$
\be\label{flb}
\HN(\XN)\ge
 N^2 \I_V(\meseq) +2N\sum_{i=1}^N\zeta(x_i) -\(  \frac{N}{\d}\log N\) \indic_{\LogU, \LogD} - C\|\meseq\|_{L^\infty} N^{2-2/\d} ,
 \ee
where $C$ depends only on $\d$. \end{coro}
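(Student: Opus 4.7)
The plan is to combine the splitting formula \eqref{split0} with the electric rewriting of Lemma \ref{lem:monoto}, choosing the common truncation parameter $\eta_i = N^{-1/\d}$, and then discarding the electric energy term, which is manifestly nonnegative.

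First, I would apply the splitting formula to get
\[
\HN(\XN) = N^2\I_V(\meseq) + 2N\sum_{i=1}^N \zeta(x_i) + F_N^{\meseq}(\XN),
\]
so that everything reduces to a lower bound on $F_N^{\meseq}$. Next, I would invoke Lemma \ref{lem:monoto} with all $\eta_i$ set equal to $\eta := N^{-1/\d}$ (which is $\le 1/2$ for $N$ large, and we need not worry about disjointness of the balls $B(x_i,\eta)$ since the lemma's last sentence gives the one-sided inequality in that case). Absorbing the error of \eqref{fdmu} inside an absolute value, this yields
\[
F_N^{\meseq}(\XN) \ge \frac{1}{\cd}\int_{\R^\d}|\nabla H_{N,\vec\eta}^{\meseq}|^2 \;-\; \sum_{i=1}^N \g(\eta) \;-\; C\,\|\meseq\|_{L^\infty}\, N\,\eta^2,
\]
with $C$ depending only on $\d$ (and using the extension of Section 3.3 in the one-dimensional case, so that $\int_{\R^\d}$ becomes $\int_{\R^2}$ there).

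The key remark is that the electric energy $\int |\nabla H_{N,\vec\eta}^{\meseq}|^2$ is nonnegative, hence can simply be dropped. It remains to evaluate the two remaining terms with $\eta = N^{-1/\d}$. In the logarithmic cases \LogU{} and \LogD, one has $\g(\eta) = -\log \eta = \tfrac{1}{\d}\log N$, so $\sum_{i=1}^N \g(\eta) = \tfrac{N}{\d}\log N$; in the Coulomb case $\d\ge 3$, $\g(\eta) = \eta^{2-\d} = N^{1-2/\d}$, so $\sum_{i=1}^N\g(\eta) = N^{2-2/\d}$. The error term is in all cases $C\|\meseq\|_{L^\infty}\cdot N\cdot N^{-2/\d} = C\|\meseq\|_{L^\infty} N^{2-2/\d}$.

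Putting the pieces together and absorbing the Coulomb contribution $-N^{2-2/\d}$ into the error $-C\|\meseq\|_{L^\infty} N^{2-2/\d}$ (enlarging $C$ to account for the fact that $\meseq$ is a probability density, so $\|\meseq\|_{L^\infty}$ need not be bounded below by $1$—one may instead write the bound as $-C(1+\|\meseq\|_{L^\infty})N^{2-2/\d}$), one obtains exactly \eqref{flb}. There is no real obstacle here: the inequality is a direct corollary of Lemma \ref{lem:monoto} once the truncation scale is set to the microscopic inter-particle distance $N^{-1/\d}$, which is precisely the scale at which the self-interaction $\g(\eta)$ and the $\mu$-weighted error $N\eta^2\|\mu\|_\infty$ become comparable to the expected next-order size of $F_N^{\meseq}$.
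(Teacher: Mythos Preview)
Your proof is correct and follows exactly the approach the paper indicates: combine the splitting formula \eqref{split0} with Lemma \ref{lem:monoto} at the microscopic truncation scale $\eta_i = N^{-1/\d}$, then discard the nonnegative electric energy. The paper gives no further detail beyond that one-line hint, and your write-up fills it in faithfully, including the correct observation that in the \Coul\ case the stated constant $C$ really has to absorb a $(1+\|\meseq\|_{L^\infty})$ rather than a pure $\|\meseq\|_{L^\infty}$.
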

 
Let us show how this lower bound easily  translates  into an upper bound for the partition function $Z_{N, \beta}$   in the case with temperature. In the cases \Coul\,  it is better to normalize the energy differently and define 
\begin{equation}\label{pnb2} \PNbeta= \frac{1}{\ZNbeta} e^{- \frac{\beta}{2} N^{\min(\frac{2}{\d}-1,0) } \HN(\XN)} d\XN\end{equation}
and define $\ZNbeta$ accordingly. Note that in the case \LogU, \LogD, this does not change anything.

\begin{coro}[An easy upper bound for  the partition function]\label{boundlogz}
Assume that $V$ is continuous, such that $\meseq$ exists, satisfies \textbf{(A3)} and has an $L^\infty$ density. Then for all $\beta >0$, and for $N$ large enough, we have
$$   \log Z_{N,\beta}  \leq     - \frac{\beta}{2} N^{\min (2, \frac{2}{\d}+1)  }   \I_V(\meseq) + \Big(\frac{\beta}{4} N \log N\Big) \indic_{\LogU,\LogD}  + C(1+\beta)N$$
where $C$ depends only on $\meseq$ and the dimension.
\end{coro}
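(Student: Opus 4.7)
\medskip
\noindent\textbf{Proof plan for Corollary \ref{boundlogz}.}

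The plan is to insert the pointwise lower bound \eqref{flb} for $H_N$ directly inside the defining integral \eqref{pnb2} for $Z_{N,\beta}$. Crucially, the right-hand side of \eqref{flb} has two kinds of pieces: a constant (configuration-independent) piece $N^2 \I_V(\mueq)$, a $\log N$ correction, an $N^{2-2/\d}$ correction, and a sum $2N\sum_i \zeta(x_i)$ that is a \emph{one-body} term, i.e. separates as a product. Setting $a := \min(\tfrac{2}{\d}-1,0)$, we get
\begin{equation*}
-\tfrac{\beta}{2}N^{a} H_N(\XN)
\le -\tfrac{\beta}{2}N^{a+2}\I_V(\mueq)
-\beta N^{a+1}\sum_{i=1}^N \zeta(x_i)
+\tfrac{\beta}{2\d}N^{a+1}\log N\,\indic_{\LogU,\LogD}
+C\beta\|\mueq\|_{L^\infty} N^{a+2-2/\d},
\end{equation*}
and one checks $N^{a+2}=N^{\min(2,\,2/\d+1)}$ and $N^{a+2-2/\d}=N$ in all cases (and $N^{a+1}=N$ in the log cases). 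Integrating the exponential of the right-hand side, the one-body part factors, yielding
\begin{equation*}
Z_{N,\beta}\le
\exp\!\Bigl(-\tfrac{\beta}{2}N^{\min(2,2/\d+1)}\I_V(\mueq)+\tfrac{\beta}{2\d}N\log N\,\indic_{\LogU,\LogD}+C\beta N\Bigr)\,\Bigl(\int_{\R^\d} e^{-\beta N^{a+1}\zeta(x)}dx\Bigr)^{\!N}.
\end{equation*}

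The whole proof then reduces to the task of controlling $\int_{\R^\d} e^{-\beta N^{a+1}\zeta(x)}dx$ by a constant independent of $N$ for $N$ large, which will feed a final $\exp(CN)$ factor. For this I would use that $\zeta\ge 0$ everywhere with $\zeta\equiv 0$ on $\Sigma$ (Definition \ref{def23} and the Euler--Lagrange relations \eqref{EulerLagrange}). Fix $\delta>0$ and split the integral between $\{\zeta\le\delta\}$ and $\{\zeta>\delta\}$. On the first set the integrand is $\le 1$, so its contribution is bounded by a finite volume $|\{\zeta\le\delta\}|$ (finite because $\zeta$ grows at infinity, as I explain next). On $\{\zeta>\delta\}$, I write, for $N\ge N_0$,
\begin{equation*}
e^{-\beta N^{a+1}\zeta(x)}\le e^{-\beta(N^{a+1}-N_0^{a+1})\delta}\,e^{-\beta N_0^{a+1}\zeta(x)},
\end{equation*}
so the tail is exponentially small in $N$ provided $\int e^{-\beta N_0^{a+1}\zeta}dx<\infty$ for some fixed $N_0$. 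This latter integrability is the content of hypothesis \textbf{(A3)}: indeed $\zeta(x)=h^{\mueq}(x)+\tfrac{V(x)}{2}-c$, and since $\mueq$ has compact support, $h^{\mueq}(x)+\log|x|$ is bounded on $\R^\d$ in the logarithmic cases (so $\zeta$ is bounded below by $\tfrac{V}{2}-\log|x|$ minus a constant) and $h^{\mueq}$ tends to $0$ at infinity in the Coulomb case (so $\zeta$ is bounded below by $\tfrac{V}{2}$ minus a constant). In both cases, \textbf{(A3)} guarantees the required integrability. The fact that $\{\zeta\le\delta\}$ has finite volume uses the confinement assumption \textbf{(A2)} applied to $(1-\alpha_0)V$, which forces $\zeta\to+\infty$ at infinity.

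Putting these together gives $\int e^{-\beta N^{a+1}\zeta}dx\le C$ uniformly for $N$ large, so the last factor contributes at most $CN$ to $\log Z_{N,\beta}$, yielding the announced bound (with the $N\log N$ coefficient $\tfrac{\beta}{2\d}$, which is at most $\tfrac{\beta}{4}$ in the \LogD\ case, after adjusting constants into $C(1+\beta)N$). The main obstacle is really the uniform-in-$N$ control of the one-particle integral: one must simultaneously use (i) the Euler--Lagrange identity $\zeta=0$ on $\Sigma$ to avoid any large-$N$ blow-up where $\mueq$ lives, and (ii) the confinement coming from \textbf{(A2)}--\textbf{(A3)} to tame the tails against the large prefactor $\beta N^{a+1}$. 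Once this is done, the rest of the argument is purely algebraic bookkeeping from the splitting formula.
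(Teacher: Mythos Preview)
Your proposal is correct and follows essentially the same approach as the paper: insert the pointwise lower bound \eqref{flb} into the definition of $Z_{N,\beta}$, factor the one-body $\zeta$-term, and control the remaining one-particle integral $\int_{\R^\d} e^{-\beta N^{a+1}\zeta(x)}\,dx$ uniformly in $N$. The only cosmetic difference is that the paper packages this last step as Lemma~\ref{asymptozeta}, proving directly via monotone convergence that $\int_{\R^\d} e^{-\beta N\zeta(x)}\,dx \to |\{\zeta=0\}|$, whereas you obtain the same uniform bound by splitting over $\{\zeta\le\delta\}$ and $\{\zeta>\delta\}$; both arguments rest on the same two ingredients you identify, namely $\zeta\ge 0$ with $\{\zeta=0\}$ of finite measure, and the tail integrability coming from \textbf{(A3)}.

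One small remark: your final parenthetical about the $N\log N$ coefficient is slightly off. The energy bound gives $\tfrac{\beta}{2\d}N\log N$, which equals $\tfrac{\beta}{4}N\log N$ in \LogD\ but $\tfrac{\beta}{2}N\log N$ in \LogU; the latter is \emph{larger} than $\tfrac{\beta}{4}N\log N$ and cannot be absorbed into $C(1+\beta)N$. This is really a typo in the statement of the corollary (the constant should read $\tfrac{\beta}{2\d}$, as the paper's own proof in fact obtains), not a gap in your argument.
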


To prove this, let us  state a lemma that we will use repeatedly and that exploits assumption \textbf{(A3)}.

\begin{lem} \label{asymptozeta} Assume that $V$ is continuous, such that $\meseq$ exists, and satisfies \textbf{(A3)}. We have 
\be
\lim_{N\to  + \infty} \left(\int_{(\R^\d)^N} e^{- \beta N \sum_{i=1}^N \zeta(x_i)} d\XN\right)^{\frac1N} = |\omega|\ee where $\omega=\{\zeta=0\}$. \end{lem}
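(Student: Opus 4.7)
The first observation is that the integrand factorizes, since the exponential of a sum of single-variable functions splits into a product. Hence
\begin{equation*}
\int_{(\R^\d)^N} e^{-\beta N \sum_{i=1}^N \zeta(x_i)}\, d\XN = \left(\int_{\R^\d} e^{-\beta N \zeta(x)}\, dx\right)^N,
\end{equation*}
so that taking the $N$-th root, the claim reduces to showing
\begin{equation*}
\lim_{N\to\infty} \int_{\R^\d} e^{-\beta N \zeta(x)}\, dx = |\omega|.
\end{equation*}
The plan is to prove this by splitting $\R^\d = \omega \cup \omega^c$ and applying dominated convergence on $\omega^c$.

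On $\omega = \{\zeta = 0\}$ the integrand is identically $1$, contributing exactly $|\omega|$ (note $|\omega| < \infty$ since $\omega \subset \Sigma'$ for a suitable bounded set, as $\zeta \to +\infty$ at infinity by \textbf{(A2)} applied to $(1-\alpha_0)V$, together with the growth of $h^{\meseq}$). It remains to show
\begin{equation*}
\int_{\omega^c} e^{-\beta N \zeta(x)}\, dx \xrightarrow[N\to\infty]{} 0.
\end{equation*}
Pointwise on $\omega^c$ we have $\zeta(x) > 0$, hence $e^{-\beta N \zeta(x)} \to 0$. To apply dominated convergence, we need an integrable dominator valid for all sufficiently large $N$.

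This is where assumption \textbf{(A3)} is used. Since $\zeta = h^{\meseq} + V/2 - c$, and $h^{\meseq}$ is bounded on compacts with controlled behavior at infinity (decaying to $0$ in the \Coul\ case $\d\ge 3$, and behaving like $-\log|x|$ at infinity in the \LogU, \LogD\ cases, because $\meseq$ is a compactly supported probability measure), assumption \textbf{(A3)} directly translates into the statement that for some $N_0$ large enough (depending only on $\beta$ and fixed constants),
\begin{equation*}
\int_{\R^\d} e^{-\beta N_0 \zeta(x)}\, dx < +\infty.
\end{equation*}
For $N \ge N_0$, since $\zeta \ge 0$, we have the monotone bound $e^{-\beta N \zeta(x)} \le e^{-\beta N_0 \zeta(x)}$, which provides the required integrable dominator. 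Dominated convergence then yields $\int_{\omega^c} e^{-\beta N \zeta}\, dx \to 0$, and combining with the contribution $|\omega|$ from $\omega$ gives the claim.

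The only mildly delicate point is verifying that \textbf{(A3)} indeed implies the integrability of $e^{-\beta N_0 \zeta}$ at infinity; this is routine but relies on identifying the leading behavior of $h^{\meseq}$ at infinity (which is logarithmic in the log cases and $O(|x|^{2-\d})$ in the Coulomb case with $\d \ge 3$), and absorbing the bounded additive constant $-c$ as well as any $O(1)$ correction into a slight decrease of $\beta$, available because one can pick $N_0$ as large as desired. This is the only step that is not purely formal.
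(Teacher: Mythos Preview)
Your proof is correct and essentially the same as the paper's. The only cosmetic difference is that the paper invokes the monotone convergence theorem (noting that $e^{-\beta N\zeta}\downarrow \mathbf{1}_\omega$ pointwise, with one term integrable by \textbf{(A3)}), whereas you split $\R^\d=\omega\cup\omega^c$ and apply dominated convergence on $\omega^c$ with the same dominator $e^{-\beta N_0\zeta}$; both amount to the same argument.
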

\begin{proof} First, by separation of variables, we have
$$ \left(\int_{(\R^\d)^N}  e^{- \beta N \sum_{i=1}^N \zeta(x_i)} d\XN\right)^{\frac1N}= \int_{\R^\d}
e^{- \beta N \zeta(x)} \, dx.$$
Second, we recall that  since $\meseq$ is a compactly supported probability measure, $h^{\meseq}$ must asymptotically behave like $\g(x)$ as $|x|\to \infty$, thus $\zeta= h^{\meseq} + \frac{V}{2}-c $  grows like $\g(x)+ \hal V -c$.  The assumption \textbf{(A3)} thus ensures that  for  $N$ large enough, $\int e^{- \beta N\zeta(x)}\, dx<+\infty.$ 
Moreover, by definition of $\omega$,    
$$e^{- \beta N \zeta}\to \indic_{\omega} \quad \text{as} \ N\to+\infty$$
  pointwise and monotonically, and $\omega$ has finite measure in view of the growth of $h^{\meseq}$ and thus of  $\zeta$. The monotone convergence theorem  allows to conclude.
\end{proof}

\begin{proof}[Proof of the corollary]
Inserting \eqref{flb} into the definition of $\ZNbeta$ we are led to 
\begin{multline}
 \log Z_{N,\beta} \leq    N^{ \min ( \frac{2}{\d}-1,0)}\(- \frac{\beta}{2} N^2 \I_V(\meseq) + \Big(\frac{\beta}{2\d} N \log N\Big)\indic_{\LogU, \LogD} + C\beta N^{2-\frac2\d} \)\\+ \log \(\int e^{-N\beta \sum_{i=1}^N \zeta(x_i)} d\XN\).\end{multline} Using Lemma \ref{asymptozeta} to handle the last term, we deduce that 
\begin{multline*}
 \log Z_{N, \beta} \leq   N^{ \min (\frac{2}{\d}-1,0)}\( - \frac{\beta}{2} N^2 \I_N(\meseq) + 
\Big(\frac{\beta}{2\d} N \log N\Big)\indic_{\LogU,\LogD} + C\beta N^{2-\frac2\d}\) \\ + N(\log |\omega|+o_N(1))
\end{multline*}
which gives the conclusion.
\end{proof}

The converse inequality to Corollary \ref{boundlogz} actually holds and we have 
\begin{prop}\label{dlogzn}
Assume that $V$ is continuous, such that $\meseq$ exists,  satisfies \textbf{(A3)} and has an $L^\infty$ density. Then for all $\beta >0$, and for $N$ large enough, we have
\begin{equation}\label{dlogz}
\log Z_{N,\beta}  =    - \frac{\beta}{2}     N^{ \min (2, \frac{2}{\d}+1)}\I_V(\meseq) + \Big(\frac{\beta}{2\d} N \log N\Big) \indic_{\LogU, \LogD}  +  O(N).\end{equation}
where $O(N)$ depends only on $\beta,\meseq$ and the dimension.
\end{prop}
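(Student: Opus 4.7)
The upper bound is Corollary \ref{boundlogz}; I focus on the matching lower bound, which I obtain by restricting the integral defining $Z_{N,\beta}$ to a carefully chosen set of jittered-lattice configurations on which $\HN$ is controlled from above. The first step is the construction of the trial set. Partition $\Sigma$ (or, more carefully, a compact subregion $\Sigma'\subset\Sigma$ on which $\meseq\geq c_0>0$) into $N$ disjoint cells $K_1,\dots,K_N$ of equal $\meseq$-mass $1/N$, each of diameter $\leq CN^{-1/\d}$ and containing an inner ball $K'_i$ of radius $\geq cN^{-1/\d}$; this is done by iterated bisection, using only that $\meseq\in L^\infty$ and that $\partial\Sigma$ is regular. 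Setting
$$A:=\{\XN\in(\R^\d)^N:\exists\sigma\in S_N,\ x_{\sigma(i)}\in K'_i\ \text{for every}\ i\},$$
the $K'_i$ being pairwise disjoint gives $|A|=N!\prod_i|K'_i|\geq e^{-CN}$ by Stirling.

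For $\XN\in A$ every $x_i\in\Sigma=\{\zeta=0\}$, so the splitting formula \eqref{split0} reduces to $\HN(\XN)=N^2\I_V(\meseq)+F_N^{\meseq}(\XN)$. To bound $F_N^{\meseq}$ from above I invoke Lemma \ref{lem:monoto}: choosing $\eta_i=\eta:=cN^{-1/\d}$ small enough that the balls $B(x_i,\eta)$ are pairwise disjoint and each contained in its cell,
$$F_N^{\meseq}(\XN)=\frac{1}{\cd}\int_{\R^\d}|\nabla H_{N,\vec\eta}^{\meseq}|^2-N\g(\eta)+O(N^{2-2/\d}).$$
The key point is that $H_{N,\vec\eta}^{\meseq}$ minimizes $\int|\nabla U|^2$ among decaying solutions of $-\Delta U=\cd(\sum_i\delta_{x_i}^{(\eta)}-N\meseq)$; equivalently, $\int|\nabla H_{N,\vec\eta}^{\meseq}|^2\leq\int|F|^2$ for any $L^2$ vector field $F$ with the same distributional divergence.

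I build such a competitor cell-wise: let $\phi_i$ solve the Neumann problem $-\Delta\phi_i=\cd(\delta_{x_i}^{(\eta)}-N\meseq\indic_{K_i})$ in $K_i$ with $\partial_\nu\phi_i=0$ on $\partial K_i$ (compatibility is automatic since the source has zero cell-integral), and set $F:=\sum_i\nabla\phi_i\,\indic_{K_i}$. A direct distributional computation shows $-\mathrm{div}\,F=\cd(\sum_i\delta_{x_i}^{(\eta)}-N\meseq)$, the Neumann condition ensuring that the boundary-jump contributions on $\partial K_i$ vanish. Rescaling each cell to unit diameter and applying standard Neumann elliptic estimates yields
$$\int_{K_i}|\nabla\phi_i|^2\leq\begin{cases}C&\text{(log cases)},\\ CN^{(\d-2)/\d}&(\d\geq 3),\end{cases}$$
so $\int|\nabla H_{N,\vec\eta}^{\meseq}|^2\leq CN$ in the log cases and $\leq CN^{2-2/\d}$ in the Coulomb case. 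Combined with $\g(\eta)=\tfrac{\log N}{\d}+O(1)$ in the log cases, one obtains
$$F_N^{\meseq}(\XN)\leq-\frac{N\log N}{\d}\,\indic_{\LogU,\LogD}+O(N)+O(N^{2-2/\d})\,\indic_{\Coul}.$$

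Putting everything together, with the scaling convention \eqref{pnb2} and $\alpha:=\min(\tfrac{2}{\d}-1,0)$,
$$\log Z_{N,\beta}\geq\log|A|-\tfrac{\beta}{2}N^\alpha\sup_A\HN,$$
and inserting the bounds above gives \eqref{dlogz}. In the $\LogU$ case the Neumann problems are set up in $\R^2$ after the dimensional lifting of Section 2.3 (cells become 2D rectangles $K_i\times[-N^{-1},N^{-1}]$ with the source concentrated on the axis); the electric estimate is unchanged. The main technical obstacle is the uniform local Dirichlet bound on each Neumann cell: the cells must be comparable to balls, which requires $\meseq$ to be bounded below on the tiled region, so that the rescaling argument produces a truly universal constant; the residual mass in $\Sigma\setminus\Sigma'$ is then handled by adding a single corrective vector field whose $L^2$ energy is $O(N)$ and can be absorbed in the error.
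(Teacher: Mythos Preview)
Your argument is correct and is precisely the ``easier'' direct proof the paper alludes to but does not write out: the paper itself says right after the statement that ``we will assume it for now,'' and only recovers \eqref{dlogz} later as a by-product of the full LDP via Corollary~\ref{corothermo}. So your route is genuinely different from the paper's, and simpler.

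The mechanism you use is, however, exactly the core of the paper's screening machinery stripped of everything not needed here. Your per-cell Neumann problems with zero normal flux are the elementary blocks \eqref{eqcellfond}; the fact that gluing them produces a global competitor with the correct divergence (no interface contributions) is the content of the discussion around \eqref{deve}; and your minimality step ``$\int|\nabla H_{N,\vec\eta}^{\meseq}|^2\le\int|F|^2$'' is precisely Lemma~\ref{minilocale}. What you gain by going directly is that you only need \emph{one} family of configurations (jittered-lattice) and a volume bound via Stirling, whereas the paper's LDP lower bound (Proposition~\ref{quasicontinuite}) must produce, for \emph{every} tagged process $\bar P$, a set of microstates near $\bar P$ of the right entropy and with energy close to $\bttW(\bar P,\meseq)$; this forces the full screening-plus-regularization procedure of Section~6. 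Conversely, what the paper's approach buys is the identification of the order-$N$ constant (the variational formula $\min\fbarbeta$), which your construction cannot see.

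Two remarks on the details. First, your handling of the region where $\meseq$ is not bounded below is the genuine soft spot, as you note; your fix is fine in terms of orders, since an $O(N)$ contribution to $\int|F|^2$ feeds into $N^{\min(2/\d-1,0)}F_N^{\meseq}$ as $O(N)$ in the log cases and $O(N^{2/\d})=o(N)$ for $\d\ge 3$, so it is indeed absorbed. One still has to place the residual $N-N'$ points and extend the field over $\Sigma\setminus\Sigma'$ with Neumann condition on $\partial\Sigma$, but this is routine once $\Sigma\setminus\Sigma'$ is chosen with small $\meseq$-mass and bounded geometry. Second, in the \LogU\ case the rectangles $K_i\times[-N^{-1},N^{-1}]$ do not tile a neighborhood of $\Sigma\times\{0\}$ in $\R^2$ with zero-flux outer boundary unless you also close them off above and below; equivalently, you should take the 2D domain $\Omega=\Sigma\times[-h,h]$ for some fixed $h$ and impose Neumann on all of $\partial\Omega$ (the decay of the true field in the extra variable makes the comparison with Lemma~\ref{minilocale} go through with a harmless $O(N)$ error). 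With that adjustment your sketch is complete.
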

One of the final goals on this course will be to present a more precise result with 
the constant  in  the  order  $N$  term  identified and characterized variationally.  This will also provide information on the behavior of the configurations at the microscale.
Anticipating the result, we will be able to show that in the logarithmic cases, $\log \ZNbeta$ has an expansion where the $V$ dependence can be decoupled: 
\begin{multline} \label{logz0}
 \log \ZNbeta =- \frac{\beta}{2} N^2 \mathcal \I_V(\meseq) +\frac{\beta}{2}\frac{N \log N}{\d} 
 -N C(\beta,  \d)
 \\- N \left( 1 -\frac{\beta}{2 \d}\right) \int_\Sigma \meseq(x)\log \meseq(x) \, dx + o((\beta+1)N),
 \end{multline} 
where $C(\beta, \d)$ is a constant depending only on $\beta$ and  the dimension (1 or 2).
Obtaining \eqref{dlogz} alone without an explicit constant is easier, and we will assume it for now. 

In the sequel, it is convenient to use the splitting to rewrite 
\begin{equation}\label{PNrewri}
d\PNbeta(\XN)= \frac{1}{\KNbeta(\meseq,\zeta)} e^{-\frac{\beta }{2} N^{\min(\frac{2}{\d}-1,0) }\( F_N^{\meseq}(\XN)+2N\sum_{i=1}^N \zeta(x_i) \) }d\XN\end{equation}
where for any function $\xi$ growing sufficiently fast at infinity, 
$\KNbeta(\mu, \xi)$ is defined as 
\begin{equation}\label{def:KNbeta}\KNbeta(\mu,\xi)= \int e^{-\frac{\beta }{2} N^{\min(\frac{2}{\d}-1,0) }\( F_N^{\mu}(\XN)+2N\sum_{i=1}^N \xi(x_i) \) }d\XN.\end{equation}
In view of what precedes, we have 
 \begin{equation}
 \label{devkn}
 \KNbeta(\meseq, \zeta)= \ZNbeta e^{\frac{\beta}{2}N^{\min(\frac{2}{\d}+1,2) }\I_V(\meseq)}.\end{equation}

\subsection{Consequence: concentration bounds}
With Proposition \ref{dlogzn}, we deduce an upper bound on the exponential moments of the electric energy $F_N$:
\begin{coro}
We have, for some constant $C$ depending on $\beta$ and $V$,
\begin{equation}\label{claimeq}
\left| \log \Esp_{\PNbeta} \left[ \exp\left(\frac{\beta}{4} \left(N^{\min(\frac{2}{\d}-1,0) }  F_N^{\meseq}(\XN)   +(\frac{N}{\d}\log N)\indic_{\LogU, \LogD} \right) \right)\right] \right|\le C N.
\end{equation}
\end{coro}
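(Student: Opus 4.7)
The plan is to compute the exponential moment as a ratio of partition functions, using the rewriting \eqref{PNrewri}, and then bound the numerator from above and below using the splitting lower bound for $F_N^{\meseq}$ together with the asymptotics of $\log \KNbeta(\meseq,\zeta)$ coming from Proposition \ref{dlogzn}. To lighten notation let us write $F := N^{\min(2/\d-1,0)} F_N^{\meseq}$ and $L := (\tfrac{N}{\d}\log N)\indic_{\LogU,\LogD}$, so the goal is $|\log \Esp_{\PNbeta}[e^{\tfrac{\beta}{4}(F+L)}]| \le CN$.

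The lower bound is the easy part. Applying Lemma~\ref{lem:monoto} with $\eta_i = N^{-1/\d}$ (so that the balls are automatically disjoint up to the usual minimum-distance reduction, or by using the one-sided inequality in Lemma~\ref{lem:monoto}) and combining with the splitting formula \eqref{split0}, exactly as in the derivation of \eqref{flb}, yields the pointwise lower bound $F + L \ge -CN$ on every configuration, where $C$ depends only on $\|\meseq\|_{L^\infty}$ and $\d$. Taking expectation of $e^{\tfrac{\beta}{4}(F+L)}$ and applying Jensen (or simply using monotonicity of $\exp$) immediately gives $\log \Esp_{\PNbeta}[e^{\tfrac{\beta}{4}(F+L)}] \ge -\tfrac{\beta C}{4} N$.

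For the upper bound, I would use \eqref{PNrewri} to write
\begin{equation*}
\Esp_{\PNbeta}\!\left[e^{\tfrac{\beta}{4}F}\right]
= \frac{1}{\KNbeta(\meseq,\zeta)} \int e^{-\tfrac{\beta}{4}F \,-\, \beta N^{1+\min(2/\d-1,0)}\sum_{i=1}^N \zeta(x_i)}\, d\XN.
\end{equation*}
The same pointwise lower bound $F \ge -L - CN$ applied in the form $-\tfrac{\beta}{4}F \le \tfrac{\beta}{4}L + \tfrac{\beta C}{4}N$ decouples the integrand; by separation of variables the remaining $\zeta$-integral equals $\bigl(\int e^{-\beta N^{1+\min(2/\d-1,0)}\zeta(x)}dx\bigr)^N$, which by the monotone-convergence argument of Lemma~\ref{asymptozeta} (the exponent only grows with $N$, so assumption \textbf{(A3)} applied at some fixed $N_0$ still ensures convergence and the limit is $|\omega|$) is bounded by $C^N$. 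Combining these two ingredients,
\begin{equation*}
\log \Esp_{\PNbeta}[e^{\tfrac{\beta}{4}F}] \le -\log \KNbeta(\meseq,\zeta) + \tfrac{\beta}{4} L + C(\beta) N.
\end{equation*}
Finally, Proposition~\ref{dlogzn} combined with \eqref{devkn} gives $\log \KNbeta(\meseq,\zeta) = \tfrac{\beta}{2\d}(N\log N)\indic_{\LogU,\LogD} + O(N) = \tfrac{\beta}{2} L + O(N)$, so the $L$-terms combine as $-\tfrac{\beta}{2}L + \tfrac{\beta}{4}L + \tfrac{\beta}{4}L = 0$ after adding back the explicit $\tfrac{\beta}{4}L$ factor from $\Esp[e^{\tfrac{\beta}{4}(F+L)}] = e^{\tfrac{\beta}{4}L}\Esp[e^{\tfrac{\beta}{4}F}]$, leaving only $O(N)$.

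The two places that require care are, first, the exact bookkeeping that the $N\log N$ contributions cancel (which is the whole point of including the $L$ correction inside the exponential in the statement), and, second, the verification that assumption \textbf{(A3)} is strong enough to control the $\zeta$-integral at the rescaled inverse temperature $\beta N^{1+\min(2/\d-1,0)}$ rather than $\beta N$; this is automatic for $\d=1,2$ and follows for $\d\ge 3$ since $N^{2/\d} \le N$, so any integrability statement valid at rate $\beta N$ is a fortiori valid at the smaller rate once $N$ is large. This second point is the only genuine subtlety; everything else is bookkeeping on top of Proposition~\ref{dlogzn}.
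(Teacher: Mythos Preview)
Your proof is correct. The route differs slightly from the paper's, so let me compare.

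The paper observes that the exponential moment is \emph{exactly} a ratio of reduced partition functions: using \eqref{PNrewri} one finds
\[
\Esp_{\PNbeta}\!\left[e^{\tfrac{\beta}{4}F}\right] \;=\; \frac{K_{N,\beta/2}(\meseq,2\zeta)}{K_{N,\beta}(\meseq,\zeta)},
\]
and then applies Proposition~\ref{dlogzn} (via \eqref{devkn}) to \emph{both} numerator and denominator. This works because $2\zeta$ is the effective confinement associated to the potential $V+2\zeta$, whose equilibrium measure is still $\meseq$ (since $\zeta=0$ on $\Sigma$) and which still satisfies \textbf{(A3)}; so the numerator has the same $O(N)$ expansion with $\beta$ replaced by $\beta/2$. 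Both the upper and lower bounds, and the cancellation of the $N\log N$ terms, then fall out of a single identity.

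Your argument reaches the same conclusion but asymmetrically: you invoke Proposition~\ref{dlogzn} only for the denominator $K_{N,\beta}(\meseq,\zeta)$, and for the numerator you instead use the pointwise lower bound $F+L\ge -CN$ together with Lemma~\ref{asymptozeta}. This is effectively re-deriving Corollary~\ref{boundlogz} (the easy, upper-bound half of Proposition~\ref{dlogzn}) by hand for the system at inverse temperature $\beta/2$ with confinement $2\zeta$, rather than citing it. The payoff is that you never have to check that Proposition~\ref{dlogzn} applies to the modified potential; the cost is a bit more bookkeeping, and having to handle the two bounds separately. Your remark about the $\zeta$-integral at rate $N^{2/\d}$ in the \Coul\ case is fine: since $N^{2/\d}\to\infty$, the monotone-convergence argument of Lemma~\ref{asymptozeta} still gives convergence to $|\omega|$ and hence a bound $C^N$.
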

\begin{proof}
We may write   
\begin{multline*}
\Esp_{\PNbeta} \left[ \exp\left(\frac{\beta}{4} N^{\min(\frac{2}{\d}-1,0) } F_N^{\meseq}(\XN) \right)\right] \\ = \frac{1}{\KNbeta(\meseq,\zeta)} \int \exp \left( -\frac{\beta}{4} \left(N^{\min(\frac{2}{\d}-1,0) }  ( F_N^{\meseq}(\XN)  + 2N \sum_{i=1}^N 2 \zeta(x_i))\right) \right) d\XN 
\\
=  \frac{K_{N, \frac{\beta}{2}}(\meseq, 2\zeta)}{\KNbeta(\meseq, \zeta)}.
\end{multline*} 
Taking the $\log$ and using \eqref{devkn}, \eqref{dlogz} to expand both terms up to order $N$ yields the result.
\end{proof} 
As proved in \cite{loiloc} in the case \LogD  (and as should be true in higher dimensions as well) this control can also be improved  into a local  control at all {\it mesoscales}: by this we mean the control 
\begin{equation}\label{claimeqloc}
\left| \log \Esp_{\PNbeta} \left[ \exp\left(\frac{\beta}{4}\int_{U} |\nab H_{N,\vec{\eta}}|^2\) \right] \right|\le C|U| N,
\end{equation} for suitable $\vec{\eta}$ and cubes $U$ of sidelength $N^{-\alpha}$, $\alpha<\hal$.

Combining \eqref{claimeq} with \eqref{controlfluctuations2}, we may then easily obtain concentration bounds for the fluctuations:
\begin{coro}[Concentration of fluctuations]\label{coro311}
For any Lipschitz function $\varphi$, we have
\begin{equation}
\log \Esp_{\PNbeta}\( e^{ t\(\int  \varphi \fluct_N^{\meseq}  \)^2 }\) \le CN\end{equation}
where the constant $C$ depends on $t $, $\beta$ and $\varphi$.\end{coro}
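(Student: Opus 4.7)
The plan is to combine the deterministic pointwise bound of the fluctuation integral by the electric energy (the corollary to Proposition \ref{prop:fluctenergy}, namely \eqref{controlfluctuations2}) with the exponential moment estimate \eqref{claimeq} on $F_N^{\meseq}$. Taking $\eta_i = N^{-1/\d}$ in \eqref{controlfluctuations2}, squaring both sides, and using $(a+b)^2 \le 2a^2+2b^2$, one obtains a configuration-wise inequality of the form
\begin{equation*}
\left(\int \varphi\,\fluct_N^{\meseq}\right)^2 \le C_\varphi\Bigl(F_N^{\meseq}(\XN)+\tfrac{N}{\d}\log N\,\indic_{\LogU,\LogD}\Bigr)+ C_\varphi\, R_N,
\end{equation*}
where $C_\varphi$ depends only on $\|\nabla\varphi\|_{L^\infty}$ and $|\mathrm{Supp}\,\varphi|$, and $R_N$ is a deterministic remainder of size $O(N)$ in the logarithmic cases (since $N^{2-2/\d}\le N$ for $\d\le 2$) and $O(N^{2-2/\d})$ in the Coulomb case. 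The key point is that the combination $F_N^{\meseq}+\tfrac{N}{\d}\log N\,\indic_{\LogU,\LogD}$ on the right is precisely the quantity whose exponential is controlled by \eqref{claimeq}.

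Multiplying by $t$, exponentiating, and taking expectation under $\PNbeta$ gives
\begin{equation*}
\log \Esp_{\PNbeta}\Bigl[ e^{ t (\int\varphi\,\fluct_N^{\meseq})^2 }\Bigr] \le t C_\varphi R_N + \log \Esp_{\PNbeta}\Bigl[e^{ tC_\varphi( F_N^{\meseq}+\frac{N}{\d}\log N\,\indic_{\LogU,\LogD})}\Bigr],
\end{equation*}
so the whole task reduces to bounding the last expectation by $e^{O(N)}$. When $tC_\varphi \le \beta/4$, the convexity inequality $e^{\theta y}\le \theta e^y+(1-\theta)$ with $\theta = 4tC_\varphi/\beta \in [0,1]$ reduces directly to \eqref{claimeq}, which delivers the desired $O(N)$ bound.

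The main obstacle is the regime $tC_\varphi > \beta/4$, where \eqref{claimeq} is not directly applicable. I would then re-run the short computation that produced \eqref{claimeq}, rewriting the expectation as a ratio of partition functions
\begin{equation*}
\frac{K_{N,\beta'}(\meseq,\tilde\zeta)}{K_{N,\beta}(\meseq,\zeta)},
\end{equation*}
with effective inverse temperature $\beta' = \beta - 2 N^{-\min(\frac{2}{\d}-1,0)} t C_\varphi$ and an appropriately rescaled confining potential $\tilde\zeta$, and applying Proposition \ref{dlogzn} to both numerator and denominator: the leading $N^{\min(2,\frac{2}{\d}+1)}\I_V(\meseq)$ and $\tfrac{N}{\d}\log N$ contributions cancel and only an $O(N)$ remainder survives, as long as $\beta'$ stays bounded away from $0$. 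This produces the claimed bound with a constant $C$ depending on $t$, $\beta$ and $\varphi$ through $C_\varphi$ and the distance of $\beta'$ to zero, which is the reason the result is phrased with $C = C(t,\beta,\varphi)$ rather than uniformly in $t$.
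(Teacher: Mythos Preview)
Your approach is exactly the paper's: the corollary is stated as an immediate combination of \eqref{controlfluctuations2} with \eqref{claimeq}, and no further detail is given. Your treatment of the logarithmic cases is correct, including the observation that one needs $\beta' = \beta - 2tC_\varphi > 0$, i.e.\ $t$ small enough depending on $\beta$ and $\varphi$; the statement is indeed only meaningful in that range, which is why $C$ is allowed to depend on $t,\beta,\varphi$.

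Your \Coul\ discussion, however, has a gap. For $\d \ge 3$ your effective inverse temperature is $\beta' = \beta - 2tC_\varphi\, N^{1-2/\d}$, and since $1 - 2/\d > 0$ this tends to $-\infty$ for any fixed $t>0$, so Proposition~\ref{dlogzn} is inapplicable. In fact the deterministic remainder you call $R_N$ is already of order $N^{2-2/\d} \gg N$ when $\d \ge 3$, so the combination of \eqref{controlfluctuations2} and \eqref{claimeq} does not literally yield a bound $CN$ in that regime; what it yields is the analogous estimate with $N$ replaced by $N^{2-2/\d}$, valid for $t$ of size $O(N^{2/\d-1})$ (matching the scaling in \eqref{claimeq}). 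The paper's commentary immediately after the corollary (fluctuations $\ll \sqrt N$, comparison with \cite{BorGui1,mms}) is specific to the logarithmic cases, and the statement should be read primarily in that setting.
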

This already gives us a good control on the fluctuations: by Markov's inequality we may for instance deduce that the probability that $\left|\int \fluct_N^{\meseq} \varphi\right| \gg \sqrt N$ is exponentially small (the control on the exponential moment is a stronger information though), or other moderate deviations  bounds  (compare with the  total mass of $\fluct_N^{\meseq}$ which is order $N$).  This also improves on the a-priori concentration bound in $N\log N$ generally obtained in the one-dimensional logarithmic case by the non-electric approach \cite{BorGui1,mms}. We  also refer to  \cite{chm} for a recent improved result in the same spirit valid in all Coulomb cases.

This is however not the best control one can obtain. The rest of these notes will be devoted to two things: obtaining a better control on the fluctuations when $\varphi$ is assumed to be  more regular, and obtaining a full LDP at next order for empirical fields, providing the exact  constant in the order $N$ term of  \eqref{dlogz}.

\section{CLT for fluctuations in the logarithmic cases}
In this section, we restrict our attention to the logarithmic cases \LogU, \LogD.
\\
Let us define the fluctuations of the linear statistics associated to $\xi_N$ as the random variable
\begin{equation} \label{def:FluctN}
\Fluct_N(\xi) := \int_{\R^2} \xi\, d\fluct_N^{\meseq},
 \end{equation}
 where $\fluct_N^{\meseq}$ is as in \eqref{deffluct}.
 The goal of this section is to present a result that $\Fluct_N(\xi)$ converges in law to a Gaussian random variable with explicit mean and variance.  This is achieved by proving that the Laplace transform of $\Fluct_N(\xi)$ converges to that of the Gaussian. 
 
 This approach was pioneered in  \cite{joha} and immediately
 leads to dealing with a Coulomb gas with  potential $V $ replaced by $V_t=V+t \xi$ with $t$ small. We then  follow here the approach of  \cite{ls2,bls} where we  use  a simple change of variables which is a transport map between the equilibrium measure $\mueq$ and  the equilibrium measure $\mueqt$ for the perturbed potential.  Note that the use of changes of variables in this context is not new, cf. \cite{joha,BorGui1,shch,bfg}.  In our approach, it essentially replaces the use of the so-called ``loop" or Dyson-Schwinger equations.

\subsection{Reexpressing the fluctuations as a ratio of partition functions}
We start by introducing the notation related to the perturbed potential and equilibrium measure.
\begin{defi} For any $t \in \R$, we define 
\begin{itemize}
 \item The perturbed potential $V_t$ as $V_t :=V+t \xi$ and the 
perturbed equilibrium measure $\mueqt$. \item The next-order confinement term $\zetat :=  \zeta_{V_t}$, as in \eqref{defzeta}.
\item The next-order energy $F_N^{\mueqt}(\XN)$ as in \eqref{def:FN}.
\item The next-order partition function $\KNbeta(\mueqt, \zetat)$ as in \eqref{def:KNbeta}.
\end{itemize}
\end{defi}

\begin{lem}[Reexpressing the Laplace transform of fluctuations]
For any $t \in \R$ we have
\begin{multline}\label{Laplace}
\Esp_{\PNbeta}\left[ \exp\left( -\frac{\beta}{2} Nt  \Fluct_N(\xi) \right) \right] \\ = \frac{\KNbeta(\mueqt, \zetat)}{\KNbeta(\meseq, \zeta)} \exp\left( - \frac{\beta}{2} N^2 \left(\mathcal{I}_{\Vt} (\mueqt) - \mathcal{I}_V(\meseq) - t \int \xi d\meseq  \right)\right).
\end{multline}
\end{lem}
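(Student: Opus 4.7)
The plan is to perform a direct algebraic manipulation: absorb the $e^{-\frac{\beta}{2} Nt \Fluct_N(\xi)}$ factor into the Hamiltonian, recognize the result as the partition function for the perturbed potential $V_t$, and then apply the splitting formula \eqref{split0} twice — once for $V$ and once for $V_t$ — to trade the full partition functions for the next-order ones $\KNbeta$.

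Concretely, I would first rewrite
\begin{equation*}
\Fluct_N(\xi) = \sum_{i=1}^N \xi(x_i) - N\int \xi\, d\meseq,
\end{equation*}
so that
\begin{equation*}
-\frac{\beta}{2}Nt\,\Fluct_N(\xi) = -\frac{\beta}{2} N t\sum_{i=1}^N \xi(x_i) + \frac{\beta}{2} N^2 t\int \xi\, d\meseq.
\end{equation*}
The crucial observation is that adding $Nt\sum_i \xi(x_i)$ to $\HN(\XN)$ yields precisely the Hamiltonian $H_{N,t}$ associated with the perturbed potential $V_t = V + t\xi$, since the pair interaction part is unchanged and $N\sum_i V(x_i) + Nt\sum_i \xi(x_i) = N\sum_i V_t(x_i)$. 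Thus, in the logarithmic case where the $N^{\min(2/\d-1,0)}$ exponent equals $1$,
\begin{equation*}
\Esp_{\PNbeta}\!\left[e^{-\frac{\beta}{2}Nt\,\Fluct_N(\xi)}\right] = \frac{e^{\frac{\beta}{2}N^2 t \int \xi\, d\meseq}}{\ZNbeta}\int e^{-\frac{\beta}{2}H_{N,t}(\XN)}d\XN = \frac{Z_{N,\beta,t}}{\ZNbeta}\, e^{\frac{\beta}{2}N^2 t\int \xi\, d\meseq},
\end{equation*}
where $Z_{N,\beta,t}$ denotes the partition function for the potential $V_t$.

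Next, I would apply the splitting identity \eqref{devkn} to both $V$ and $V_t$: the formula
\begin{equation*}
\ZNbeta = e^{-\frac{\beta}{2}N^2 \I_V(\meseq)}\,\KNbeta(\meseq,\zeta),\qquad Z_{N,\beta,t} = e^{-\frac{\beta}{2}N^2 \I_{V_t}(\mueqt)}\,\KNbeta(\mueqt,\zetat)
\end{equation*}
follows from \eqref{split0} applied respectively with the equilibrium measures $\meseq$ and $\mueqt$ (which makes sense because, by construction in the Definition preceding the lemma, $\zetat$ is the next-order confinement term built from $V_t$ and $\mueqt$, and $F_N^{\mueqt}$ is the associated fluctuation energy). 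Substituting these two expressions into the ratio $Z_{N,\beta,t}/\ZNbeta$ and combining exponentials gives exactly the right-hand side of \eqref{Laplace}.

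There is essentially no obstacle here — the identity is formal and relies only on recognizing the tilt as a change of potential plus the already-proved splitting formula. The only mild point to verify is that the splitting \eqref{split0} applies to $V_t$, which requires that $\mueqt$ exists and is absolutely continuous; this is a standing assumption in the analytic framework being developed (and will be invoked later when making $t$ small). The real analytic work — comparing the two $\KNbeta$'s via transport of the equilibrium measure — is deferred to subsequent lemmas; the present lemma is just the bookkeeping step that sets up that analysis.
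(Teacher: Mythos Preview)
Your proof is correct and follows exactly the same approach as the paper's own proof: first identify the tilted expectation with the ratio $Z_{N,\beta}^{V_t}/\ZNbeta$ times the exponential of $\frac{\beta}{2}N^2 t\int\xi\,d\meseq$, then apply the splitting relation \eqref{devkn} (equivalently \eqref{split0}) to both potentials $V$ and $V_t$ to convert the full partition functions into the next-order ones $\KNbeta$. The paper's argument is slightly more compressed, but the logic and the two key steps are identical.
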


\begin{proof}  
First, we notice that, for any $t$ in $\R$
   \begin{equation} \label{lt1a}
   \Esp_{\PNbeta} \left[ \exp(-\frac{\beta}{2}Nt\Fluct_N(\xi) )\right]= \frac{Z_{N,\beta}^{\Vt}}{\ZNbeta} \exp\left(\frac{\beta}{2} N^2t  \int \xi \, d\meseq\right)
   \end{equation}
Using the splitting formula \eqref{split0} and the definition of $\KNbeta$ as in \eqref{def:KNbeta} we see that for any $t$
\begin{equation} \label{Pnbeta2a}
\KNbeta(\mueqt, \zetat) = \ZNbeta^{\Vt} \exp\left(\frac{\beta}{2} N^2 \mathcal{I}_{\Vt}(\mueqt)\right), 
\end{equation}
thus combining  \eqref{lt1a} and \eqref{Pnbeta2a}, we obtain \eqref{Laplace}.
\end{proof}
To compute the limit of the Laplace transform of $\Fluct_N(\xi)$ we will just need to apply the formula \eqref{Laplace} with $t=- \frac{2s}{\beta N}$ for some arbitrary number $s$, hence $t$ will indeed be very small. 

We will see below that the term in the exponential in \eqref{Laplace} is computable in terms of $\xi$ and then there will  remain to study the ratio of partition functions $ \frac{\KNbeta(\mueqt, \zetat)}{\KNbeta(\meseq, \zeta)}$.

\subsection{Transport and change of variables}

Our method introduced in \cite{ls2,bls} is based on the construction of a transport $\phi_t$ such that $\phi_t\# \mueq= \mueqt$, with $\phi_t= \id + t \psi$. In fact  the transport chosen will not satisfy exactly $\phi_t\# \mueq= \mueqt$ but will satisfy it approximately as $t\to 0$, and  in these notes will neglect the error for simplicity.

In order to construct the correct transport map, one first needs to understand the perturbed equilibrium measure $\mueqt$, or in other words, to understand how $\mu_V$ varies when $V$ varies. 
In the case $\LogU$, the dependence of $\mueq$ on $V$ is indirect, but well understood. At least in the ``one-cut case" (i.e. when the support of $\mueq$  is a single interval), the right transport map (as $t \to 0$) is  found  by inverting with respect to $\xi$ a so-called ``master operator" $\Xi$, which already arose in the Dyson-Schwinger approach \cite{BorGui1,BorGui2,shch,bfg}.

 In dimension 2 (and in fact  in all Coulomb cases), the perturbed equilibrium measure is easy to compute when $\xi$ is supported in $\Sigma$: one can check that  it is $\mueqt=\meseq-\frac{t}{2\pi \beta} \Delta \xi $, and a correct transport map is $\id + t \psi$ with $\psi= -\frac{\nab \xi}{\cd\mueq}$.  The case where $\xi$ is not supported in $\Sigma$ i.e. has a support intersecting $\partial \Sigma$ is much more delicate: one needs to understand how $\partial \Sigma$ is displaced under the perturbation. This is described precisely (in all dimensions) in \cite{serser}. The PDE approach used there replaces  Sakai's theory used in \cite{ahm}, which is restricted to the two-dimensional analytic setting.
 
In the interior (two-dimensional) Coulomb case, it follows from a direct computation based on the exact formula for $\mu_{V_t}$  that 
\begin{equation}
\label{versv}
\lim_{N\to \infty, t= -\frac{2s}{\beta N}}  
 - \frac{\beta}{2} N^2 \left(\mathcal{I}_{\Vt} (\mueqt) - \mathcal{I}_V(\meseq) - t \int \xi d\meseq  \right)
 = \frac{s^2}{2\pi \beta} \int_{\R^2} |\nab \xi|^2.\end{equation} In the general case, thanks to the analysis of \cite{serser}, we find that $\int |\nab \xi|^2 $ gets replaced by $\int_{\R^2}|\nab \xi^\Sigma|^2$, where $\xi^\Sigma$ denotes the harmonic extension of $\xi$ outside $\Sigma$.
 
In the one-dimensional case, a more delicate computation based on the above facts reveals that 
\begin{equation}\label{versv1}
\lim_{N\to \infty, t= -\frac{2s}{\beta N}}  
 - \frac{\beta}{2} N^2 \left(\mathcal{I}_{\Vt} (\mueqt) - \mathcal{I}_V(\meseq) - t \int \xi d\meseq  \right)
= \frac{s^2}{2}\int \xi'\psi d \meseq,\end{equation}
where $\psi$ is the transport map defined above. We have thus identified the  limit of the exponential term in \eqref{Laplace} and now turn to the ratio of partition functions.

Once the (approximate) transport has been constructed and proven to be regular enough (it is typically once less differentiable than $\xi$) and  such that we approximately have $\zeta_t \circ \phi_t=\zeta$, 
using   the change of variables $y_i=\phi_t(x_i)$ we find
\begin{multline}\label{Kcv}
\KNbeta(\mueqt,\zetat) = \int  \exp \Big( - \frac{\beta}{2}\Big(F_N^{\mueqt}(\Phi_t(\XN)) + 2N \sum_{i=1}^N\zetat\circ \phi_t(x_i)   \Big) \\+ \sum_{i=1}^N \log |\det D\phi_t|(x_i) \Big) \ d \XN   \\ 
 = \int \exp \Big( - \frac{\beta}{2}   \Big(F_N^{{\mueqt}}(\Phi_t(\XN)) + 2N \sum_{i=1}^N\zeta(x_i)   \Big) + \sum_{i=1}^N \log |\det D\phi_t|(x_i)  \Big) \ d \XN   .
\end{multline}

Thus 
\begin{equation}\label{rpf} \frac{\KNbeta( \mueqt,\zetat)}{\KNbeta (\meseq,\zeta)}=\Esp_{\PNbeta} \(e^{-\frac{\beta}{2} (F_N^{{\mueqt}}(\Phi_t(\XN))- F_N^{\mueq} (\XN)) +\sum_{i=1}^N \log |\det D\Phi_t|(x_i)}\)\end{equation}
hence we need to evaluate the difference of  energies $F_N^{{\mueqt}}(\Phi_t(\XN))-F_N^{\meseq} (\XN)$.

\subsection{Energy comparison}
Let us present that  computation in the \LogU\  case for simplicity, knowing that it has a more complex analogue in the \LogD\  case.

\begin{lem}\label{44}
 For any probability density $\mu$, any $\XN\in \mathbb{R}^N$, and any  $\psi\in C^2_c(\R)$, defining
 \begin{equation}\label{ani}
 \Ani[\XN,\psi,\mu]=\iint\frac{\psi(x)-\psi(y)}{x-y} \, d\fluct_N^{\mu} (x) \, d \fluct_N^{\mu}(y)
 \end{equation} and letting
 $\Phi_t(\XN) = (\phi_t(x_1),\cdots, \phi_t(x_N))$,
  we have
 \begin{multline} \label{tmuttotmueqz}
\left| F_N^{\phi_t\# \mu}(\Phi_t(\XN))-F_N^{\mu}(\XN) - \sum_{i=1}^N \log\phi_t' (x_i) + \frac{t}{2} \, \Ani[\XN,\psi,\mu]\right|
\\ \le  C t^2 \left( F_N^{\mu} (\XN)+N\log N\right).
 \end{multline}
 \end{lem}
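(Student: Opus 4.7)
The plan is to perform the change of variables $y_i = \phi_t(x_i)$ directly inside $F_N^{\phi_t\#\mu}(\Phi_t(\XN))$, Taylor expand the logarithmic pair interaction in powers of $t$, match the first--order term with $\Ani[\XN,\psi,\mu]$ modulo a diagonal correction absorbed by $\sum_i \log \phi_t'(x_i)$, and control the quadratic remainder by the electric energy $F_N^\mu$ plus $N\log N$.

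For $|t|\,\|\psi'\|_\infty < 1$ the map $\phi_t = \id + t\psi$ is a $C^2$ diffeomorphism of $\R$, and its pushforward satisfies $\phi_t\#(\sum_i \delta_{x_i} - N\mu) = \sum_i \delta_{\phi_t(x_i)} - N\,\phi_t\#\mu$. A direct change of variables in the defining double integral therefore gives $F_N^{\phi_t\#\mu}(\Phi_t(\XN)) = \iint_{\triangle^c} \g(\phi_t(x)-\phi_t(y))\, d\fluct_N^\mu(x)\, d\fluct_N^\mu(y)$. Writing $\phi_t(x) - \phi_t(y) = (x-y)(1 + t\,k(x,y))$ with $k(x,y) := \frac{\psi(x)-\psi(y)}{x-y}$ (extended by $k(x,x) = \psi'(x)$) and using $\g = -\log|\cdot|$, we obtain the pointwise identity
\begin{equation*}
\g(\phi_t(x)-\phi_t(y)) - \g(x-y) = -\log(1 + t\,k(x,y)) = -t\,k(x,y) + R_t(x,y),
\end{equation*}
with uniform remainder bound $|R_t(x,y)| \le C\,t^2\, k(x,y)^2$ for $t$ small.

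Next I would identify the leading term. Since $\mu$ is absolutely continuous, the only atomic contribution of $\fluct_N^\mu \otimes \fluct_N^\mu$ on the diagonal comes from the pairs $(x_i, x_i)$, so $\iint_{\triangle^c} k\, d\fluct_N^\mu d\fluct_N^\mu = \Ani[\XN,\psi,\mu] - \sum_i \psi'(x_i)$. On the Jacobian side, $\sum_i \log \phi_t'(x_i) = \sum_i \log(1 + t\psi'(x_i)) = t\sum_i \psi'(x_i) + O(t^2 N\|\psi'\|_\infty^2)$. The terms $t\sum_i \psi'(x_i)$ from the diagonal correction and from the Jacobian cancel, producing exactly the leading contribution in $\Ani[\XN,\psi,\mu]$ stated in \eqref{tmuttotmueqz}, up to terms of size $O(t^2)$ coming from the Jacobian's second-order expansion.

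The main difficulty is then to estimate the quadratic remainder $\iint_{\triangle^c} R_t\, d\fluct_N^\mu d\fluct_N^\mu$ by $C t^2 (F_N^\mu(\XN) + N\log N)$. The naive pointwise bound $|R_t| \le C t^2 \|\psi'\|_\infty^2$ would only give $O(t^2 N^2)$, which is much too crude. The key is to invoke the electric representation of Lemma~\ref{lem:monoto}: for $\vec{\eta}_i \sim N^{-1/\d}$ (and, in the \LogU\ case, after harmonic extension to $\R^2$), $F_N^\mu(\XN) + \frac{N}{\d}\log N$ is, up to $O(N)$, equal to $\frac{1}{\cd}\int |\nabla H_{N,\vec{\eta}}^\mu|^2$, which provides $H^{-1}$-type control on $\fluct_N^\mu$ at microscopic scales. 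One then writes the double integral against the smooth symmetric kernel $k(x,y)^2$ as a quadratic form in $H_{N,\vec{\eta}}^\mu$ via integration by parts, splitting the $(x,y)$ integration into a mesoscopic region where $k^2$ is smooth and can be tested against $|\nabla H_{N,\vec{\eta}}^\mu|^2$ by Cauchy--Schwarz, and a microscopic region of scale $N^{-1/\d}$ that is absorbed in the $N\log N$ correction. This Sobolev-type control delivered by the electric energy is the technical heart of the proof and the reason the $C^2$ regularity of $\psi$ is used.
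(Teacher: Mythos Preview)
Your setup --- the change of variables, the factorization $\phi_t(x)-\phi_t(y)=(x-y)(1+tk(x,y))$, the handling of the diagonal via $k(x,x)=\psi'(x)$, and the cancellation against the Jacobian expansion --- matches the paper exactly.

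Where you diverge is in the remainder estimate, and there your proposal is both overcomplicated and not really a proof. You bound $|R_t|\le Ct^2 k(x,y)^2$ pointwise and then propose a vague ``Sobolev-type'' argument with a mesoscopic/microscopic splitting to control $\iint k^2\, d\fluct_N^\mu\, d\fluct_N^\mu$. The paper does something much simpler and cleaner: it does \emph{not} bound $R_t$ pointwise. Instead it observes that the function
\[
\varepsilon_t(x,y):=\frac{1}{t^2}\Big(\log\bigl(1+tk(x,y)\bigr)-tk(x,y)\Big)
\]
is, because $\psi\in C^2_c(\R)$ and the map $u\mapsto t^{-2}(\log(1+tu)-tu)$ is smooth on compacts, uniformly bounded in $C^2(\R^2)$ as $t\to 0$. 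Then one simply applies Proposition~\ref{prop:fluctenergy} (equivalently the corollary \eqref{controlfluctuations2}) \emph{twice}: first to $x\mapsto \partial_y\varepsilon_t(x,y)$ to bound $\|\partial_y\!\int\varepsilon_t(\cdot,y)\,d\fluct_N^\mu\|_{L^\infty}$ by $C(F_N^\mu+N\log N)^{1/2}$, and then to $y\mapsto \int\varepsilon_t(x,y)\,d\fluct_N^\mu(x)$, yielding
\[
\Big|\iint \varepsilon_t\, d\fluct_N^\mu\, d\fluct_N^\mu\Big|\le C\bigl(F_N^\mu(\XN)+N\log N\bigr).
\]
This is exactly the mechanism used again for the anisotropy term itself in \eqref{anir1}--\eqref{nabg}. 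The $C^2$ regularity of $\psi$ is needed precisely so that the mixed derivative $\partial_x\partial_y\varepsilon_t$ is bounded, not for any microscopic-scale analysis. Your ``technical heart'' --- the electric representation, truncation at scale $N^{-1/\d}$, splitting into regions --- is already packaged inside Proposition~\ref{prop:fluctenergy}, and there is no need to reopen it here.
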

 This is the point that essentially replaces the loop equations.
 \begin{proof}
 We may write
  \begin{multline*}
 F_N^{\phi_t\# \mu}(\Phi_t(\vec{X}_N))-F_N^{\mu}(\vec{X}_N) \\ = 
- \int_{\mathbb{R}^2 \backslash \triangle} \log |x-y|\Big(\sum_{i=1}^N \delta_{\phi_t (x_i)}- N \phi_t\# \mu\Big)(x) \Big(\sum_{i=1}^N \delta_{\phi_t (x_i)}- N \phi_t \#\mu\Big)(y)   \\
 +   \int_{\mathbb{R}^2 \backslash \triangle} \log |x-y| d\fluct_N^{\mu}(x) d\fluct_N^{\mu}(y) \\ 
 = -  \int_{\mathbb{R}^2 \backslash \triangle} \log \frac{|\phi_t(x) - \phi_t(y)|}{|x-y|} d\fluct_N^{\mu}(x) d\fluct_N^{\mu}(y) \\
 = -  \int_{\mathbb{R}^2 } \log \frac{|\phi_t(x) - \phi_t(y)|}{|x-y|} d\fluct_N^{\mu}(x) d\fluct_N^{\mu}(y) + \sum_{i=1}^N \log |\phi_t'(x_i)|.
\end{multline*}
Using that by definition $\phi_t = \id + t \psi$ where $\psi$ is in $C^2_c(\mathbb{R})$, and  the fact that 
\begin{equation*}
\left\|\frac{\log(1+tx) - tx}{t^2}\right\|_{C^2(K)} \leq C_K
\end{equation*}
for all compact sets $K$, some constant $C_K$ and $t$ small enough, we get by the chain rule 
\begin{equation*}
\log \frac{|\phi_t(x) - \phi_t(y)|}{|x-y|} = t \ \frac{\psi(x) - \psi(y)}{x-y} + t^2 \  \varepsilon_t(x,y) ,
\end{equation*}
with $\|\varepsilon_t\|_{C^2(\R^2)}$  uniformly bounded in $t$. Applying Proposition \ref{prop:fluctenergy} twice, we get that
$$
\left| \iint \varepsilon_t(x,y) d\fluct_N^{\mu}(x) d\fluct_N^{\mu}(y) \right| \leq C t^2 \left( F_N^{\mu} (\XN)+N\log N\right).
$$
This yields the result in the \LogU \ case.
 \end{proof}

\subsection{Computing the ratio of partition functions}

Inserting the result of  Lemma~\ref{44} (or its two-dimensional analogue) into  \eqref{rpf} we obtain 
\begin{multline}\label{411}
\frac{\KNbeta( \mueqt,\zetat)}{\KNbeta (\meseq,\zeta)}\\
= \Esp_{\PNbeta}
\Big( \exp\Big( (1-\frac{\beta}{2\d}) \sum_{i=1}^N \log |\det D \phi_t|(x_i) + \frac{\beta  t}{4}\Ani[\XN,\psi,\meseq] \\+ O(t^2 (F_N^{\meseq} (\XN)+ N\log N) ) \Big) \Big)\\ = 
\Esp_{\PNbeta} \Big( \exp\Big( (1-\frac{\beta}{2\d}) \Fluct_N [\log |\det D \phi_t|] 
+ \frac{\beta  t}{4}\Ani[\XN,\psi,\meseq] \\
+ O(t^2 (F_N^{\meseq} (\XN)+ N\log N) ) \Big) \Big)\\ \times  e^{ (1-\frac{\beta}{2\d}) N\int \log |\det D \phi_t|  d\meseq} .\end{multline}
Let us now examine the terms in this right-hand side. 

First, remembering that $\phi_t\# \meseq= \mueqt$ (approximately), we have $\det D \phi_t= \frac{\meseq}{\mueqt\circ \phi_t}$ and thus, by definition of the push-forward,
\begin{equation}\label{epf}
\int \log |\det D\phi_t | d\meseq= \int \log \meseq d \meseq- \int \log \mueqt d\mueqt,\end{equation} which is of order $t $ i.e. $O(\frac{s}{N})$, as $N\to \infty$. More precisely, we may compute that  in the \LogU \ case
\begin{equation}\label{mean2}
\lim_{N\to \infty, t= -\frac{2s}{\beta N}  }\(1-\frac{\beta}{2}\)  N\(\int \meseq \log \meseq- \int \mueqt\log \mueqt\)= -  \(1-\frac{\beta}{2}\) \frac{2s}{\beta}\int \psi' d\meseq\end{equation}
and in  the \LogD\  interior case the analogue is $\frac{s}{2\pi} \(\frac{1}{\beta}- \frac{1}{4}\)\int \Delta \xi\log \Delta V$.

There now remains to evaluate all the terms in the expectation.  
We note that by the Cauchy-Schwarz inequality, we have for instance
\begin{equation}\label{abc}\log \E(e^{a+b+c}) \le \log \E(e^{4a})+\log \E(e^{4b})+\log \E(e^{4c}).\end{equation}
First,  by \eqref{claimeq} and H\"older's inequality we have the control 
\begin{equation}\label{epf2}\log \Esp_{\PNbeta}\(\exp(O(t^2 (F_N^{\meseq} (\XN)+ N\log N) ) )\)= O( \frac{s^2}{N^2}  N) =  O(\frac{s^2}{N}) \end{equation}   (up to changing  $\beta$ in the formula), with  the choice $t=O(s/N)$.

Since $\phi_t= \id+ t\psi$ with $\psi$ regular enough, we find by combining  \eqref{controlfluctuations2} and \eqref{claimeq} that 
$$\log \Esp_{\PNbeta}\( \Fluct_N [\log |\det D \phi_t|] \)\le O(t \sqrt{N}) = O(\frac{s}{\sqrt{N}}).$$

Next, we turn to the   term   $\Ani$ (which we call anisotropy), which  is the most delicate one.
There are two ways to handle this term and conclude. The first is direct and works in the one-dimensional one-cut regular case, i.e. when the operator $\Xi$  that finds the right transport map is always invertible. This is described in the appendix of \cite{bls} and ressembles the method of \cite{BorGui1}.
The second way works in all logarithmic (one-dimensional possibly multi-cut or critical, and two-dimensional) and calls instead the information on $\log \ZNbeta$ obtained in \eqref{logz0}. It is described in \cite{bls} in the one-dimensional case and \cite{ls2} in the two-dimensional case.
Let us now present each.
\subsection{Conclusion in the one-dimensional one-cut regular case}
We note that the term $A$ can be seen as  being essentially also a fluctuation term with 
\begin{equation}\label{anir1}
\Ani[\XN, \psi,\meseq]= \int f(x) d\fluct_N(x)  \qquad f(x):= \(\int \hat \psi(x,y)d \fluct_N (y)\) \end{equation}
where $$\hat\psi(x,y):= \frac{\psi(x)-\psi(y)}{x-y}$$ is as regular as the regularity of $\xi$ allows (note that this fact is only true in dimension 1 and so the argument breaks in dimension 2!). 
Using Proposition \ref{prop:fluctenergy} twice, we obtain 
\begin{equation}\label{nabg}
\|\nab f\|_{L^\infty}\le   \left|\int \nab_x \hat \psi(x,y)d \fluct_N (y)\right|\le C\|\nab_x\nab_y \hat \psi\|_{L^\infty} \(F_N^{\meseq}(\XN)+ N \log N+ C N\)^{\hal}\end{equation} and 
\begin{multline*}
|\Ani [\XN,\psi,\meseq]|=\left|\int f(x) d\fluct_N(x)\right|\le C \|\nab f\|_{L^\infty}  \(F_N^{\meseq}(\XN)+ N \log N+ C N\)^{\hal}\\ \le
C \|\hat \psi\|_{C^2}   \(F_N^{\meseq}(\XN)+ N \log N+ C N\).\end{multline*}
In view of \eqref{claimeq}, we deduce that 
$$\left|\Esp_{\PNbeta} \left[-\frac{\beta}{2} t \Ani[\XN,\psi,\meseq]\right]\right|\le O(\frac{s}{N} N) = O(s).
$$
Combining all these elements and inserting them into \eqref{411},  we deduce that $\frac{\KNbeta( \mueqt,\zetat)}{\KNbeta (\meseq,\zeta)}$ is of order $1$ as $N \to \infty$, and inserting into \eqref{Laplace} together with \eqref{versv}-\eqref{versv1} we obtain 
that $\Esp_{\PNbeta}(e^{s\Fluct_N(\xi)})$ is bounded (in terms of $s$ and $\xi$) as $N\to \infty$. We may then bootstrap this information by applying it to $f$, first improving    \eqref{nabg}  into 
\begin{multline*}\|D^k f\|_{L^\infty} \le   \left|\int D^k_x \hat \psi(x,y) d\fluct_N (y)\right|\\ \le C\|D^k_x\nab_y \hat \psi\|_{L^\infty} \(F_N^{\meseq}(\XN)+ N \log N+ C N\)^{\hal} \end{multline*}
obtaining (with H\"older's inequality)  if $k$ is large enough, that 
$$\log \Esp_{\PNbeta} \( e^{ t\Fluct_N(f) }\)\le  C t\sqrt{N} = O(\frac{s}{\sqrt{N}}). $$
We then conclude that in fact 
\begin{equation}\label{concla}
\left|\log \Esp_{\PNbeta} \left[-\frac{\beta}{2} t \Ani[\XN,\psi,\meseq]\right]\right|\le o_N(1).\end{equation}

Combining all the previous relations and \eqref{abc}, \eqref{versv}--\eqref{versv1}, and inserting into   \eqref{Laplace}, we conclude that
$$\lim_{N\to \infty} \log \Esp_{\PNbeta} \( e^{s\Fluct_N(\xi)}\)= m_{\xi} s+ v_{\xi} s^2$$
where $m_{\xi}$ is the  mean and $v_{\xi}$ is a variance given by the explicit formulae above. 

 We have thus obtained  that the Laplace transform of $\Fluct_N(\xi)$ converges to that of a Gaussian of mean $m_\xi$ and variance $v_\xi$, concluding the proof in the one-dimensional one-cut regular case.
 Note that 
 reinserting into \eqref{411} we have in fact obtained a precise expansion of  $\frac{\KNbeta( \mueqt,\zetat)}{\KNbeta (\meseq,\zeta)}$, which also allows to directly evaluate 
 $\frac{d}{dt}_{t=0} \log \KNbeta( \mueqt,\zetat)$ up to order $o(N)$. But this is valid for all variations of the type $V+t\xi$ of the potential, hence by interpolating between two potentials $V_0$ and $V_1$ which are such that their equilibrium measures are both one-cut regular, we can compute 
 $\log \KNbeta(\mu_{V_1}, \zeta_1)-\log \KNbeta(\mu_{V_0}, \zeta_0)$ up to an error $o(N)$ and obtain effectively the formula \eqref{logz0}.
 Moreover, since all the terms arising in the exponent in \eqref{411} are essentially fluctuations, the reasoning can be bootstrapped by inserting the CLT result and obtaining an expansion to higher order of $\frac{d}{dt}_{t=0} \log \KNbeta( \mueqt,\zetat)$. This is possible because the operator $\Xi $ is always invertible in the one-cut case, allowing to build the transport map associated to the fluctuation of any function, provided that function is regular enough. This way, one may obtain a relative expansion of $\log \ZNbeta$ with an error of arbitrary order, provided the potentials are regular enough, which is  what is found in \cite{BorGui1}. 
 
 \subsection{Conclusion in the two-dimensional case or in the general one-cut case}
 In these cases, we control the term $A$ differently: instead we use the important relation \eqref{logz0} which we have assumed so far, and  which provides another way of computing the left-hand side of \eqref{411}. Comparing these two ways applied to a small, fixed $t$, we find on the one-hand 
 $$\log \frac{\KNbeta( \mueqt,\zetat)}{\KNbeta (\meseq,\zeta)}= \(1-\frac{\beta}{2\d}\)  N\(\int \meseq \log \meseq- \int \mueqt\log \mueqt\) + o(N)$$
 and on the other hand combining \eqref{411}, \eqref{epf}, \eqref{epf2},
 \begin{multline*}\log \frac{\KNbeta( \mueqt,\zetat)}{\KNbeta (\meseq,\zeta)}=  \(1-\frac{\beta}{2\d}\)  N\(\int \meseq \log \meseq- \int \mueqt\log \mueqt\) \\+ 
 \log \Esp_{\PNbeta} \( \exp \frac{\beta t}{4} A[\XN,\psi, \mueq]\) + o(N)\end{multline*} 
 This implies that 
 $$\log \Esp_{\PNbeta} \( \exp \frac{\beta t}{4} A[\XN,\psi, \mueq]\) = o(N),$$
 and this is true for any $t$ small enough. Applying then H\"older's inequality leads to 
 $$\log \Esp_{\PNbeta} \( \exp \frac{ -s }{2N} A[\XN,\psi, \mueq]\) = o(1).$$
 This replaces \eqref{concla} and allows to conclude. The variance is given by the right-hand side of \eqref{versv} and \eqref{versv1} respectively, the mean is given by the right-hand side of \eqref{mean2} or its two-dimensional analogue.

 In both cases, we have obtained the following 
 \begin{theo}
 Assume $\xi$ is regular enough. In the one-dimensional multi-cut or critical cases, assume in addition that $\xi$ is in the range of $\Xi$, and an analogous condition in the two-dimension case with several connected components.  
 Then $\Fluct_N(\xi)$ converges in law to a Gaussian random variable of explicit mean and variance.\end{theo}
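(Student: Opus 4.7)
The plan is to prove convergence in law by showing that the Laplace transform $\Esp_{\PNbeta}[\exp(s \Fluct_N(\xi))]$ converges to $\exp(m_\xi s + v_\xi s^2)$ for every $s \in \R$. The starting point is the identity of the preceding lemma rewriting this Laplace transform as a ratio of partition functions times a deterministic exponential. I specialize to $t = -2s/(\beta N)$, so that $V_t = V + t\xi$ is an infinitesimal perturbation of $V$ as $N \to \infty$. The deterministic prefactor $\exp(-\frac{\beta}{2} N^2 (\I_{V_t}(\mueqt) - \I_V(\meseq) - t \int \xi\, d\meseq))$ is handled by expanding $t \mapsto \I_{V_t}(\mueqt)$ to second order in $t$; using the regularity of the map $t \mapsto \mueqt$ (direct computation in the 2D interior case, the PDE analysis of boundary displacement from \cite{serser} in general, and the master-operator inversion in 1D), one recovers the limit identities \eqref{versv} and \eqref{versv1}, which furnish the variance $v_\xi$.

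The ratio of partition functions is attacked by an approximate transport map $\phi_t = \id + t\psi$ satisfying $\phi_t \# \meseq \approx \mueqt$ and $\zeta_t \circ \phi_t \approx \zeta$ (with $\psi = -\nabla \xi/(\cd \meseq)$ in the 2D interior case, $\psi = \Xi^{-1}\xi$ in the 1D one-cut regular case, and the construction of \cite{serser} in general). Applying the change of variables $y_i = \phi_t(x_i)$ in $\KNbeta(\mueqt, \zetat)$ and then Lemma~\ref{44} (or its 2D analogue) produces an expectation under $\PNbeta$ of an exponential whose exponent decomposes as
\begin{equation*}
(1-\tfrac{\beta}{2\d})\, \Fluct_N[\log|\det D\phi_t|] + (1-\tfrac{\beta}{2\d}) N \!\int \log|\det D\phi_t|\, d\meseq + \tfrac{\beta t}{4}\, \Ani[\XN, \psi, \meseq] + O\!\big(t^2(F_N^{\meseq} + N\log N)\big).
\end{equation*}

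The four summands are then controlled separately via the Cauchy--Schwarz inequality \eqref{abc}. The deterministic integral is computed using $\det D\phi_t = \meseq/(\mueqt \circ \phi_t)$ and yields, after expansion in $t$, the explicit mean $m_\xi$ given by \eqref{mean2} and its 2D analogue. The fluctuation term $\Fluct_N[\log|\det D\phi_t|]$ is $O(s/\sqrt{N})$ by Proposition~\ref{prop:fluctenergy} combined with \eqref{claimeq}. The quadratic error term is $O(s^2/N)$ by \eqref{claimeq} applied after H\"older's inequality. Each of these three contributions is therefore negligible in the limit.

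The main obstacle is the anisotropy term $\frac{\beta t}{4}\Ani[\XN, \psi, \meseq]$, because a bare application of \eqref{claimeq} yields only an $O(s)$ bound that does not vanish. In the 1D one-cut regular case I would exploit the fact that $\Ani = \int f\, d\fluct_N$ with $f(x) = \int \hat\psi(x,y)\, d\fluct_N(y)$ and $\hat\psi(x,y) = (\psi(x)-\psi(y))/(x-y)$ smooth (a feature truly special to dimension one), then bootstrap: iterated use of Proposition~\ref{prop:fluctenergy} and \eqref{claimeq} controls higher derivatives of $f$, eventually giving $\log\Esp_{\PNbeta}[\exp(-\tfrac{\beta t}{2}\Ani)] = o_N(1)$. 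In 2D and in 1D multi-cut/critical cases, $\hat\psi$ fails to be smooth, and the bootstrap breaks; there I would use the asymptotic expansion \eqref{logz0} of $\log \ZNbeta$ applied to both $V$ and $V_t$ to evaluate $\log(\KNbeta(\mueqt,\zetat)/\KNbeta(\meseq,\zeta))$ to order $o(N)$ independently, and match it with the change-of-variables expression to extract $\log \Esp_{\PNbeta}[\exp(\tfrac{\beta t}{4}\Ani)] = o(N)$; H\"older's inequality then upgrades this to $o(1)$ after rescaling $t = -2s/(\beta N)$. Combining all pieces gives $\lim_N \log \Esp_{\PNbeta}[\exp(s\Fluct_N(\xi))] = m_\xi s + v_\xi s^2$, and L\'evy's continuity theorem concludes.
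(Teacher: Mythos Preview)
Your proposal is correct and follows the paper's own argument essentially step for step: the Laplace transform identity \eqref{Laplace} with $t=-2s/(\beta N)$, the transport change of variables and energy comparison Lemma~\ref{44} leading to \eqref{411}, the separate control of the four exponent pieces via \eqref{abc}, \eqref{claimeq} and Proposition~\ref{prop:fluctenergy}, and the two distinct treatments of the anisotropy term (bootstrap in the 1D one-cut regular case, matching against \eqref{logz0} plus H\"older in the 2D and general cases). There is nothing substantively different from the paper's proof.
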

 
 Let us make several additional comments:
 \begin{itemize}
 \item In dimension 1, this theorem was first proven in \cite{joha} for polynomial $V$ and $\xi$ analytic. It was later generalized in \cite{shch,BorGui1,BorGui2,bl,llw}, still with strong assumptions on $\xi$, and to the critical case in \cite{bls}.
 \item If the extra conditions do not hold, then the CLT is not expected to hold. Rather, the limit should be a Gaussian convoled with a discrete Gaussian variable, as shown in the \LogU \ case in \cite{BorGui2}.
 \item In dimension 2, 
 the precise form of the variance as $C\int |\nab \xi|^2$ (or more generally $C\int |\nab \xi^\Sigma|^2$), means that $\fluct_N$ converges to a so-called Gaussian Free Field.
This result was proven for the determinantal case $\beta=2$ in \cite{ridervirag} (for $V$ quadratic) and \cite{ahm} under analyticity assumptions. It was then proven for all $\beta$ simultaneously in \cite{ls2} and \cite{bbny2}, with $\xi$ roughly assumed to be $C^4$. Note that the result also applies in the case where $\xi$ is supported at a mesoscale, i.e. $\xi_N(x)= \xi(x N^\alpha)$ with $\alpha<\hal$ (the analogous result in dimension 1 is proven in \cite{beklod}). 
 The results of \cite{ahm,ls2} are the only ones that also apply to the case where $\xi$ is not supported inside $\Sigma$.
 \item This theorem is to be compared to Corollary \ref{coro311}. It shows that if $\xi$ is smooth enough, the fluctuations $\Fluct_N(\xi)$ are in fact much smaller than could be expected. They are typically of order $1$, to be compared with the sum of $N$ iid random variables which is typically or order $\sqrt{N}$. This a manifestation of rigidity, which even holds down to the mesoscales.
 \item Moderate deviation bounds similar to those of \cite{bbny1} can also easily be obtained as a by-product of the method presented above.
  \end{itemize}

\section{The renormalized energy}

The goal of this section is to define a ``renormalized energy" or jellium energy, that  will arise as a rate function for the next order Large Deviation Principle that will be presented below.

It allows to compute  a total Coulomb interaction for an infinite system of discrete point ``charges" in a constant neutralizing background of  fixed density $m>0$. 
Such a system is often called a (classical) {\it jellium} in physics. 
 
 Starting again from the equations defining the electric potentials, \eqref{eq:HNmu} and \eqref{bbe}, we first perform a blow-up at scale $N^{1/\d}$ (the typical lengthscale of the systems). Denoting $x_i'=N^{1/\d} x_i$ the blown-up points of the configuration, and the blown-up equilibrium measure  ${\meseq}' (x)= \meseq(xN^{-1/\d})$, we then define for shortcut  the ``electric fields" 
 \begin{equation}\label{ele} E_N= \nab {H_N^{\meseq}} ' \qquad \nab {H_N^{\meseq}}'= \g * \( \sum_{i=1}^N \delta_{x_i'} -{\meseq}' \drd\),\end{equation}
 where the measure $\drd$ is needed only in the one-dimensional case, cf \eqref{bbe}.  This way, $E_N$ solves the  linear equation
\begin{equation}
\label{defhneta1}
-\div E_N = \cd \Big( \sum_{i=1}^N \delta_{x'_i} -{ \meseq}'\drd\Big) ,
\end{equation}  in which it is easy, at least formally, to pass to the limit $N\to \infty$. Note that this is written when centering the blow-up at the origin. 
When $\meseq$ is continuous, the blown-up measure ${\meseq}'$ then converges to the constant $\meseq(0)$. If another origin $x$ was chosen, the constant would be $\meseq(x)$, the local density of the neutralizing background charge. 

On the other hand,  as $N\to +\infty$, the number of points  becomes infinite  and they fill up the whole space, their local density remaining bounded (by control on the fluctuations). This way $\sum_{i=1}^N\delta_{x_i}$ is expected to converge to a distribution 
$\C\in \config(\Rd)$, where 
for $A$ a Borel set of $\Rd$ we denote by $\config(A)$ the set of locally finite point configurations in $A$ or equivalently the set of non-negative, purely atomic Radon measures on $A$ giving an integer mass to singletons. 
We mostly use $\C$ for denoting a point configuration and we will write $\C$ for $\sum_{p \in \C} \delta_p$.

We thus wish to define the energy associated to an electric field $E$ solving an equation of the form 
\be \label{eqclam}
-\div E= \cd\(\C- m\drd\)\quad \text{in} \ \R^{\d+k}\ee
where $\C\in \config(\R^\d)$ and $m$ is a nonnegative constant (representing the density of points), and $k=1$ in the case \LogU \ where we need to add one space dimension, and $0$ otherwise.
We say that  an electric field $E$ is compatible with $(\mc{C}, m)$ if it satisfies \eqref{eqclam}.  
By formulating in terms of electric fields, we have not retained the information that $E_N$ was a gradient.

\subsection{Definitions}
Let $\C$ be a point configuration, $m \geq 0$ and let $E$ be  compatible with $(\C, m)$. For any $\eta \in (0,1)$ we define the truncation of $E$ as
\begin{equation} \label{defEeta1}
E_{\eta}(x) := E(x) - \sum_{p \in \C} \nabla \f_{\eta}(x-p),
\end{equation}
where $\f_{\eta}$ is as in \eqref{def:truncation}.
Let us  observe that 
\be\label{divee}
-\div E_{\eta}= \cd\( \sum_{p\in \C} \delta_p^{(\eta)} - m\drd\).\ee This procedure is exactly the same, at the level of the electric fields, as the truncation procedure described in the previous sections. 

 In the sequel, $\carr_R$ still denotes the $\d$-dimensional cubes $[-R/2,R/2]^\d$.
\begin{defi}The (Coulomb) renormalized energy of $E$ with background $m$ is 
\begin{equation}\label{defW}
\mc{W}(E,m) := \lim_{\eta\to 0} \mc{W}_\eta(E,m)
\end{equation}
where we let
\begin{equation} \label{Weta}
\mc{W}_\eta(E,m) := \limsup_{R \ti} \frac{1}{R^{\d}} \int_{\carr_R\times \R^k}  |E_{\eta}|^2 - m \cd \g(\eta).
\end{equation}
\end{defi}
\begin{defi}
Let $\C$ be a point configuration and $m \geq 0$. We define the renormalized energy of $\C$ with background $m$ as
\begin{equation}\label{de522}
\W(\mc{C},m) := \inf\{\mc{W}(E,m) \ | \ E \  \text{compatible with } \ (\C, m) \}
\end{equation}
with the convention $\inf (\emptyset) = +\infty$.
\end{defi}

The name \textit{renormalized energy} (originating in Bethuel-Brezis-H\'{e}lein \cite{bbh} in the context of two-dimensional Ginzburg-Landau vortices) reflects the fact that the integral of $ |E|^2 $ is infinite, and is computed in a renormalized way by first applying a truncation and then removing the appropriate divergent part $m\cd \g(\eta)$.

It is not a priori clear how to define a  total Coulomb interaction of such a  jellium system, first because of the infinite size of the system as we just saw, second because of the lack of local charge neutrality of the system.
The definitions we presented avoid  having to go through computing  the sum of pairwise interactions between particles (it would not even  be clear how to sum them), but instead replace it with (renormalized variants of) the extensive quantity $\int |E|^2$  (see \eqref{formalcomputation} and the comments following it).

\begin{figure}[h!]
\begin{center}
\includegraphics[scale=0.7]{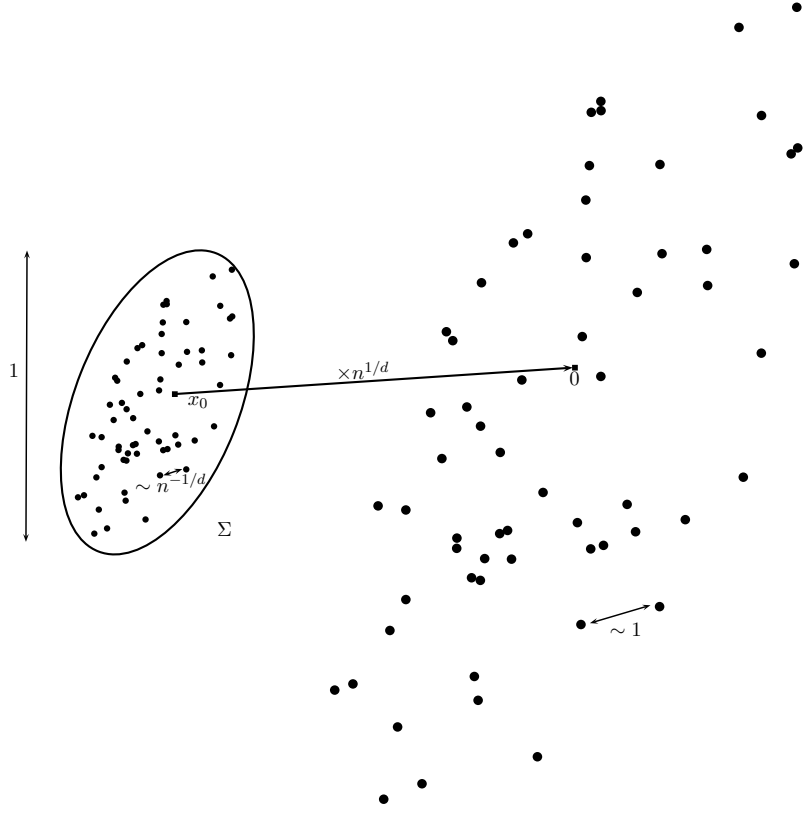}
\caption{An arbitrary blown-up configuration}\label{fig11}
\end{center}\end{figure}

\subsection{Scaling properties}
The dependence in $m$ can be scaled out as follows:
If $E\in \Elec_m$, we define $\sigma_m E$ by 
\begin{displaymath}
\sigma_m E := m^{1/\d - 1} E\left(\frac{\cdot}{m^{1/\d}} \right). \end{displaymath}
We have $\sigma_m E\in \Elec_1$ and  in the case \Coul,
\begin{equation} \label{scalingW}
\begin{cases}\mathcal{W} (E,m) = m ^{2-2/\d} \mathcal{W} (\sigma_m E,1) \\
\mathcal{W}_\eta (E,m) = m ^{2-2/\d} \mathcal{W}_{\eta m^{1/\d}}  (\sigma_mE,1) \end{cases}
\end{equation} and in the cases \LogU, \LogD
\be\label{scalingW2d}
\begin{cases}\mc{W}(E,m)= m\mc{W}(\sigma_m E,1) -\frac{2\pi}{\d}  m \log m \\
\mathcal{W}_\eta (E,m) = m \mathcal{W}_{\eta m^{1/\d}}  (\sigma_m E,1)-\frac{2 \pi}{\d} m  \log m  .
\end{cases}\ee
It also follows that 
\begin{equation}\label{scalingWW}
\left\{\begin{array}{ll}
 \W(\C,m)= m \W(\sigma_m \C,1)- \frac{2\pi}{\d} m \log m & \  \text{for} \, \LogU, \LogD\\
    \W(\C,m)= m^{2-2/\d}  \W(\sigma_m \C,1) &\ \text{for } \Coul .\end{array}\right.\end{equation}

In view of the above scaling relations, it suffices to study $\W(\cdot, 1) $.
It is proven in \cite{rs} that
 $\min \W(\cdot, 1)$ is finite and achieved for any $\d\ge 2$. Moreover, the minimum coincides with the limit as $R \rightarrow + \infty$ of the minima of $\W(\cdot, 1)$ on configurations that are $(R\mathbb{Z})^\d$-periodic (i.e. that live on the torus $\T_R = \R^\d / (R \mathbb{Z})^d$) (with $R\in \mathbb{N}$). 


\subsection{Partial results on the minimization of $\mathcal{W}$,  crystallization conjecture} \label{sec54}
We have seen  that the minima of $\W$  can be achieved as limits of the minima over periodic configurations (with respect to larger and larger tori). In the \LogU\  case, a convexity argument (for which we refer to \cite[Prop. 2.3]{ss2}) shows that the minimum is achieved when the points are equally spaced, in other words for the lattice or crystalline distribution $\mathbb{Z}$. 

In higher dimension, determining the value of  $\min \W$  is an open question. The only question that we can answer so far is that of the minimization over the restricted class of pure lattice configurations, in dimension $\d=2$ only,
i.e.   vector fields which are gradient of functions that are periodic with respect to a lattice $\mathbb{Z} \vec{u} + \mathbb{Z} \vec{v}$
 with $det(\vec{u}, \vec{v}) = 1$, corresponding to configurations of points that can be identified with $\mathbb{Z} \vec{u} + \mathbb{Z} \vec{v}$. In this case, we have :
\begin{theo}[The triangular lattice is the minimizer over lattices in 2D] \label{minimisationreseau} \mbox{}
The minimum of  $\W$ over this class of vector fields is achieved uniquely by the one corresponding to the  triangular ``Abrikosov"  lattice.
\end{theo}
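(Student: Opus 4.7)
The plan is to reduce the minimization to a classical modular-form inequality. For a unimodular lattice $\Lambda = \Z \vec u + \Z \vec v$, a gradient field $\nabla H$ with $H$ $\Lambda$-periodic is compatible with $(\Lambda, 1)$ exactly when $H$ solves $-\Delta H = 2\pi(\sum_{p\in\Lambda}\delta_p - 1)$ on the torus $\R^2/\Lambda$; since the right-hand side has zero mean, this equation has a unique zero-mean solution $H_\Lambda$ by standard elliptic theory. Thus the restricted class contains a single admissible vector field $E_\Lambda := \nabla H_\Lambda$ per lattice, and the theorem reduces to minimizing the single-valued function $\Lambda \mapsto \mc{W}(E_\Lambda, 1)$ over unimodular lattices in $\R^2$.

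Next I would parametrize such lattices by $\Lambda_\tau = (\Im\tau)^{-1/2}(\Z + \tau\Z)$ for $\tau$ in the upper half-plane $\mc{H}$. Using the explicit Fourier expansion of the torus Green's function, one computes $\int_{\carr_R}|(E_{\Lambda_\tau})_\eta|^2$ in closed form (Parseval on Fourier modes indexed by the dual lattice) and then passes to the limits $\eta\to 0$ and $R\to\infty$. The outcome is an identity of the form
\[
\mc{W}(E_{\Lambda_\tau}, 1) = -2\pi\log\bigl(\sqrt{\Im\tau}\,|\eta_D(\tau)|^2\bigr) + C_0,
\]
where $\eta_D(\tau) = e^{i\pi\tau/12}\prod_{n\ge 1}(1 - e^{2\pi i n\tau})$ is the Dedekind eta function and $C_0$ a universal constant. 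This is essentially Kronecker's first limit formula: the $\log\sqrt{\Im\tau}$ piece comes from the $-\log|x|$ singular part of the torus Green's function (after the $-2\pi\log\eta$ renormalization is subtracted), while $-2\log|\eta_D|$ encodes the regularized value of the Epstein zeta function $\zeta_{\Lambda_\tau}(s)$ at $s=1$.

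Since $\sqrt{\Im\tau}\,|\eta_D(\tau)|^2$ is $\mathrm{SL}_2(\Z)$-invariant (modular of weight $0$), it suffices to maximize it over the standard fundamental domain $\mc{F} = \{\tau \in \mc{H} : |\Re\tau| \le 1/2, \ |\tau| \ge 1\}$. The conclusion then reduces to the classical fact that the unique maximum on $\mc{F}$ is attained at the corner $\tau_\ast = e^{i\pi/3}$, whose associated lattice $\Lambda_{\tau_\ast}$ is the triangular lattice up to rotation (rotations do not affect $\W$).

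The main obstacle is precisely this number-theoretic minimization: a classical but substantive theorem. Two routes are available. The first invokes Montgomery's theorem that the Epstein zeta $\zeta_\Lambda(s) = \sum_{0\ne p\in\Lambda}|p|^{-2s}$ of a unimodular planar lattice is minimized, for every $s > 0$, by the triangular lattice, and transfers the conclusion to $\W$ via the Mellin-type identity recovering the renormalized energy from $\zeta_\Lambda(s)$ as $s \to 1$. The second is a direct analysis of the $q$-expansion of $\eta_D$: the symmetry $\tau \mapsto -\bar\tau$ reduces the search to $\Re\tau \in \{0, 1/2\}$, and monotonicity in $\Im\tau$ along these vertical lines together with the boundary arc $|\tau|=1$ of $\mc{F}$ forces the maximum to sit at $\tau_\ast$, with the bound $|e^{2\pi i\tau}| \le e^{-\pi\sqrt{3}}$ on $\mc{F}$ ensuring that the product defining $\eta_D$ is dominated by its leading terms, enough to make the estimates explicit.
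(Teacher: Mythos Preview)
Your proposal is correct and in fact goes well beyond what the paper does: the paper gives no proof at all, stating only that ``This result is essentially a result of number theory, proven in the 50's.'' What you have outlined is precisely the standard argument behind that remark --- the reduction of the renormalized energy of a unimodular planar lattice to the regularized Epstein zeta function via Kronecker's first limit formula, yielding $\mc{W}(E_{\Lambda_\tau},1) = -2\pi\log(\sqrt{\Im\tau}\,|\eta_D(\tau)|^2) + C_0$, followed by the classical fact that $\sqrt{\Im\tau}\,|\eta_D(\tau)|^2$ is uniquely maximized over $\mathrm{SL}_2(\Z)\backslash\mc{H}$ at $\tau = e^{i\pi/3}$. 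This is exactly the route taken in the references the paper is implicitly invoking (the computation appears explicitly in \cite{ssgl,ss1}, and the underlying lattice optimization is due to Cassels, Rankin, Ennola and Diananda, later sharpened by Montgomery to all $s>0$).

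One minor comment: your first paragraph asserts that the restricted class contains a \emph{single} admissible gradient field per lattice. Strictly speaking the definition \eqref{de522} of $\W(\C,m)$ takes an infimum over all compatible $E$, not just gradients, but for a periodic configuration the $L^2$-projection argument (as in Lemma~\ref{minilocale}) shows that the periodic gradient $\nabla H_\Lambda$ realizes this infimum, so your reduction is justified. You might make that step explicit.
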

Here the triangular lattice means $ \mathbb{Z} + \mathbb{Z} e^{i \pi/3}$, properly scaled, i.e. what is called the Abrikosov lattice in the context of superconductivity. This result is essentially a result of number theory, proven in the 50's.

One may ask  whether this triangular lattice does achieve the global minimum of $\W$.  The fact that the Abrikosov lattice is observed in superconductors, combined with the fact  that $\W$ can be derived  as 
 the limiting minimization problem of Ginzburg-Landau \cite{ssgl}, justify  conjecturing this.
It was recently proven in \cite{betermin} that this conjecture is equivalent to a conjecture of Brauchart-Hardin-Saff \cite{bhs} on the next order term in the asymptotic expansion of the minimal logarithmic energy on the sphere (an important problem in approximation theory, also related to Smale's ``7th problem for the 21st century"), which is obtained by  formal analytic continuation, hence by very different arguments. 

 In dimension $\d\ge 3$ the minimization of $\W$ even restricted to the class of lattices is   an open question, except in dimensions 8 and 24.
Similarly, one may conjecture that in low dimensions, the minimum of $\W$ is achieved by some particular lattice, see \cite{blanclewin}. In large dimensions, lattices are not expected to be minimizing.

\subsection{Renormalized energy for point processes}

Given $\XN = (x_1, \dots, x_N)$ in $(\R^{\d})^N$, we recall that we define $\XN'$ as the finite configuration rescaled by a factor $N^{1/{\d}}$ 
\begin{equation} \label{def:ompN}
\XN' := \sum_{i=1}^N \delta_{N^{1/{\d}} x_i}.
\end{equation}
It is a point configuration (an element of $\config$), which represents the $N$-tuple of particles $\XN$ seen at the microscopic scale.

The object we wish to describe is  the limit of the  \textit{tagged empirical field} $\bEmp_N[\XN]$ defined  as
\begin{equation}
\label{def:bEmp}
\bEmp_N[\XN] :=  \frac{1}{|\Sigma|} \int_{\Sigma} \delta_{\left(x,\,  \theta_{N^{1/{\d}} x} \cdot \XN' \right)} dx,
\end{equation}
where $\theta_x$ denotes the translation by $- x$. It is a probability measure on $\Sigma \times \config$.
For any $x \in \Sigma$, the term $\theta_{N^{1/{\d}}x} \cdot \XN'$ is an element of $\config$ which represents the $N$-tuple of particles $\XN$ centered at $x$ and seen at microscopic scale (or, equivalently, seen at microscopic scale and then centered at $N^{1/{\d}} x$). In particular any information about this point configuration in a given ball (around the origin) translates to an information about $\XN'$ around $x$. We may thus think of $\theta_{N^{1/{\d}}x} \cdot \XN'$ as encoding the behavior of $\XN'$ around $x$.

The limit of $\bEmp_N[\XN]$ will be a probability measure on $\Sigma \times \config$,  typically denoted $\bPst$,  whose first marginal is the (normalized) Lebesgue measure on $\Sigma$, and which is stationary with respect to the second variable. The space of such probability measures is denoted as $\mathcal{P}_s(\Sigma\times \config)$.  
We may then define the renormalized energy for $\bar P \in \mathcal{P}_s(\Sigma \times \config)$ as 
 \begin{equation}\bttW(\bar P,\mu)= \int_{\Sigma } \Esp_{\bar P^x} [ \W(\cdot, \mueq(x))]   \, dx,\end{equation}
where $\bar P^x$ is the disintegration with respect to the variable $x$ of the measure $\bar P$ (informally it is $\bar P(x, \cdot)$).
   \subsection{Lower bound for the energy in terms of the empirical field} 
We may now relate rigorously the energy $\HN$ to $\bttW$ by the following
  \begin{prop} \label{prop:LowerBoundenergies} 
  Assume $\pa \Sigma$ is $C^1$ and $\mu$ has a continuous  density on its support.
Let $\{\XN\}_N$ be a sequence of $N$-tuples of points in $\Rd$. If the left-hand side below is bounded independently of $N$, then up to extraction the sequence $\bEmp_N[\XN]$ converges to some $\bP$ in $\probas_s(\Sigma \times \config)$, and we have
\begin{equation} \label{gliminf}
\begin{cases} 
&\displaystyle \liminf_{N \ti} F_N^{\meseq}(\XN)+ \frac{N}{\d} \log N \ge \bttW(\bP, \mueq) \quad \text{in the cases \LogU, \LogD}\\
& \displaystyle \liminf_{N\ti} N^{1-\frac{2}{\d}} F_N^{\meseq}(\XN) \ge  \bttW(\bP, \mueq) \quad \text{in the cases \Coul}.\end{cases}
\end{equation}
\end{prop}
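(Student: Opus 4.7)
The plan is to combine the electric reformulation of Lemma \ref{lem:monoto} (monotonicity/truncation) with a two-scale blow-up argument to localize the energy into small cells, then pass to the limit by lower semi-continuity. I would first apply Lemma \ref{lem:monoto} with truncation radii $\eta_i = \eta N^{-1/\d}$ for a small but fixed $\eta > 0$, which gives
\[
F_N^{\meseq}(\XN) = \frac{1}{\cd}\int_{\R^{\d+k}} |\nabla H_{N,\vec\eta}^{\meseq}|^2 - N\g(\eta N^{-1/\d}) + O(\eta^2 N^{2-2/\d}\|\meseq\|_{L^\infty}),
\]
where $k=1$ in the \LogU\ case and $k=0$ otherwise. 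The scaling $\g(\eta N^{-1/\d}) = -\frac{\log N}{\d} - \log\eta$ in the logarithmic cases (resp.\ $(\eta N^{-1/\d})^{2-\d}$ in \Coul) precisely accounts for the $\frac{N}{\d}\log N$ subtraction (resp.\ the $N^{1-2/\d}$ factor) that appears in the statement.

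Next, I would blow up by $N^{1/\d}$. Setting $x_i' = N^{1/\d}x_i$ and $E_{N,\eta} = \nabla H_N^{\meseq}{}'$ truncated at scale $\eta$, the equation $-\div E_{N,\eta} = \cd(\sum_i \delta_{x'_i}^{(\eta)} - \meseq'\drd)$ holds, and the Jacobian of the blow-up converts $\int_{\R^{\d+k}} |\nabla H_{N,\vec\eta}^{\meseq}|^2$ into the correctly normalized integral of $|E_{N,\eta}|^2$ over (essentially) $\Sigma_N := N^{1/\d}\Sigma$. I would then use a Fubini/averaging trick: for every large $R$,
\[
\int_{\Sigma_N\times \R^k} |E_{N,\eta}|^2 \, dy \;=\; \int_{\Sigma}\!dx\,\frac{1}{|\square_R|}\!\int_{\square_R\times\R^k} |\theta_{N^{1/\d}x}\cdot E_{N,\eta}|^2\,dy + o_{R\to\infty}(N),
\]
which rewrites the global electric energy as an $x$-average of local (per-volume) electric energies on cubes of fixed size $R$ around each point $N^{1/\d}x$ of the blown-up picture.

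The assumption that the LHS of \eqref{gliminf} is bounded provides an a priori $L^2$-bound on $E_{N,\eta}$ locally (after renormalization), and via Proposition \ref{prop:fluctenergy} a local bound on the number of points, which yields tightness of $\bEmp_N[\XN]$ in $\probas_s(\Sigma\times\config)$. Extracting a subsequential limit $\bP$, for a.e.\ $x\in\Sigma$ the configurations $\theta_{N^{1/\d}x}\cdot\XN'$ converge in $\config$ to some $\mc C^x$ distributed according to $\bP^x$, and the associated truncated fields converge (weakly in $L^2_{\mathrm{loc}}$) to some $E_\eta^x$ compatible with $(\mc C^x,\meseq(x))$. Applying Fatou's lemma twice—once to exchange $\liminf_N$ with the integral over $x$, and once for the weak-$L^2$ lower semi-continuity of the squared norm on each cube—gives
\[
\liminf_{N\to\infty} \frac{1}{|\square_R|}\int_{\square_R\times\R^k} |\theta_{N^{1/\d}x}\cdot E_{N,\eta}|^2 \;\ge\; \frac{1}{|\square_R|}\int_{\square_R\times\R^k} |E_\eta^x|^2.
\]
After subtracting the $-m\cd \g(\eta)$ counterterm, taking $R\to\infty$ produces $\mc W_\eta(E^x,\meseq(x))$, and finally monotone convergence in $\eta\downarrow 0$ (using $\varphi_\eta$ supported in $B(0,\eta)$) yields $\mc W(E^x,\meseq(x)) \ge \W(\mc C^x,\meseq(x))$. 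Averaging over $\bP^x$ and integrating over $x\in\Sigma$ reconstructs $\bttW(\bP,\meseq)$.

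The main obstacle is managing the joint limit $N\to\infty$, $R\to\infty$, $\eta\to 0$ so that all error terms are controlled uniformly. The error in Lemma \ref{lem:monoto} is of order $\eta^2 N^{2-2/\d}$, which matches the target scale and forces $\eta$ to be taken only after $N\to\infty$; boundary terms from the cubes meeting $\partial\Sigma$ must be shown to be $o(N)$ (using $\partial\Sigma\in C^1$ and the continuity of $\meseq$); and the fact that $E_{N,\eta}$ is only defined globally, whereas $E^x_\eta$ is only the local limit, requires verifying that the localization introduces no spurious negative contributions—this is precisely where the truncation and the fact that $|E_{N,\eta}|^2 \ge 0$ make the Fatou argument go through. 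The tightness step also needs the control $\int_{\square_R}|E_{N,\eta}|^2$ to dominate the number of points in $\square_R$, which follows from Proposition \ref{prop:fluctenergy} applied to cutoff functions supported in $\square_R$.
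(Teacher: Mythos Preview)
Your overall strategy matches the paper's: rewrite $F_N^{\meseq}$ via Lemma \ref{lem:monoto}, blow up, apply the Fubini averaging identity \eqref{fubi} to express the energy as $N|\Sigma|\int f_{R,\eta}\,dP_N$ for a tagged electric-field process $P_N$, extract a limit by tightness, and pass successively $N\to\infty$, $R\to\infty$, $\eta\to 0$. The paper's proof is in fact only a sketch along exactly these lines.

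There is, however, a genuine gap in your limit step. You write that ``for a.e.\ $x\in\Sigma$ the configurations $\theta_{N^{1/\d}x}\cdot\XN'$ converge in $\config$ to some $\mc C^x$ distributed according to $\bP^x$, and the associated truncated fields converge (weakly in $L^2_{\mathrm{loc}}$) to some $E_\eta^x$.'' This is false: weak convergence of $\bEmp_N[\XN]$ to $\bPst$ in $\probas(\Sigma\times\config)$ gives no pointwise-in-$x$ convergence whatsoever, and the limiting disintegration $\bPst^x$ is a genuine \emph{probability measure} on $\config$, not a single configuration $\mc C^x$. Consequently your ``Fatou twice'' scheme---first exchanging $\liminf_N$ with $\int_\Sigma dx$, then using weak-$L^2$ lsc on each cube---cannot be carried out as stated, because there is no pointwise sequence to which Fatou applies.

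The paper's remedy, which you should adopt, is to work one level up: introduce the tagged \emph{electric-field} process $P_N$ of \eqref{taggedef} (a probability on $\Sigma\times L^2_{\mathrm{loc}}$), show that $\{P_N\}$ is tight via the global $L^2$ bound, and extract a weak limit $P$. The functional $P\mapsto \int f_{R,\eta}\,dP$ with $f_{R,\eta}(E)=|\carr_R|^{-1}\int_{\carr_R}|E_\eta|^2$ is then lower semi-continuous for weak convergence of measures (since $f_{R,\eta}$ is lsc on $L^2_{\mathrm{loc}}$), which replaces your Fatou step correctly. One then checks that the limiting $P$ is stationary, that $P$-a.e.\ field is compatible with some $(\C,\meseq(x))$, and ``projects down'' to a $\bPst\in\probas_s(\Sigma\times\config)$; the infimum in \eqref{de522} finishes the bound.
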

\begin{proof}
For each configuration $\XN$ let us define the tagged electric field process $P_N[\XN]$ (then we will drop the $\XN$) by 
\be \label{taggedef}
P_N[\XN](x,E):= \frac{1}{|\Sigma|}\int_{\Sigma} \delta_{(x,\theta_{N^{1/\d}x } \cdot E_N)}\, dx,\ee
where $E_N$ is as in \eqref{ele}.
The result follows from relatively standard considerations, after noting that by 
 Fubini's theorem we may write, for any $R>1$,
\begin{multline}\label{fubi}
\int_{\R^\d} |\nab {H_{N,\eta}^{\meseq}}'|^2 
\ge
N \int_{\R^\d}\frac{1}{|\carr_R|}\indic_{y\in \carr_R} \int_{\Sigma}  |E_{N,\eta} (N^{1/\d}x+y)|^2 \, dx \, dy \\ = N |\Sigma| \int f_{R,\eta}(E)\, dP_N(x,E)\end{multline}
 where  $f_{R,\eta}$ is defined by 
 $$f_{R,\eta}(E)=\frac{1}{|\carr_R|} \int_{\carr_R} |E_\eta|^2  $$ if $E$  is of the form $\nab H_N^\mu$ for some $N$-point configuration, and $+\infty$ otherwise.
 It then suffices to show that 
  $\{P_N\}_N$ is tight, and that up to extraction $P_N$ converges to some stationary tagged electric field process which can be ``projected down" to some $\bPst \in \probas_s(\Sigma\times \config)$, and take the limit  $N\to \infty$, $R\to \infty$, then $\eta\to 0$ in \eqref{fubi}.

\end{proof}

\frenchspacing

 \section{Large Deviations Principle for empirical fields}

We now wish to state a  large deviations result at the level of the point processes for the Gibbs measure \eqref{pnb2}, proven in \cite{lebles}.

\subsection{Specific relative entropy} \label{sec:defentropy}
We first  need to define the analogue of the entropy at the level of point processes: the specific relative entropy with respect to the Poisson point process. The Poisson point process with intensity $m$ is the point process characterized by the fact that for any bounded Borel set $B$ in $\R^\d$
$$P\( N(B)= n\)= \frac{(m|B|)^n}{n!}e^{-m|B|}$$
where $N(B)$ denotes the number of points in $B$. The expectation of the number of points in $B$ can then be computed to be $m|B|$, and one also observes that the number of points in two disjoint sets are independent, thus the points ``don't interact". 

For any $m \geq 0$, we denote by $\Poisson^m$ the (law of the) Poisson point process of intensity $m$ in $\Rd$ (it is an element of $\probas_s(\config)$). Let $P$ be in $\probas_s(\config)$. We define the specific relative entropy of $P$ with respect to $\Poisson^1$ as
\begin{equation} \label{def:ERS}
\ERS[P|\Poisson^1] := \lim_{R \ti} \frac{1}{R^{\d}} \ent[P_{\carr_R}|\Poisson^1_{\carr_R}],
\end{equation}
where $P_{\carr_R}, \Poisson^1_{\carr_R}$ denote the restrictions of the processes to the hypercube $\carr_R$. Here, $\ent[\cdot|\cdot]$ denotes the \textit{usual} relative entropy of two probability measures defined on the same probability space, namely
\begin{equation*}
\ent[\mu|\nu] := \int \log \frac{d\mu}{d\nu} \, d\mu
\end{equation*}
if $\mu$ is absolutely continuous with respect to $\nu$, and $+ \infty$ otherwise. 

\begin{lem} \label{lem:ERS} The following properties are known (cf. \cite{seppalainen}) :
\begin{enumerate}
\item The limit in \eqref{def:ERS} exists for $P$ stationary.
\item The map $P \mapsto \ERS[P |\Poisson^1]$ is affine and lower semi-continuous on $\probas_s(\config)$.
\item The sub-level sets of $\ERS[\cdot| \Poisson^1]$ are compact in $\probas_s(\config)$ (it is a \textit{good} rate function).
\item We have $\ERS[P| \Poisson^1] \geq 0$ and it vanishes only for $P = \Poisson^1$.
\end{enumerate}
\end{lem}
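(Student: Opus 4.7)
The four properties are standard facts in the thermodynamic formalism of point processes and my plan is to establish them in the order stated, exploiting the fact that the reference process $\Poisson^1$ factorizes over disjoint regions.

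For (1), I would prove existence of the limit by a super-additivity argument based on a scale decomposition. Partition the cube $\carr_R$ into $k^\d$ translates of $\carr_{R/k}$. Since the Poisson process is independent over disjoint sets,
\[
\Poisson^1_{\carr_R} = \bigotimes_{i=1}^{k^\d} \Poisson^1_{\carr_{R/k}^{(i)}},
\]
so the chain rule for relative entropy with respect to a product reads
\[
\ent[P_{\carr_R}|\Poisson^1_{\carr_R}] = I_R(P) + \sum_{i=1}^{k^\d} \ent\!\left[(P_{\carr_R})_i \,\big|\, \Poisson^1_{\carr_{R/k}^{(i)}}\right],
\]
where $I_R(P)\ge 0$ is the mutual-information defect between the sub-cubes. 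Stationarity of $P$ equalises the marginal entropies, yielding $\ent[P_{\carr_R}|\Poisson^1_{\carr_R}] \ge k^\d \ent[P_{\carr_{R/k}}|\Poisson^1_{\carr_{R/k}}]$, i.e. $R\mapsto R^{-\d}\ent[P_{\carr_R}|\Poisson^1_{\carr_R}]$ is non-decreasing along the sequence $R = k^n R_0$. By Fekete's lemma the limit exists in $[0,+\infty]$ and equals the supremum over $R$.

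For (2), lower semi-continuity of each finite-volume entropy comes from the Donsker--Varadhan variational representation
\[
\ent[\mu|\nu] = \sup_{\phi \in C_b} \Big( \int \phi\, d\mu - \log \int e^\phi\, d\nu \Big),
\]
which writes it as a supremum of continuous affine functionals, a property preserved under supremum over $R$. Affinity is more delicate: for a convex combination $P = tP_1 + (1-t)P_2$, the standard two-sided bound
\[
\sum_i t_i \ent[(P_i)_{\carr_R}|\Poisson^1_{\carr_R}] - \log 2 \le \ent[P_{\carr_R}|\Poisson^1_{\carr_R}] \le \sum_i t_i \ent[(P_i)_{\carr_R}|\Poisson^1_{\carr_R}]
\]
shows that the defect from linearity is $O(1)$, which vanishes after division by $|\carr_R| = R^\d$. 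For (3), I would derive tightness in $\probas_s(\config)$ by applying the entropy inequality with test function $\alpha N(\carr_R)$: a bound $\ERS[P|\Poisson^1]\le c$ produces a uniform exponential moment bound on the number of points in any bounded region, which gives tightness via Prokhorov, and combined with the l.s.c.\ of (2) yields compactness of sub-level sets.

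For (4), non-negativity of each finite-volume entropy is Jensen applied to $x\mapsto x\log x$, and passes to the limit. For the equality case, if $\ERS[P|\Poisson^1]=0$ then by the supremum characterization from (1) each $\ent[P_{\carr_R}|\Poisson^1_{\carr_R}]=0$, hence $P_{\carr_R}=\Poisson^1_{\carr_R}$ for all $R$, and the Kolmogorov extension theorem forces $P=\Poisson^1$. The main obstacle I anticipate is the affinity in (2): the finite-volume relative entropy is only convex, not affine, and one must genuinely exploit the sub-extensive nature of the mixture defect to recover linearity in the thermodynamic limit — this is a feature specific to the normalized limit and has no analogue at finite volume.
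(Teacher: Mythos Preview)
The paper does not give a proof of this lemma: it merely records these four properties as known facts with a reference to \cite{seppalainen}. Your sketch is correct and follows precisely the standard textbook arguments one finds there (super-additivity from Poisson factorization and stationarity for (1), Donsker--Varadhan plus the $O(1)$ mixture-defect bound for (2), entropy-controlled exponential moments for (3), and the sup-characterization for the equality case in (4)), so there is nothing to compare.
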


Next, if $\bar{P}$ is in $\probas_{s}(\Sigma \times \config)$, we defined the tagged  specific relative  entropy as
\begin{equation}
\label{def:bERS} \bERS[\bPst|\Poisson^1] := \int_{\Sigma} \ERS[\bPstx|\Poisson^1] dx.
\end{equation}



For any $N \geq 1$ we define $\QNbeta$ as
\begin{equation} \label{def:QNbeta}
d\QNbeta(x_1, \dots, x_N) := \prod_{i=1}^N \frac{\exp\left( - N^{\frac2\d} \beta \zeta(x_i) \right) dx_i}{\int_{\Rd} \exp\left( - N^{\frac2\d} \beta \zeta(x) \right) dx} , 
\end{equation} which exists by assumption,
and we let $\bQpN$ be the push-forward of $\QNbeta$ by the map $\bEmp_N$, as defined in \eqref{def:bEmp}.

We also introduce the following constant:
\begin{equation}
\label{defcomeg} \comeg := \log |\omega|-  |\Sigma|+1,
\end{equation}
where $\omega$ is the zero-set of $\zeta$ (see \eqref{defzeta}).

The following result expresses how the specific relative entropy evaluates the volume of ``microstates" $\XN$ near a reference tagged point process $\bPst$. It  is a Large Deviations Principle for the tagged empirical field, when the points are distributed according to a reference measure on $(\Rd)^N$ where there is no interaction. This is a microscopic or so-called ``type III" analogue of Sanov's theorem for the empirical measures. 

\begin{prop} \label{SanovbQN} 
For any $A \subset \probas_s(\Sigma \times \config)$, we have
\begin{multline} \label{EqSB}
- \inf_{\mathring{A} \cap \probas_{s,1}} \bERS[\bP|\Poisson^1] - \comeg \leq \liminf_{N \ti} \frac{1}{N} \log \bQpN (A) \\ \leq  \limsup_{N \ti} \frac{1}{N} \log \bQpN(A) \leq - \inf_{\bPst \in \bar{A}} \bERS[\bP|\Poisson^1] -\comeg.
\end{multline}
\end{prop}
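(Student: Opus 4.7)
The key observation is that $\QNbeta$ is a product measure: the $x_i$ are i.i.d.\ with density $\rho_N(x) := e^{-N^{2/\d}\beta\zeta(x)}/Z_N^Q$, where $Z_N^Q := \int e^{-N^{2/\d}\beta\zeta(x)}\,dx$. By the argument of Lemma~\ref{asymptozeta}, $Z_N^Q \to |\omega|$, so $\rho_N$ concentrates on $\omega$ with limit profile $\rho_\infty := |\omega|^{-1}\mathbf{1}_\omega$. The proposition is therefore essentially a process-level Sanov theorem (a ``type III'' result) for i.i.d.\ sampling, made slightly non-standard by the fact that the sampling density depends on $N$ and degenerates off $\omega$.

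My plan is to Poissonize. Let $\widetilde\Pi_N$ denote the Poisson point process on $\R^\d$ with intensity $N\rho_N(x)\,dx$; conditioning on it having exactly $N$ points recovers $\QNbeta$. The conditioning event has probability $e^{-N}N^N/N! \sim (2\pi N)^{-1/2} = e^{o(N)}$, so the two ensembles share the same LDP at speed $N$, and it suffices to analyze the push-forward of $\widetilde\Pi_N$ by $\bEmp_N$.

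For $\widetilde\Pi_N$ I would partition $\Sigma$ into a mesoscopic grid of cubes $Q_\alpha$ of side $\epsilon$ centered at $x_\alpha$. Restrictions of $\widetilde\Pi_N$ to distinct $Q_\alpha$'s are independent, and on each $Q_\alpha$ the intensity is almost constant equal to $N\rho_\infty(x_\alpha)$. After microscopic blow-up by $N^{1/\d}$, each block looks like an independent stationary Poisson process of intensity $\rho_\infty(x_\alpha)$ in a window of side $\epsilon N^{1/\d}$, so the classical block-wise Sanov theorem for Poisson processes (Georgii--Zessin, F\"ollmer--Orey) applies at speed $\epsilon^\d N$ with local rate $\ERS[\,\cdot\,|\Poisson^{\rho_\infty(x_\alpha)}]$. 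Summing over blocks and letting $\epsilon\to 0$ produces an LDP with rate $\int_\Sigma \ERS[\bar P^x|\Poisson^{\rho_\infty(x)}]\,dx$. The translation identity
\begin{equation*}
\ERS[P|\Poisson^m] = \ERS[P|\Poisson^1] + (m-1) - \bar N(P)\log m,
\end{equation*}
valid for $P$ stationary of intensity $\bar N(P)$, reduces this to $\bERS[\bar P|\Poisson^1]$ plus a deterministic constant; evaluating at the baseline $\bar P_0^x = \Poisson^{\rho_\infty(x)}$ (for which the local rate vanishes pointwise) gives
\begin{equation*}
\int_\Sigma \bigl(\rho_\infty(x) - 1 - \rho_\infty(x)\log\rho_\infty(x)\bigr)\,dx = 1 - |\Sigma| + \log|\omega| = \comeg,
\end{equation*}
which is exactly the shift appearing in \eqref{EqSB}.

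The main obstacle will be handling the boundary layer $\partial\omega\cap\Sigma$: the density $\rho_N$ is not literally zero off $\omega$, and a few points may land in the thin region where $\zeta\lesssim N^{-2/\d}$; controlling their contribution to $\bEmp_N$ and justifying the exchange of the mesoscopic cube decomposition with the microscopic blow-up will require the continuity of $\zeta$, the growth assumption \textbf{(A3)} to suppress configurations with points far from $\omega$, and the elementary fact that the empty process has specific relative entropy exactly $1$ against $\Poisson^1$. The restriction to $\probas_{s,1}$ in the lower bound then corresponds precisely to tagged processes whose intensity profile matches $\rho_\infty$, which is what makes the translation identity faithfully convert the Sanov rate into $\bERS[\bar P|\Poisson^1] + \comeg$.
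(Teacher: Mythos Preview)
The paper does not actually prove this proposition: it is stated without proof as a known ``type III'' Sanov-type result, with the details deferred to the main reference \cite{lebles}. So there is nothing in these notes to compare your outline against directly.

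That said, your strategy---Poissonize, partition $\Sigma$ into mesoscopic blocks, apply the process-level Sanov theorem \`a la Georgii--Zessin or F\"ollmer--Orey blockwise, then recombine and use the change-of-intensity identity for specific relative entropy---is exactly the standard route, and is essentially what is carried out in \cite{lebles}. The Poissonization step is sound (the conditioning event has probability $e^{o(N)}$ by Stirling), and your translation identity $\ERS[P|\Poisson^m] = \ERS[P|\Poisson^1] + (m-1) - \bar N(P)\log m$ is correct.

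One genuine point to flag: your computation of the constant $\comeg$ tacitly assumes $|\omega|=|\Sigma|$. With $\rho_\infty=|\omega|^{-1}\mathbf 1_\omega$ and the tag integral running over $\Sigma$ only, your displayed integral actually evaluates to $|\Sigma|/|\omega|-|\Sigma|+(|\Sigma|/|\omega|)\log|\omega|$, which matches $\comeg=\log|\omega|-|\Sigma|+1$ only when $|\Sigma|=|\omega|$. When $\omega\supsetneq\Sigma$, a positive fraction of the $N$ points typically lands in $\omega\setminus\Sigma$ and is invisible to $\bEmp_N$ (since tags range over $\Sigma$ only and those points sit at macroscopic distance); the missing piece of the constant comes from the large-deviations cost of this binomial split together with the contribution of the ``empty'' process seen from tags whose microscopic window lies outside the point cloud. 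This is a bookkeeping issue rather than a conceptual gap, but it does need to be tracked to recover the stated $\comeg$ in full generality.
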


With this result at hand we may expect the LDP result we are searching for  (or at least an LDP upper bound): indeed, the splitting \eqref{split0} and the rewriting \eqref{PNrewri} 
may be combined to Proposition \ref{prop:LowerBoundenergies}, which bounds from above the exponent, while Proposition \ref{SanovbQN} allows to evaluate the volume.
The hard part is to obtain the LDP lower bound, which involves instead an upper bound for the energy, together with a lower bound for the volume of configurations.

\subsection{Statement of the main result}

For any $\beta > 0$, we define a free energy functional $\fbarbeta$ as
\begin{equation}
\label{def:bfbeta} \fbarbeta(\bP) := \frac{\beta}{2} \bttW(\bP,\meseq) + \bERS[\bP|\Poisson^1].
\end{equation}

For any $N, \beta$ we let $\bPgot_{N,\beta}$ be push-forward of the canonical Gibbs measure $\PNbeta$ by the \textit{tagged empirical field map} $\bEmp_N$ as in \eqref{def:bEmp} (in other words, $\bPgot_{N, \beta}$ is the law of the tagged empirical field when the particles are distributed according to $\PNbeta$).

\begin{theo}[Large Deviation Principle for the tagged empirical fields \cite{lebles}] \label{TheoLDP} Under suitable assumptions on the regularity of $\partial \Sigma $ and $\meseq$,  for any $\beta>0$ the sequence $\{\bPgot_{N,\beta}\}_N$ satisfies a large deviation principle at speed $N$ with good rate function $\fbarbeta - \inf \fbarbeta$.
\end{theo}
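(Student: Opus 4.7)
The plan is to prove matching upper and lower LDP bounds via a combination of the splitting formula, the Sanov-type theorem of Proposition \ref{SanovbQN}, the energy lower bound of Proposition \ref{prop:LowerBoundenergies}, and a ``screening'' construction for the matching upper bound on the energy. I begin by rewriting the Gibbs measure relative to the product reference measure $\QNbeta$ defined in \eqref{def:QNbeta}. Using the splitting \eqref{split0} together with the factorization of the confining term one obtains, for any Borel set $A\subset \probas_s(\Sigma\times\config)$,
\begin{equation*}
\bPgot_{N,\beta}(A) \;=\; \frac{Z^{\mathrm{conf}}_N \, e^{-\frac{\beta}{2}N^{\min(2, 2/\d+1)}\I_V(\meseq)}}{\ZNbeta}\int_{\bEmp_N^{-1}(A)} \exp\!\Bigl(-\tfrac{\beta}{2}N^{\min(2/\d-1,0)}F_N^{\meseq}(\XN)\Bigr)\, d\QNbeta(\XN),
\end{equation*}
with $Z^{\mathrm{conf}}_N := \bigl(\int e^{-\beta N \zeta(x)}dx\bigr)^N$. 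By Lemma \ref{asymptozeta} and Proposition \ref{dlogzn}, the prefactor is of the form $e^{-N(\inf \fbarbeta + o(1))}$ \emph{provided} that the LDP holds; so the analysis of $\log\ZNbeta$ and the LDP really must be carried out simultaneously. Concretely, I will prove upper and lower bounds for the \textit{unnormalized} integral above, and then read off both the expansion of $\log \ZNbeta$ and the LDP for $\bPgot_{N,\beta}$.

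For the upper bound, fix a closed set $A$ in $\probas_s(\Sigma \times \config)$. Proposition \ref{prop:LowerBoundenergies} tells us that if $\bEmp_N[\XN]$ converges (up to subsequence) to some $\bP$, then
$$N^{\min(2/\d-1,0)}F_N^{\meseq}(\XN) + \tfrac{N\log N}{\d}\indic_{\LogU,\LogD} \;\ge\; N\bttW(\bP, \meseq) + o(N),$$
and by lower semicontinuity this can be localized to a neighbourhood of each $\bP\in \bar A$ in the weak topology on $\probas_s(\Sigma\times\config)$. Combined with the Sanov-type upper bound of Proposition \ref{SanovbQN} for $\bQpN(A)$, a standard covering argument yields
$$\limsup_{N\ti} \tfrac{1}{N}\log\!\int_{\bEmp_N^{-1}(A)} e^{-\frac{\beta}{2}N^{\min(2/\d-1,0)}F_N^{\meseq}}\, d\QNbeta \;\le\; -\inf_{\bP \in \bar A}\Bigl(\tfrac{\beta}{2}\bttW(\bP,\meseq) + \bERS[\bP|\Poisson^1]\Bigr) -\comeg,$$
up to the logarithmic shift from the \LogU, \LogD \ case which is absorbed in Corollary \ref{boundlogz}. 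The matching lower bound applied with $A$ the whole space will yield $\log \ZNbeta$ in terms of $\inf \fbarbeta$, so that division by the normalization produces the centered rate function $\fbarbeta - \inf \fbarbeta$.

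The main obstacle is the \emph{LDP lower bound}, which requires, for each fixed $\bP^* \in \probas_s(\Sigma\times\config)$ with $\fbarbeta(\bP^*)<\infty$ and each neighbourhood $\mathcal{V}$ of $\bP^*$, exhibiting an exponentially rich family of configurations $\XN \in \bEmp_N^{-1}(\mathcal{V})$ whose electric energy is essentially bounded above by $N\bttW(\bP^*,\meseq)$. The volume lower bound (at the Sanov level) is already given by Proposition \ref{SanovbQN}, so the core difficulty is to produce \emph{matching upper bounds on the energy}. The standard strategy is a \emph{screening construction}: first disintegrate $\bP^*$ as a family $\{(\bP^*)^x\}_{x\in\Sigma}$ of stationary point processes; then tile $\Sigma$ by a mesoscopic partition into hypercubes of sidelength $R N^{-1/\d}$ with $R \gg 1$; sample, independently in each cell, a microscopic configuration distributed according to $(\bP^*)^x$ restricted to the blown-up cell; and finally modify each local configuration in a thin boundary layer so that the associated electric field has vanishing normal component on the cell boundary. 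The point of this screening is that, once the normal components match, the global electric energy decouples as a sum over cells, each close to $R^\d\, \W(\cdot, \meseq(x))$ in expectation, giving the desired $N\bttW(\bP^*, \meseq)$. Implementing this construction rigorously (as in \cite{ps,lebles}) is the technical heart of the argument: one must show that the screening can be done with a negligible energy cost, that the discrepancy control on $\meseq$ near $\partial\Sigma$ is compatible with the tiling, and that the reconstructed point configuration still lies in $\bEmp_N^{-1}(\mathcal{V})$. Once this is in place, combining the screening energy bound with the Sanov volume lower bound of Proposition \ref{SanovbQN} applied to each cell, and summing over the tiling, yields
$$\liminf_{N\ti}\tfrac{1}{N}\log\!\int_{\bEmp_N^{-1}(\mathcal{V})} e^{-\frac{\beta}{2}N^{\min(2/\d-1,0)}F_N^{\meseq}}\, d\QNbeta \;\ge\; -\fbarbeta(\bP^*) - \comeg,$$
which, together with the upper bound applied to the whole space $A = \probas_s(\Sigma\times\config)$, simultaneously identifies $\lim \tfrac1N\bigl(\log\ZNbeta +\tfrac{\beta}{2}N^{\min(2,2/\d+1)}\I_V(\meseq) - \log Z^{\mathrm{conf}}_N\bigr) = -\inf\fbarbeta - \comeg$ and establishes the LDP.
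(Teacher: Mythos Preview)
Your proposal is correct and follows essentially the same route as the paper: the LDP upper bound comes from combining the energy lower bound of Proposition~\ref{prop:LowerBoundenergies} with the Sanov-type estimate of Proposition~\ref{SanovbQN}, and the LDP lower bound comes from a screening construction that decouples the energy over a tiling and is paired with the Sanov volume lower bound (the paper packages this as the ``quasi-continuity'' statement, Proposition~\ref{quasicontinuite}). The one ingredient you do not name explicitly is the \emph{regularization} step---separating pairs of points that are too close---which the paper needs so that the error between the truncated energy $\bttW_\eta$ and $\bttW$ can be made small as $\eta\to 0$ \emph{uniformly} over the screened configurations; without it the passage from the per-cell $\int |E_\eta|^2$ estimates to a bound on $F_N^{\meseq}$ in terms of $\bttW(\bP^*,\meseq)$ is incomplete.
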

In particular, in the limit $N\to \infty$, the law $\bPgot_{N,\beta}$ concentrates on minimizers of $\fbarbeta$. 
\begin{itemize}
\item 
One readily sees the effect of the temperature: in the minimization  there is a competition between the term $ \bttW$  based on the renormalized energy, which is expected to favor very ordered configurations, and the entropy term which in contrast favors disorder (the entropy is minimal for a Poisson point process).
  As $\beta \to \infty$, we expect configurations to concentrate on minimizers of $\bttW$ hence to crystallize on  lattices (in dimension 1, there is a complete proof \cite{leble}), in low dimensions. As $\beta\to 0$, the limiting point processes (if they exist) converge to the  Poisson point process. 
\item
This result is valid in the \LogU, \LogD, \Coul\  and also  \Riesz \ cases.
\item In the case \LogD,\ it is proven in \cite{loiloc} that the similar result holds down to the mesoscales.
\item
The corresponding, simpler, result for minimizers of $\HN$ (which are of interest on their own, as seen in Section \ref{intro}) can be proven as well. More precisely, the lower bound of Proposition \ref{prop:LowerBoundenergies} is sharp, and  the scaling property of $\bttW$ allow to obtain
\begin{multline}\label{formmin}\min \HN(\XN)= N^2 \I_V(\meseq) \\+ 
\begin{cases} &  - \frac{N}{\d}\log N+N \( -\frac{1}{2} \displaystyle \int_\Sigma \meseq \log \meseq + \frac{1}{\cd} \min \W(\cdot, 1)\) \ \text{in the cases }\LogU,\LogD \\ & N^{1+\frac{\s}{\d}} \frac{1}{\mathsf{c}_{\d,\s} }  \min \W(\cdot, 1) \displaystyle \int_{\Sigma}    \meseq^{1+\frac{\s}{\d}}  \ \  \text{in the cases  } \ \Coul, \Riesz.\end{cases}\end{multline} 
Moreover, if $\XN$ minimizes $\HN$ then, up to extraction $\bEmp_N[\XN]$ converges to a minimizer of $\bttW(\cdot, \meseq)$.  These are results of \cite{ss1,ss2,rs,ps}. More precise results including rigidity down to the microscopic scales and separation of points can be found in \cite{rns,PRN}.
\end{itemize}

As is usual and as alluded to before, as a by-product of the large deviation principle we obtain the order $N$ term in the expansion of the partition function.
\begin{coro}[Next-order expansion and thermodynamic limit]
\label{corothermo}
 Under the same assumptions, we have, as $N \to \infty$:
\begin{itemize}
\item In the logarithmic cases \LogU, \LogD
 \begin{equation} \label{expansionlog}
 \log \ZNbeta=  - \frac{\beta}{2} N^2 \mathcal \I_V(\meseq) +\frac{\beta}{2} \frac{N \log N}{\d} - N \min \fbarbeta +N(|\Sigma|-1) +o((\beta+1)N).
\end{equation}
\item In the  Coulomb cases \Coul
 \begin{equation}\label{expansionriesz}
 \log \ZNbeta= - \frac{\beta}{2} N^{1+\frac{2}{\d} } \mathcal \I_V(\meseq) - N \min \fbarbeta +N(|\Sigma|-1) +  o((\beta +1)N).
 \end{equation} 
\end{itemize} 
The scaling properties allow to rewrite the previous expansion in the  cases \LogU, \LogD\  as
\begin{multline} \label{logz}
 \log \ZNbeta =- \frac{\beta}{2} N^2 \mathcal \I_V(\meseq) +\frac{\beta}{2}\frac{N \log N}{\d} 
 -N C(\beta, \d)
 \\- N \left( 1 -\frac{\beta}{2 \d}\right) \int_\Sigma \meseq(x)\log \meseq(x) \, dx + o((\beta+1)N),
 \end{multline} 
where $C(\beta, \d)$ is a constant depending only on $\beta$ and the dimension $\d$, but \textit{independent of the potential}.
\end{coro}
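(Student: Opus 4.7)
\medskip
\noindent\textbf{Proof plan.} The plan is to derive the asymptotic expansion from the LDP of Theorem~\ref{TheoLDP} by Laplace's method, exploiting the factorization of $\PNbeta$ through the reference measure $\QNbeta$. First, I would start from the splitting \eqref{split0} together with \eqref{devkn}, which gives
\[
\log\ZNbeta = -\tfrac{\beta}{2} N^{\min(2,\,2/\d+1)}\I_V(\meseq) + \log \KNbeta(\meseq,\zeta).
\]
Next, rewriting the integrand of $\KNbeta(\meseq,\zeta)$ by pulling out the product factor $\prod_i e^{-\beta N\zeta(x_i)}$ (in the log cases; with the analogous scaling $N^{2/\d}$ in the Coulomb case) and normalizing it as $\QNbeta$, one obtains the identity
\[
\KNbeta(\meseq,\zeta) = Z^{Q}_{N,\beta}\cdot \Esp_{\QNbeta}\!\left[\exp\!\Bigl(-\tfrac{\beta}{2} N^{\min(2/\d-1,0)} F_N^{\meseq}\Bigr)\right],
\]
where $Z^Q_{N,\beta}$ is the product of the normalizing constants of $\QNbeta$. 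By Lemma \ref{asymptozeta} applied coordinate-wise we already have $\log Z^Q_{N,\beta} = N\log|\omega| + o(N)$.

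The main task is then to evaluate $\Esp_{\QNbeta}[\exp(-\tfrac{\beta}{2}N^{\min(2/\d-1,0)}F_N^{\meseq})]$ to precision $o(N)$. Here $\bQpN$ is by definition the push-forward of $\QNbeta$ by the tagged empirical field map, and Proposition~\ref{SanovbQN} states that $\{\bQpN\}$ satisfies an LDP at speed $N$ with rate $\bERS[\cdot |\Poisson^1] + \comeg$. The lower bound of Proposition~\ref{prop:LowerBoundenergies} identifies the limit of $N^{\min(2/\d-1,0)}F_N^{\meseq}$ (up to the additive $\tfrac{N\log N}{\d}$ in the logarithmic cases, which I would separate out explicitly) with the ``renormalized energy'' functional $\bttW(\bar P,\meseq)$ on the tagged empirical field. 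Applying Laplace's method/Varadhan's integral lemma then formally yields
\[
\log \Esp_{\QNbeta}\!\left[e^{-\frac{\beta}{2}N^{\min(2/\d-1,0)}F_N^{\meseq}}\right] = \Bigl(\tfrac{\beta N\log N}{2\d}\Bigr)\indic_{\LogU,\LogD} - N\bigl(\inf \fbarbeta + \comeg\bigr) + o(N).
\]
Combining the three pieces and using $\comeg = \log|\omega| - |\Sigma| + 1$ makes the $\log|\omega|$ terms cancel and produces \eqref{expansionlog} and \eqref{expansionriesz}.

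The main obstacle is that Varadhan's lemma in its textbook form requires a continuous, bounded functional, and $\bttW$ is neither. The upper bound on $\log\ZNbeta$ (i.e.\ the lower bound on the energy) is fine: it follows from Proposition~\ref{prop:LowerBoundenergies} combined with the LDP upper bound in Proposition~\ref{SanovbQN}, applied after localizing to sub-level sets of $\bERS$ and using the exponential tightness provided by the concentration bound \eqref{claimeq}. The matching lower bound, however, requires exhibiting, for every target $\bar P$ with $\fbarbeta(\bar P)<\infty$, a family of microstates whose tagged empirical field approximates $\bar P$ and on which the energy $F_N^{\meseq}$ does not exceed $\bttW(\bar P,\meseq)$ by more than $o(1)$. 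This is precisely the ``screening + regularization + volume estimate'' construction underlying the full proof of Theorem~\ref{TheoLDP} in \cite{lebles}; I would invoke it as a black box, since we are allowed to use that theorem.

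Finally, to deduce the decoupled form \eqref{logz} in the logarithmic cases, I would insert the scaling relation \eqref{scalingWW} into the definition of $\bttW$, so that
\[
\bttW(\bar P,\meseq) = \int_\Sigma \meseq(x)\,\Esp_{\bar P^x}\!\bigl[\W(\sigma_{\meseq(x)}\cdot,1)\bigr]dx - \tfrac{2\pi}{\d}\int_\Sigma \meseq\log\meseq \,dx,
\]
and use the standard transformation formula for the specific relative entropy under dilation by $m^{1/\d}$, which produces an additive $\int \meseq\log\meseq$ contribution weighted by $1$ (coming from $\bERS$). Collecting the prefactors $\tfrac{\beta}{2}\cdot\tfrac{2\pi}{\d}\cdot\tfrac{1}{2\pi}$ and $1$ yields the coefficient $\bigl(1 - \tfrac{\beta}{2\d}\bigr)$ in front of $\int_\Sigma \meseq\log\meseq$, while the minimum of the remaining ``intensity-one'' part of $\fbarbeta$ depends only on $\beta$ and $\d$, defining $C(\beta,\d)$, and is thus independent of $V$.
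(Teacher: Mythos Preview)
Your proposal is correct and follows essentially the same route the paper intends: the corollary is stated as a ``by-product of the large deviation principle'' and your plan---factor through $\QNbeta$, use the Sanov-type LDP of Proposition~\ref{SanovbQN} together with the energy lower bound of Proposition~\ref{prop:LowerBoundenergies} for the upper bound, and invoke the screening/quasi-continuity construction (Proposition~\ref{quasicontinuite}) as a black box for the matching lower bound---is exactly the argument sketched in Section~6.3. The cancellation of $\log|\omega|$ via $\comeg$ and the derivation of \eqref{logz} from the scaling relations \eqref{scalingWW} are also the intended computations; just be a bit careful that in the \LogU\ case the exponent in $\QNbeta$ is $N^{2}\beta\zeta$ rather than $N\beta\zeta$, though this does not affect the leading-order $N\log|\omega|$ contribution from Lemma~\ref{asymptozeta}.
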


In the particular case of \LogU \ with a quadratic potential $V(x) = x^2$, the equilibrium measure is known to be Wigner's semi-circular law and the limiting process at the microscopic scale when centered around a point $x \in (-2,2)$ (let us emphasize that here there is \textit{no averaging}) has been identified for any $\beta > 0$ in \cite{vv,kn}. It is called the \textit{sine-$\beta$ point process} and we denote it by $\sineb(x)$ (so that $\sineb(x)$ has intensity $\frac{1}{2\pi} \sqrt{4-x^2}$). 
For $\beta > 0$ fixed, the law of these processes do not depend on $x$ up to rescaling and we denote by $\sineb$ the corresponding process with intensity $1$. 

A corollary of our main result is then a new variational property of $\sineb$: it minimizes the sum of $\beta/2$ times the renormalized energy (with background $1$) and the specific relative entropy.

\subsection{Proof structure}
As mentioned above, combining  Propositions \ref{prop:LowerBoundenergies}  and \ref{SanovbQN}, we naturally obtain the LDP upper bound. 
Obtaining  the LDP lower bound is significantly harder, and is based on the following   \textit{quasi-continuity} of the interaction, in the following sense.

For any integer $N$, and $\delta > 0$ and any $\bP \in \probas_s(\Sigma \times \config)$, let us define
\begin{multline} \label{def:TNdelta}
T_N(\bP, \delta) := \Big\{ \XN \in (\Rd)^N, \\
 N^{(1-\frac{2}{\d})_+ }  F_N^{\meseq}(\XN)+ \(\frac{N}{\d} \log N \) \indic_{\LogU, \LogD} \leq \bttW(\bP, \meseq) + \delta \Big\}.
\end{multline}

\begin{prop} \label{quasicontinuite}
Let $\bPst\in\probas_{s, 1}(\Sigma \times \config)$. For  all $\delta_1, \delta_2>0$  we have 
\begin{equation}
\label{quasicontinu} 
\liminf_{N \ti} \frac{1}{N} \log \QNbeta \left( \{\bEmp_N \in B(\bPst, \delta_1)\} \cap T_N(\bP, \delta_2) \right) \ge  - \bERS[\bP|\Poisson^1]  -\comeg.
\end{equation}
\end{prop}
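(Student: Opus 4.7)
The inequality is an LDP \emph{lower} bound, so I fix $\bar P \in \probas_{s,1}(\Sigma\times\config)$ (and may assume $\bERS[\bar P \mid \Poisson^1] < \infty$, otherwise there is nothing to prove) and will exhibit an explicit family of configurations $\vec X_N$ lying simultaneously in $\{\bEmp_N[\vec X_N]\in B(\bar P,\delta_1)\}$ and in $T_N(\bar P,\delta_2)$, together with a sharp lower bound on the $\QNbeta$-measure of this family. Since $\QNbeta$ is a product measure with one-particle weight $\propto e^{-N^{2/\d}\beta \zeta}$ and $\zeta\equiv 0$ on $\omega\supset\Sigma$, restricting the construction to configurations placed inside $\Sigma$ only costs the normalization; by Lemma~\ref{asymptozeta} this contributes the $-Nc_{\omega,\Sigma}$ term from \eqref{defcomeg}.

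\textbf{Mesoscopic sampling from $\bar P$.} Fix a large scale $R$ and tile an exhausting subset $\Sigma_R\subset \Sigma$ by disjoint microscale-$R$ cubes $K_\alpha$ of sidelength $RN^{-1/\d}$ centered at points $x_\alpha$, in number of order $N/(R^{\d}\meseq(x_\alpha))$. In each $K_\alpha$, I sample a microscopic point configuration according to the disintegration $\bar P^{x_\alpha}$, conditioned on having $n_\alpha := \lfloor R^\d \meseq(x_\alpha)\rfloor$ points (which matches the local background charge up to an $O(1)$ error). Because $\bar P^{x_\alpha}$ is stationary, a volume computation analogous to the proof of Proposition~\ref{SanovbQN} shows that the set of $n_\alpha$-tuples realizing this sampling has Lebesgue measure bounded below by $\exp(-R^\d \,\ERS[\bar P^{x_\alpha}\mid\Poisson^1]+o(R^\d))$ per cube. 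Multiplying over cubes and using a Riemann-sum argument (allowed by continuity of $x\mapsto \ERS[\bar P^x\mid\Poisson^1]$, or more precisely by first approximating $\bar P$ by piecewise-constant-in-$x$ processes) yields the global $\QNbeta$-volume lower bound $\exp(-N\,\bERS[\bar P\mid \Poisson^1]-Nc_{\omega,\Sigma}+o(N))$.

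\textbf{Screening and energy estimate.} The local fields $E_\alpha$ associated to these samples do not a priori glue into a single field compatible with the global background $\meseq$, and this is where the main work lies. I apply the screening procedure of \cite{ss1,ss2,rs,ps}: near $\partial K_\alpha$ one modifies $E_\alpha$ (displacing at most $o(R^{\d})$ points per cube and changing its truncated energy by at most $o(R^\d)$) so that the normal component vanishes on the boundary and the new configuration has exactly $n_\alpha$ points. The screened fields then extend by $0$ across boundaries into a global $E$ satisfying $-\div E = \cd(\C - \meseq\drd)$, hence compatible with the constructed $\vec X_N$ and background $\meseq$. Writing the energy as a sum over cubes, applying Lemma~\ref{lem:monoto} with $\eta\ll R^{-1}$, and using that on a typical sample $\mathcal{W}_\eta(E_\alpha,\meseq(x_\alpha))$ averages (by stationarity of $\bar P^{x_\alpha}$) to $\Esp_{\bar P^{x_\alpha}}[\mc W(\cdot,\meseq(x_\alpha))]+o_R(1)$, one obtains in the limit $N\to\infty$, then $R\to\infty$, then $\eta\to 0$
\begin{equation*}
N^{(1-\tfrac{2}{\d})_+} F_N^{\meseq}(\vec X_N) + \Big(\tfrac{N}{\d}\log N\Big)\indic_{\LogU,\LogD} \;\leq\; \int_\Sigma \Esp_{\bar P^x}[\mc{W}(\cdot,\meseq(x))]\,dx + o(N) = \bttW(\bar P,\meseq)+o(N),
\end{equation*}
placing these $\vec X_N$ in $T_N(\bar P,\delta_2)$ for $N$ large.

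\textbf{Main obstacle and conclusion.} The hardest step is clearly the screening: it must simultaneously be cheap in energy and cheap in phase-space volume, so that after modifying $E_\alpha$ near $\partial K_\alpha$ we do not spoil either the estimate $\mc W_\eta(E_\alpha,\meseq(x_\alpha))\lesssim \Esp_{\bar P^{x_\alpha}}[\mc W(\cdot,\meseq(x_\alpha))]+o_R(1)$ or the Sanov volume lower bound (since only an $o(R^\d)$ fraction of positions is rigidified). This is where the regularity hypotheses on $\partial \Sigma$ and on $\meseq$ enter, through the linearization of the background on each $K_\alpha$ and the construction of a boundary screening shell. Once the screening is in hand, combining the energy bound with the Sanov-type volume bound and letting $R\to\infty$ yields, for any $\delta_1,\delta_2>0$ and all $N$ large enough,
\begin{equation*}
\QNbeta\!\left(\{\bEmp_N\in B(\bar P,\delta_1)\}\cap T_N(\bar P,\delta_2)\right) \;\geq\; \exp\!\left(-N\,\bERS[\bar P\mid\Poisson^1]-Nc_{\omega,\Sigma}+o(N)\right),
\end{equation*}
which, after taking $\log$ and $\liminf$, is exactly \eqref{quasicontinu}.
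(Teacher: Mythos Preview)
Your plan follows the paper's own strategy almost exactly: partition $\Sigma'$ into mesoscopic cubes, sample microscopic configurations whose averaged empirical field is close to $\bar P$ (with volume governed by Proposition~\ref{SanovbQN}), screen each cube so that the local electric fields can be pasted into a global compatible field, and then use Lemma~\ref{minilocale} together with additivity over cubes to bound $F_N^{\meseq}$ from above by $\bttW(\bar P,\meseq)+o(1)$.

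There is, however, one genuine ingredient missing from your plan. You take the limits in the order $N\to\infty$, $R\to\infty$, $\eta\to 0$, and you need the last one to be uniform over the configurations you have built; equivalently, you need the gap between the $\eta$-truncated energy and the untruncated energy to vanish \emph{uniformly} as $\eta\to 0$. This fails for configurations with arbitrarily close pairs of points, and a sample from $\bar P^{x_\alpha}$ (or from a Poisson process) has no a priori separation. The paper addresses this with an additional step it calls \emph{regularization}: before screening, one moves apart any pair of points that is closer than a small threshold, in such a way that (i) the truncated energy does not increase, (ii) only an $o(R^\d)$ fraction of points is displaced so the phase-space volume loss is $o(N)$, and (iii) the empirical field stays in $B(\bar P,\delta_1)$. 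Without this step your final inequality $\mc W_\eta \to \mc W$ is not justified, and the membership in $T_N(\bar P,\delta_2)$ does not follow. Once you insert regularization alongside screening, your argument matches the paper's.
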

Let us compare with Proposition \ref{SanovbQN}. The first inequality of \eqref{EqSB} implies (taking $A = B(\bPst, \delta_1)$ and using the definition of $\bQpN$ as the push-forward of $\QNbeta$ by $\bEmp_N$) that
$$
\liminf_{N \ti} \frac{1}{N} \log \QNbeta \left( \{\bEmp_N \in B(\bPst, \delta_1)\} \right) \ge  - \bERS[\bP|\Poisson^1]  -\comeg.
$$
To obtain Proposition \ref{quasicontinuite}, which is the hard part of the proof, we thus need to show  that the event  $T_N(\bP, \delta_2)$ has  enough volume in phase-space near $\bPst$. This relies on the screening construction which we now  describe.

\subsection{Screening and consequences}

The screening procedure  allows, starting from a given configuration or set of configurations,  to build configurations whose energy is well controlled and can be computed additively in blocks of large microscopic size.  

Let us zoom everything by $N^{1/\d}$ and work in the rescaled coincidence set $\Sigma'= N^{1/\d} \Sigma$.  It will be split into cubes $K$ of size $R$ with $R$ independent of $N$ but large.  This can be done except in a thin boundary layer which needs to be treated separately (for simplicity we will not discuss this).

In a cube $K$  (or in $K \times \R$ in the \LogU\  case), the screening procedure  takes as input a given  configuration whose electric energy in the cube is not too large, and replaces it by a better configuration which is neutral (the number  of  points is equal to $\int_{K}{ \meseq}'$),
 and for which there is a compatible electric field in $K$ whose energy is controlled by the initial one  plus a negligible error as $R \to \infty$.  The point configuration is itself only modified in a thin layer near $\partial K$, and the electric field is constructed  
 in such a way as to obtain $E\cdot\vec\nu=0$ on $\pa  K$. The configuration is then said to be screened, because in some sense it then generates only a negligible electric field outside of $K$.
 

Starting from a given vector field, we not only construct a modified, screened, electric field, but a whole family of electric fields and configurations. This is important for the LDP lower bound where we need to exhibit a large enough volume of configurations having well-controlled energy.

Let us now get into more detail.

\subsubsection{Compatible and screened electric fields} \label{sec:notationscreen}
 Let  $K$ be a compact subset of $\R^{\d}$ with piecewise $C^1$ boundary (typically, $K$ will be an hyperrectangle or the support $\Sigma$), $\C$ be a finite point configuration in $K$, $\mu$ be a nonnegative density in $L^{\infty}(K)$, and $E$ be a vector field. 
\begin{itemize}
\item We say that $E$ is compatible with $(\C, \mu)$ in $K$  provided 
\begin{equation}
\label{def:compatible}
-\div  E= \cd \left(\mc{C} - \mu \right) \quad \text{in } K .
\end{equation}
\item We say that \textit{$E$ is compatible with $(\C, \mu)$ and screened in $K$}
when \eqref{def:compatible} holds and moreover
\begin{equation} \label{lescreening}
 E \cdot \vec{\nu} = 0   \quad \text{on } \partial  K, 
\end{equation}
where $\vec{\nu}$ is the outer unit normal vector. 
\end{itemize}
We note that by integrating \eqref{def:compatible} over $K$  and using \eqref{lescreening}, we find that $\int_K \C = \int_K \mu$. In other words, screened electric fields are necessarily associated with {\it neutral} systems, in which the positive charge $\int_K \C$ is equal to the negative charge $\int_K \mu$. In particular $\int_K \mu$ has to  be an integer. 

\subsubsection{Computing additively the energy}

Vector fields that are screened can be naturally ``pasted" together: 
 we may construct point configurations by elementary blocks (hyperrectangles) and compute their energy additively in these blocks.  More precisely, assume that space is partitioned into a family of hyperrectangles $K\in \mathcal K$ such that $\int_K d \meseq'$ are integers. We would like to  construct a vector field $\EK$ in each $K$ such that 
\begin{equation}
\label{eqcellfond}
\left\lbrace \begin{array}{ll}
 -\div \EK 
= \cd \left( \mc{C}_K - \meseq'\right) & \text{in} \ K \\
 \EK \cdot \vec{\nu} = 0  & \text{on} \  \partial  K \end{array}\right.
\end{equation} (where $\vec{\nu}$ is the outer unit normal to $K$)
for some discrete set of points $\mc{C}_K \subset K $, and with 
\begin{equation*}
\int_{K}  |\EK_\eta|^2 
\end{equation*}
 well-controlled.

When solving \eqref{eqcellfond}, we could take $\EK$ to be a gradient, but we do not require it, in fact  relaxing this constraint is what allows to paste together the vector fields. 
Once the relations \eqref{eqcellfond} are satisfied  on each $K$, we may paste together the vector fields $\EK$ into a unique vector field $\Etot$, and the discrete sets of points $\mc{C}_K$ into a configuration $\Ctot$. The cardinality of $\Ctot$ will be equal to $\int_{\R^{\d}} d{\meseq}'$, which is exactly $N$.

We may thus obtain a configuration of $N$ points, whose energy we may try to evaluate. The important fact is that the enforcement of the boundary condition $\EK\cdot\vec{ \nu}=0$ on each boundary ensures that 
\begin{equation}\label{deve}
-\div  \Etot= \cd \left(\Ctot - {\meseq}'\right) \quad \text{in } \R^{\d}
\end{equation}
holds globally. In general, a vector field which is discontinuous across an interface has a distributional divergence concentrated on the interface equal to the jump of the normal derivative, but here by construction the normal components coincide hence there is no divergence created across these interfaces. 

Even if the $\EK$'s were gradients, the global $\Etot$ is in general no longer a gradient.  This does not matter however, since the energy of the \textit{local electric field} $\nab \hpN$ generated by the finite configuration $\Ctot$ (and the background ${\muv}'$) is always smaller, as stated in 
\begin{lem} \label{minilocale} Let $\Omega$ be a compact subset of $\R^{\d}$ with piecewise $C^1$ boundary, let $N \geq 1$, let $\XN$ be in $(\Rd)^N$. We assume that all the points of $\XN$ belong to $K$ and that $ \Omega\subset \Sigma$. 
Let $E$ be a vector field  such that 
  \begin{equation}\label{checr}
\left\lbrace \begin{array}{ll}
 -\div  E
= \cd \left( \sum_{i=1}^N \delta_{x'_i} - {\meseq}' \right) & \text{in} \ \Omega \\
 E \cdot \vec{\nu} = 0  & \text{on} \  \partial  \Omega.
 \end{array}\right.
\end{equation} Then, for any $\eta \in (0,1)$ we have
\begin{equation}\label{comparloc}
\int_{\R^{\d}}  |\nabla H_{N,{\eta}}'|^2 \leq \int_{K}  |E_{\eta}|^2.
\end{equation}
\end{lem}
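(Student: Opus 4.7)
The plan is to invoke the classical Dirichlet-type minimization principle: among all vector fields on $\R^\d$ with a prescribed divergence and suitable decay at infinity, the gradient one minimizes the $L^2$ norm. Since $\nabla H'_{N,\eta}$ is precisely the gradient solution associated to the source $\cd(\sum_i \delta_{x_i'}^{(\eta)} - {\meseq}')$ on all of $\R^\d$ (in view of the Coulombic relation $-\Delta H'_{N,\eta} = \cd(\sum_i \delta_{x_i'}^{(\eta)} - {\meseq}')$, exactly as in \eqref{divee}), it will suffice to produce a divergence-matching competitor supported on $\Omega$.

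Accordingly, I would first extend $E_\eta$ by zero outside $\Omega$, calling the extension $\tilde E$. The key point is that $\tilde E$ develops no spurious singular divergence along $\partial\Omega$: since $E \cdot \vec\nu = 0$ on $\partial\Omega$, and (for $\eta$ small enough, or under the implicit screening assumption that each $B(x_i',\eta)$ lies in the interior of $\Omega$) the truncation correction $\sum_i \nabla \f_\eta(\cdot-x_i')$ is compactly supported strictly inside $\Omega$, we also have $E_\eta \cdot \vec\nu = 0$ on $\partial\Omega$. Therefore
\[
-\div \tilde E \;=\; \cd \Big(\sum_{i=1}^N \delta_{x_i'}^{(\eta)} - {\meseq}'\Big) \qquad \text{on all of } \R^\d,
\]
which is exactly the equation satisfied by $-\Delta H'_{N,\eta}$.

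Consequently the difference $F := \tilde E - \nabla H'_{N,\eta}$ is divergence-free globally on $\R^\d$. Expanding $|\tilde E|^2 = |\nabla H'_{N,\eta}|^2 + 2\,\nabla H'_{N,\eta}\cdot F + |F|^2$ and integrating, the cross term vanishes by integration by parts:
\[
\int_{\R^\d} \nabla H'_{N,\eta} \cdot F \;=\; \lim_{R\to\infty} \int_{\partial B_R} H'_{N,\eta}\, F\cdot\vec\nu \;-\; \int_{\R^\d} H'_{N,\eta}\,\div F \;=\; 0,
\]
where the boundary contribution at infinity vanishes thanks to the global neutrality $\int(\sum_i \delta_{x_i'}^{(\eta)} - {\meseq}') = 0$ and the resulting multipole decay of $H'_{N,\eta}$ like $|x|^{1-\d}$, together with the companion decay of $F$ like $|x|^{-\d}$. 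Hence
\[
\int_\Omega |E_\eta|^2 \;=\; \int_{\R^\d} |\tilde E|^2 \;=\; \int_{\R^\d}|\nabla H'_{N,\eta}|^2 + \int_{\R^\d}|F|^2 \;\geq\; \int_{\R^\d}|\nabla H'_{N,\eta}|^2,
\]
which is the stated bound.

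The main technical point is the rigorous justification that the boundary integral over $\partial B_R$ vanishes in the limit, which uses the dipole-type decay estimates of $H'_{N,\eta}$ and $\nabla H'_{N,\eta}$ already invoked in the proof of Lemma \ref{lem:monoto}. A secondary issue is the handling of truncation at points $x_i'$ too close to $\partial\Omega$: in that regime either one first reduces $\eta$ (the inequality being preserved by monotone passage to the limit as $\eta \to 0$ on the right-hand side is not needed here, since the inequality holds pointwise in $\eta$), or one replaces the brutal extension by zero by an extension that is the gradient of a harmonic function outside $\Omega$ with matching Neumann data, absorbing the excess boundary flux generated by $\sum_i \nabla \f_\eta(\cdot-x_i')$; the orthogonality computation still applies and yields the same conclusion.
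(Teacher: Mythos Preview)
Your proof is correct and is precisely the detailed unpacking of the paper's one-line argument, which simply states that $\nabla H'_{N,\eta}$ is the $L^2$ projection of any compatible (extended) $E_\eta$ onto gradients, hence of minimal $L^2$ norm. The extension by zero, the identification of $F$ as divergence-free, and the orthogonality via integration by parts are exactly what ``projection onto gradients'' means in this setting.
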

\begin{proof} The point is that $\nab {H_{N,\eta}^{\meseq}}'$ is the $L^2$ projection of any compatible $E$ onto gradients, and that the projection decreases the $L^2$ norm. The truncation procedure can be checked not to affect that property. 

\end{proof}

We thus have 
\begin{equation*}
\int_{\R^{\d}}  |\nab \hpNe|^2 \le \sum_{K\in \mathcal{K}} \int_{K} |\EK_\eta|^2,
\end{equation*}
and the energy $F_N^{\meseq}(\Ctot)$ can indeed be computed  additively over the cells (at least for an upper bound, which is precisely what we care about at this stage). This is the core of the method: by passing to the electric representation, we transform a sum of pairwise interactions into an additive (extensive) energy.

 The screening theorem states that this program can be achieved while adding only a negligible error to the energy, modifying only a small fraction of the points, and preserving the good separation. 
 It also allows, when starting from a family of configurations,  to build families of configurations  which occupy 
enough volume in configuration space (i.e.  losing only a negligible logarithmic volume). 


\subsection{Generating microstates and conclusion}
To prove Proposition \ref{quasicontinuite}, we generate configurations ``at random" whose volume is evaluated by Proposition \ref{SanovbQN}, and then modify them via the screening procedure.

  Since we want the global configurations to approximate (after averaging over translations) a given tagged point process $\bar{P}$, we  draw the point configuration in each hypercube $K\in \mathcal K$ jointly at random according to a Poisson point process, and standard large deviations results imply that enough of the averages ressemble $\bar{P}$.
Using Proposition \ref{SanovbQN}  they will still 
occupy a volume  bounded below by $ - \bERS[\bPst_{\um}|\Poisson^1]- \comeg $.

 These configurations then need to be modified by the screening procedure, while checking that these modifications do not alter much their phase-space volume, their energy, and keep them close to the given tagged process $\bar{P}$.
They also need to go through another modification, which we call {\it regularization} and which separates pairs of points that are too close (while decreasing the energy and not altering significantly the volume). This is meant to ensure that the error between $\bttW_\eta$ and $\bttW$ can be made small as $\eta \to 0$ uniformly among the configurations.

\end{document}